\newcommand{\Debug}{0}
\newcommand{\keywords}[1]{\vskip2ex\par\noindent\normalfont{\bfseries Keywords: } #1}
\newcommand{\geo}[1]{\operatorname{Geo}\!\left(#1\right)}
\newcommand{\expy}[1]{\operatorname{Exp}\!\left(#1\right)}
\newcommand{\bin}[2]{\operatorname{Bin}\!\left( #1,#2 \right)}
\newcommand{\Erl}[2]{\operatorname{Gamma}\!\left( #1,#2 \right)}
\newtheorem{thm}{Theorem}  
\newtheorem{theorem}{Theorem} 
\newtheorem{lem}[thm]{Lemma}
\newtheorem{clm}[thm]{Claim}
\newtheorem{cor}[thm]{Corollary}
\newtheorem{rem}[thm]{Remark}
\newtheorem{pro}[thm]{Proposition}
\newtheorem{con}[thm]{Conjecture}
\newtheorem{pbl}{Open Problem}
\newcommand{\remove}[1]{}
\newcommand{\n}{1}
\renewcommand{\deg}{\operatorname{deg}}
\newcommand{\mix}{\operatorname{mix}}
\newcommand{\Pro}[1]{\mathbf{Pr} \left[\,#1\,\right]}
\newcommand{\Ex}[1]{\mathbf{E} \left[\,#1\,\right]}
\renewcommand{\tilde}{\widetilde}
\renewcommand{\epsilon}{\varepsilon}
\numberwithin{thm}{section}
\newcommand{\pts}{\operatorname{\mathbf{PtS}}}
\newcommand{\stp}{\operatorname{\mathbf{StP}}}
\newcommand{\CP}{\operatorname{\mathbf{CP}}}
\newcommand{\ptu}{\operatorname{\mathbf{PtU}}}
\newcommand{\BO}[1]{\mathcal{O}\!\left(#1\right)} 
\newcommand{\lo}[1]{o\!\left(#1\right)} 
\newcommand{\BOhm}[1]{\Omega\!\left(#1\right)} 
\newcommand{\lohm}[1]{\omega\!\left(#1\right)} 
\newcommand{\BT}[1]{\Theta\!\left(#1\right)} 
\title{The Dispersion Time of Random Walks on Finite Graphs\footnote{An extended abstract version of this paper appeared in 
		\newblock {\em The 31st ACM Symposium on Parallelism in Algorithms and
			Architectures}, SPAA '19, pages 103--113, New York, NY, USA, 2019. ACM. \cite{Riv}}} 
\author[1]{Nicol\'{a}s Rivera}
\author[2]{Alexandre Stauffer}
\author[1]{Thomas Sauerwald}
\author[1]{John Sylvester}
\affil[1]{\small University of Cambridge, Cambridge, United Kingdom\\
	\texttt{firstname.lastname@cl.cam.ac.uk}}
\affil[2]{\small University of Bath, Bath, United Kingdom\\
	\texttt{a.stauffer@bath.ac.uk}}
	\date{\vspace{-5ex}}
\begin{document}
	\maketitle
	\begin{abstract} 
		We study two random processes on an $n$-vertex graph inspired by the internal diffusion limited aggregation (IDLA) model. In both processes $n$ particles start from an arbitrary but fixed origin. Each particle performs a simple random walk until first encountering an unoccupied vertex, and at which point the vertex becomes occupied and the random walk terminates. In one of the processes, called \textit{Sequential-IDLA}, only one particle moves until settling and only then does the next particle start whereas in the second process, called \textit{Parallel-IDLA}, all unsettled particles move simultaneously. Our main goal is to analyze the so-called dispersion time of these processes, which is the maximum number of steps performed by any of the $n$ particles. 
		
		In order to compare the two processes, we develop a coupling which shows the dispersion time of the Parallel-IDLA stochastically dominates that of the Sequential-IDLA; however, the total number of steps performed by all particles has the same distribution in both processes. This coupling also gives us that dispersion time of Parallel-IDLA is bounded in expectation by dispersion time of the Sequential-IDLA up to a multiplicative $\log n$ factor.  Moreover, we derive asymptotic upper and lower bound on the dispersion time for several graph classes, such as cliques, cycles, binary trees, $d$-dimensional grids, hypercubes and expanders. Most of our bounds are tight up to a multiplicative constant.
	\end{abstract}
	\keywords{Random Walks, Internal Diffusion Limited Aggregation, Dispersion. }
	
	\section{Introduction}

	The internal diffusion limited aggregation (IDLA) model, first introduced independently by Diaconis \& Fulton \cite{DiaFul} and Meakin \& Deutch \cite{MeDe}, is a protocol for recursively building a randomly growing subset (aggregate) of vertices of a graph. Initially, the aggregate consists of only one vertex, denoted as the \emph{origin}, and we let a particle be settled at that vertex. Then, at each step, we start a new particle from the origin and let it perform a random walk until it visits a vertex not contained in the aggregate. At this point, we say that the new particle settles at that vertex, and the vertex is added to the aggregate. We then add a new particle at the origin, and iterate this procedure over and over again. 
	
	IDLA was introduced on the infinite lattice $\mathbb{Z}^d$. Here we consider a finite connected $n$-vertex graph $G$. Note that after $n$ particles have settled, the aggregate occupies the whole of $G$. During this time, each particle performed some number of random walk steps before it settles. Clearly, this number depends on the geometry of the aggregate when the particle started moving. We define the \emph{dispersion time} as the largest number of random walk steps performed by any one of the $n$ particles before reaching an unoccupied vertex. 
	
	We refer to the above protocol as \emph{Sequential-IDLA}, in allusion to the fact that a particle cannot begin to move until the one before it settles. However, alternative scheduling protocols could be defined, in the sense that we could choose to add and move a new particle from the origin before the previous one has settled. In this way, there could be several unsettled particles moving at the same time, but they must abide by the rule that whenever an unoccupied vertex is visited, one particle must settle there. We call any process of this sort a \emph{dispersion process}. We are interested in understanding the affect of different scheduling protocols on the dispersion time. In particular consider the following protocol: start all $n$ particles from the origin at time $0$ (thus one of them will instantaneously settle at the origin). Then, all particles perform one random walk step simultaneously; if one or more particles jump to an unoccupied vertex, then one such particle settles there. Iterate this procedure until all particles have settled. We called this protocol the \emph{Parallel-IDLA}.
	
	Both dispersion processes can be regarded as a set of simple local protocols for resource allocation. Specifically, the sequential dispersion process is quite similar to a local-search based reallocation scheme from \cite{LocSearch}, where a job continues to reallocate itself to a neighbour with less load until it has found a local minimum. Furthermore, the parallel dispersion process is related to the ``QoS Load Balancing'' model~\cite{AFHS11}, a particular instance of selfish load balancing (see also \cite{B07,BHS14} for similar protocols). In the QoS model, tasks perform random walks in parallel and terminate only if they have found a resource on which the estimated processing time is acceptable according to some agent-specific threshold. Our dispersion processes can be also viewed as a spatial coordination game, where the goal is to achieve a state in which players are all making {\em distinct} choices. As mentioned in \cite{alpern2002spatial}, such games serve as a model for the dynamics in location games or habitat selection of species.
	
	Recall that the dispersion time is the maximum number of steps taken by any of the $n$ particles in either IDLA process. For the complete graph $K_n$ the Sequential-IDLA process has essentially the same dynamics as the famous coupon collector process and the dispersion time corresponds to the longest wait between collecting successive (new) coupons. Thus the discrepancy between the Sequential and Parallel dispersion times for $K_n$ measures the effect of parallelising the coupon collector process on the longest time between coupons. This motivates the study of dispersion time on different networks which we can view as a generalization of the coupon collector process. In the general setting we address the question: what is the cost of parallelising the IDLA process?  Addressing this question requires us to determine or at least estimate the Parallel {\em and} Sequential dispersion times.
	
	The total time taken by all walks, as opposed to the longest walk, is also natural to study for these models. Returning briefly to the complete graph we see that the sum of the walk lengths in the Sequential-IDLA corresponds to the time to collect all coupons - this is what is typically studied for the coupon collector. Our couplings show that for any fixed graph the sum of all walk lengths, later denoted by $W$, is the same for Parallel and Sequential IDLA. From one perspective this motivates the study of $W$ for general graphs, this is work in progress by the authors. However, in this paper we are more interested in the discrepancies between the Sequential and Parallel processes, some of which are captured by the dispersion time.  
	
	Since in IDLA particles perform random walks, both dispersion processes can be regarded as a protocol for exploring and covering an unknown network. However, as opposed to previously studied models of covering a graph with multiple random walks \cite{AlonMany,avin,CoopFrezMulti}, the length of the particles' trajectories may vary wildly in the dispersion process. This introduces strong correlations between different particles, a challenge which is not present in the cover time of multiple random walks.
	
	\subsection{Our Contributions}
	Let $\tau^v_{seq}(G)$ and $\tau^v_{par}(G)$ to denote the dispersion time of Sequential-IDLA and Parallel-IDLA on $G$ with origin $v$, respectively. The key question is how are $\tau^v_{seq}(G)$ and $\tau^v_{par}(G)$ related and is there an ordering between them. We answer this question by developing a coupling, based on ``cutting \& pasting'' particle trajectories, which we use to show the following result below. 
	\begin{theorem}[see Theorems~\ref{PStocS} and~\ref{couplog}]
		For any connected $n$-vertex graph $G$ and $v \in V(G)$, 
		$$ \tau_{seq}^v(G) \preceq \tau_{par}^v(G).$$
		Further, $$\Ex{\tau^v_{par}(G)}= \BO{\Ex{\tau^v_{seq}(G)}\cdot \log n}.$$
	\end{theorem}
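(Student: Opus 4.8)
The starting point for both inequalities is elementary: in Parallel-IDLA \emph{every} unsettled particle moves in \emph{every} round, so a particle that settles in round $t$ has taken exactly $t$ steps. Consequently $\tau_{par}^v(G)$ equals the total number of rounds $R$, and the last particle to settle realises the dispersion time. Writing $m_t$ for the number of unsettled particles at the start of round $t$, the total number of steps made by all particles equals $\sum_{t=1}^{R} m_t=\sum_{p}T_{(p)}$, where $T_{(p)}$ is the settling round of particle $p$. I would first record that if both processes are driven by the \emph{same} randomness—a fixed i.i.d.\ stack of uniform neighbours at each vertex, popped whenever a particle steps from that vertex—then the abelian property of IDLA makes the final aggregate and the per-vertex odometer identical in the two processes; since the total number of steps is just the sum of the odometer, this yields the ``same total'' statement from the abstract and will be the backbone of the coupling.

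\textbf{Stochastic domination.} For $\tau_{seq}^v\preceq\tau_{par}^v$ the plan is to arrange the cut-and-paste coupling so that, on shared randomness, each sequential trajectory appears as a contiguous sub-walk of a parallel trajectory, which forces $\tau_{seq}^v\le\tau_{par}^v$ pointwise and hence the domination. I would process the two runs settling-event by settling-event, maintaining the invariant that both have occupied the same vertex set; whenever the parallel schedule settles a particle ``ahead'' of where the sequential schedule would, I cut the sequential walk at that vertex and graft the remainder onto a later particle. The main obstacle in this part is to perform the surgery so that the \emph{marginal} law of each rebuilt process is exactly the corresponding IDLA. This is precisely why one may \emph{not} simply relabel the parallel trajectories as sequential ones (that would spuriously force equality of the two dispersion times): the interleaving in the parallel run introduces correlations that the coupling must carefully undo.

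\textbf{The $\log n$ factor.} For the second inequality I would bound $\Ex{R}$ by telescoping over settling events. The key structural point is that between the $(i-1)$-st and $i$-th settlings the aggregate, and hence the free set $F$, is \emph{constant}, so the $m_i$ unsettled particles perform mutually \emph{independent} free random walks until the first of them reaches $F$; thus
\begin{equation*}
\Ex{R}=\sum_{i}\Ex{T_{(i)}-T_{(i-1)}},\qquad T_{(i)}-T_{(i-1)}=\min_{j\le m_i}\big(\text{hitting time of }F\text{ by particle }j\big).
\end{equation*}
Grouping the $m_i$ dyadically into $\BO{\log n}$ stages, the goal becomes: a stage that halves the unsettled count from $k$ to $k/2$ costs $\BO{\Ex{\tau_{seq}^v}}$ rounds in expectation. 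Throughout such a stage there are at least $k/2$ free targets, so a single particle hits the free set within the maximal $v$-to-free-set expected hitting time; a first-moment argument over the $k$ particles should then show that a constant fraction settle within a constant multiple of that hitting time, giving $\BO{\Ex{\tau_{seq}^v}}$ per stage and $\BO{\log n\cdot\Ex{\tau_{seq}^v}}$ in total.

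\textbf{Main obstacle.} The crux is the final comparison: bounding the per-stage hitting cost by $\Ex{\tau_{seq}^v}$ \emph{uniformly over all stages and all graphs}. This reduces to showing that the maximum, over aggregates $A$ actually arising, of the expected hitting time from $v$ to $V\setminus A$ is $\BO{\Ex{\tau_{seq}^v}}$; the danger case is a graph in which a small free set consists entirely of far-away vertices while the sequential dispersion time is comparatively small. I expect to resolve this by exploiting the structure of IDLA growth—the hardest-to-reach free sets are exactly those realised in the \emph{final} stages of the sequential process, so their hitting times are already charged to $\tau_{seq}^v$—combined with the within-interval independence above to upgrade ``one particle settles in expected time $h$'' to ``a constant fraction of $k$ particles settle in time $\BO{h}$''. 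Controlling the particle interaction inside a stage (settlings remove targets, which only helps end the stage but complicates the first-moment bound) and verifying that the minimum over independent hitting times delivers the claimed speed-up are the remaining technical points.
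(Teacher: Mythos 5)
Your plan for $\tau_{seq}^v(G) \preceq \tau_{par}^v(G)$ rests on two invariants that you never establish and that are in fact not available: that the two runs can be coupled so as to ``occupy the same vertex set'' settling-event by settling-event, and that each sequential trajectory ends up as a contiguous sub-walk of a parallel trajectory. Under the stack (Diaconis--Fulton) coupling you invoke, only the \emph{final} aggregate and the odometer agree; the intermediate occupied sets of the two schedules differ in general, so your invariant breaks at the first settling event on which the schedules disagree. You yourself flag that performing the surgery so that both marginal laws remain exact IDLA is ``the main obstacle'' --- but that obstacle \emph{is} the content of the theorem, and it is left unresolved. The paper's proof is structured quite differently: it does not couple runs event-by-event at all, but builds a bijection $\stp$ between \emph{complete} sequential and parallel histories of the same total length (Lemma~\ref{cor1}), under which trajectories are cut and re-concatenated (so sequential walks are generally \emph{not} sub-walks of parallel ones), and then proves the deterministic Lemma~\ref{ordering} that this bijection can never shorten the longest row; stochastic domination follows by conditioning on the total length, which the bijection preserves. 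Your observation via the abelian property that the total number of steps agrees is fine, but it delivers neither inequality.

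For $\Ex{\tau_{par}^v(G)} \leq \BO{\log(n) \cdot \Ex{\tau_{seq}^v(G)}}$ your reduction is to show that each of the $\BO{\log n}$ dyadic stages of the parallel process costs $\BO{\Ex{\tau_{seq}^v(G)}}$, which you in turn reduce to the claim that hitting times of the free sets \emph{realised by the parallel process} (from the particles' current positions --- not from $v$, which your write-up conflates) are ``already charged to'' $\tau_{seq}^v$ because the hardest free sets are those arising at the end of the sequential process. This claim is unproven and is precisely the sticking point: the intermediate free sets of the parallel and sequential processes have different laws, and nothing in your outline relates them; $\Ex{\tau_{seq}^v(G)}$ only dominates expected hitting times, from $v$, of the \emph{sequential} process's own (random) free sets, not worst-case or parallel-realised ones. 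Your first-moment step also ignores settle conflicts: $k$ particles each hitting the free set within time $\BO{h}$ does not imply a constant fraction settle, since many may hit the same free vertex and all but one must continue (the paper's Theorem~\ref{partialparupper} needs a union bound over all $2^k$ large subsets of the free set exactly to handle this). The paper's actual argument for the $\log n$ factor avoids hitting times entirely: it runs $\pts$ on a parallel history whose rows are read in uniformly random order, tracks the marked cell terminating the longest row, and shows by a halving argument on the random index of each recipient row that this cell is cut only $\BO{\log n}$ times in expectation; a longest parallel row of length $l$ cut $X$ times leaves a sequential row of length at least $l/X$, which gives the bound. Without an argument of this type, or a genuinely new lower bound on $t_{seq}$ in terms of hitting times of parallel-realised free sets, your outline does not close.
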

	If instead we count the total number of jumps performed by all particles, then this quantity has the same distribution in both processes. Our work leaves whether $\Ex{\tau_{par}^v(G)}=\BO{\Ex{\tau_{seq}^v(G)}}$ as an open question. Note however, that Theorem \ref{complete} demonstrates that already for the clique, the Parallel-IDLA is about 30 percent slower than the Sequential-IDLA. Thus, we cannot have equality between the two processes, even though the path is an example where both processes have the same dispersion time up to lower order terms, see Theorem \ref{PathConstSame}. 
	
	In Section \ref{coupling} we introduce the continuous-time Uniform-IDLA (CTU-IDLA), a variant of the Parallel-IDLA where each particle moves at times given by its own exponential rate $1$ clock until it settles. Denote its dispersion time by $\tau_{c-unif}^v(G)$ and let $\tau_{c-seq}^v(G) $ be the dispersion time of the Sequential-IDLA run with continuous random walks. We also consider running the parallel and sequential processes with lazy walks and let $ \tau_{L-par}^v(G), \tau_{L-seq}^v(G) $ denote their dispersion times. 
	\begin{theorem}[see Theorems~\ref{lazy2normal}, \ref{conttimepar} and \ref{c-seq}] For any connected $n$-vertex graph $G$ and $v \in V(G)$, 
		\begin{align*}
		\tau_{c-unif}^v(G)&=\BT{\tau_{par}^v(G)}, \qquad \qquad \phantom{\text{and}}\qquad \qquad \tau_{L-par}^v(G) =\BT{\tau_{par}^v(G)},	\\
		\tau_{c-seq}^v(G) &=\BT{\tau_{seq}^v(G)}\phantom{,,} \qquad \qquad \text{and}\qquad \qquad\tau_{L-seq}^v(G) =\BT{\tau_{seq}^v(G)},\end{align*}hold w.h.p. and in expectation.
	\end{theorem}
	We also consider general scheduling sequences satisfying a natural condition we call ``index-repeating'' which states that if the walks are not allowed to settle and the process continues forever, then no walk will ever stop moving. We can show that greatest number of steps taken by a walk in the IDLA process according to any index-repeating schedule is stochastically dominated by the same quantity in the Parallel process (Theorem \ref{thm:UtP}).  The intuition behind Parallel-IDLA being ``slower'' than Sequential-IDLA is that, due to competition between particles trying to settle concurrently, the lengths of particle trajectories in Parallel-IDLA vary more than in Sequential-IDLA.

	Let $t_{seq}(G)=\max_{v \in V}\Ex{\tau_{seq}^v(G)}$ and $t_{par}(G)= \max_{v \in V}\Ex{\tau_{par}^v(G)}$ be the worst-case expected dispersion times over all possible origins/starting vertices in $V$. Let $t_{hit}(G)$ be the maximum among all vertices $v,w$ of the expected hitting time of a random walk from $v$ to $w$. We derive a basic but useful upper bound on the dispersion time in terms of the hitting time.
	\begin{theorem}[See Theorem~\ref{lem:general}, Corollary~\ref{cor:general}, Theorem~\ref{cycle}, Proposition~\ref{lolli}]\label{theorem:general}
		Let $G$ be any connected graph with $n$ vertices. Then, for any vertices $v\n V$, \[ \Pro{\tau_{par}^v(G) > 8 \cdot t_{hit}(G) \cdot  \log_{2} n}\leq \frac{1}{n^2}\qquad \text{and} \qquad t_{par}(G) = \BO{ t_{hit}(G)\cdot  \log n }.\]
		The same results also hold for $\tau_{seq}$ and $t_{seq}$. These results imply the following worst-case bounds:
		\begin{itemize}\itemsep0pt
			\item For any $n$-vertex graph, $t_{seq}, t_{par}=\BO{n^3\log n}$. 
			\item For any regular $n$-vertex graph, $t_{seq},t_{par}=\BO{n^2\log n }$.
		\end{itemize} 
		Moreover, the Lollipop and the cycle, respectively, are graphs matching the two bounds up to constant factors.
	\end{theorem}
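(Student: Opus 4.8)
The plan is to control the dispersion time by the time a single random walk needs to reach the set of \emph{currently unoccupied} vertices, and then to upgrade this into a high-probability tail bound via a Markov-and-amplification argument followed by a union bound over the $n$ particles. The one structural fact driving everything is that whenever a particle is still moving there is at least one unoccupied vertex, so from any position the expected time to reach \emph{some} unoccupied vertex is at most the expected time to reach one \emph{fixed} target, which is at most $t_{hit}(G)$. This is the only point at which the hitting time enters, and it is what makes $t_{hit}(G)$ the governing quantity for both $\tau_{seq}^v$ and $\tau_{par}^v$.

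For Sequential-IDLA this is clean, since only one particle moves at a time: during the walk of the $i$-th particle the aggregate is frozen, so its trajectory length $\tau_i$ is exactly the hitting time of a fixed nonempty set $U_i$, whose mean is at most $t_{hit}(G)$ from every starting vertex. Markov's inequality gives $\Pro{\tau_i > 2\,t_{hit}(G)} \le 1/2$, and iterating over blocks of $2\,t_{hit}(G)$ steps (the strong Markov property applies with the same bound at each block-endpoint, precisely because $t_{hit}$ is a maximum over starting vertices) yields $\Pro{\tau_i > 2\,t_{hit}(G)\cdot m} \le 2^{-m}$. Taking $m = 3\log_2 n$ gives probability at most $n^{-3}$ per particle, and a union bound over the $n$ particles gives $\Pro{\tau_{seq}^v(G) > 6\,t_{hit}(G)\log_2 n} \le n^{-2}$. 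Integrating this geometric tail then yields $t_{seq}(G) = \BO{t_{hit}(G)\log n}$.

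For Parallel-IDLA the same target should hold, and here $\tau_{par}^v(G)$ is exactly the total number of rounds, since every unsettled particle moves once per round and the last to settle realizes the maximum. I would again aim to show that each particle settles within $6\,t_{hit}(G)\log_2 n$ rounds with probability $1-n^{-3}$ and union bound. \textbf{The main obstacle is concurrency.} When a particle first reaches a currently unoccupied vertex it need not settle there, since another particle may claim that vertex in the same round; thus the target a particle chases can shrink \emph{during} its own walk, and the clean domination by a fixed hitting time used in the sequential case breaks down. The resolution I would pursue is a per-window dichotomy: conditioned on the state at the start of any window of $2\,t_{hit}(G)$ rounds in which a particle $p$ is unsettled, the free walk of $p$ reaches the current (nonempty) unoccupied set within the window with probability at least $1/2$; whenever it does, either $p$ settles or the vertex it reached was claimed in the meantime, and in the latter case the unoccupied set has strictly shrunk. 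The technical heart — and the step I expect to be hardest — is to show that the per-window \emph{settling} probability is bounded below by a constant despite these collisions, for instance by fixing a tie-breaking priority and bounding the probability that every unoccupied vertex $p$ visits in the window was already taken by a higher-priority particle; amplifying over $3\log_2 n$ windows then gives the bound on $\tau_{par}^v$ and hence on $t_{par}$.

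Finally, the worst-case estimates follow by inserting standard hitting-time bounds: $t_{hit}(G) = \BO{n^3}$ for arbitrary connected graphs and $t_{hit}(G) = \BO{n^2}$ for regular graphs, giving $t_{seq}, t_{par} = \BO{n^3\log n}$ and $\BO{n^2\log n}$ respectively. For tightness I would appeal to the matching lower bounds established separately (Proposition~\ref{lolli} and Theorem~\ref{cycle}): on the lollipop the clique-to-path hitting time is $\Theta(n^3)$, and the extremal behaviour of the maximum over the $n$ particle trajectories supplies the extra $\log n$, yielding $\Omega(n^3\log n)$; the cycle is regular with $t_{hit}(G) = \Theta(n^2)$, and an analogous maximum-over-particles argument gives $\Omega(n^2\log n)$, matching the regular bound up to constant factors.
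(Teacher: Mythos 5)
Your sequential argument is sound: while the $i$-th particle walks, the set of unoccupied vertices is frozen, so Markov's inequality, restarting in blocks of $2\,t_{hit}(G)$ steps, and a union bound over the $n$ particles give exactly the claimed tail and expectation bounds. The genuine gap is where you yourself place it, in the parallel case, and the route you sketch does not close it. Your dichotomy --- ``in each window the particle either settles or the unoccupied set shrinks'' --- only yields that \emph{some} particle settles in each window with probability at least $1/2$; since the unoccupied set can shrink up to $n-1$ times while a fixed particle keeps arriving too late, this gives only $t_{par}(G)=\BO{n\cdot t_{hit}(G)}$, far weaker than $\BO{t_{hit}(G)\log n}$. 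Priority tie-breaking does not rescue the per-particle claim either: even the highest-priority unsettled particle chases a target that shrinks \emph{during} its window, because other particles settle at vertices strictly before it arrives (not merely in the same round), so ``hits the window-start unoccupied set with probability $1/2$'' does not imply ``settles with constant probability,'' and the latter is precisely the unproven step.

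The paper closes this gap with a strengthening you almost had in hand: require each particle's trajectory not merely to hit the unoccupied set but to \emph{cover all of} $V$. Predetermine for every particle an independent random walk of length $6\,t_{hit}(G)\log_2 n$; by the same Markov-amplification you used, applied to each of the $n$ possible target vertices, followed by a union bound over vertices and over particles, all $n$ trajectories cover $V$ simultaneously with the stated polynomially small failure probability. Covering is a property of the trajectory alone and forces settling in \emph{any} dispersion process: if a particle were still unsettled at time $T$, some vertex $u$ would be unoccupied at time $T$ (there are $n$ particles and $n$ vertices); the particle's trajectory visited $u$ at some time $t\leq T$, at which moment $u$ must have been occupied (else the particle would have settled at $u$ or earlier); but occupation is monotone in time, a contradiction. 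This makes the evolution and shrinkage of the unoccupied set irrelevant, handles the concurrency of Parallel-IDLA for free, covers Sequential-IDLA by the same argument (the paper alternatively invokes the domination in Theorem~\ref{PStocS}), and the expectation bound follows by iterating over phases of length $6\,t_{hit}(G)\log_2 n$. The price of covering rather than hitting is exactly the $\log n$ factor the theorem already carries. Your final step --- inserting the worst-case hitting-time bounds of Lov\'asz and citing Proposition~\ref{lolli} and Theorem~\ref{cycle} for tightness --- matches the paper.
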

	In view of the upper bound in Theorem~\ref{theorem:general} and based on the intuition that the last walk in the Sequential-IDLA should have a hard target to hit, one would expect that the worst-case hitting time provides at least an approximate lower bound on the dispersion time. This intuition turns out to be false in general, as evidenced by a certain class of bounded-degree trees (see Proposition~\ref{lowbddcounter}) which exhibits a gap of almost $\sqrt{n}$ between the hitting and dispersion time. We obtain some lower bounds based on the maximum degree $\Delta(G)$ and the $1/4$ total variation mixing time $t_{\mix}$. 
	\begin{theorem}[See Theorems~\ref{lowerbound},~\ref{treebdd} \& \ref{mixlbb}]
		Let $G$ be a connected $n$-vertex graph, then  $t_{seq}(G)=\Omega(|E|/\Delta)$. If $\Delta =\BO{|E|/n }$ then $t_{seq}(G)=\Omega(t_{\mix}) $. For any tree $T$, we have  $t_{seq}(T)= \Omega(n)$. 
	\end{theorem}
	The first two bounds are tight and the third is known to be tight up to a $\log n$ factor. The upper bound in Theorem~\ref{lem:general} matches Matthews bound for the cover time up to constant \cite[Thm. 11.2]{levin2009markov}. While Theorem~\ref{lem:general} is tight for the cycle, it turns out not to be tight for most ``well-connected'' graphs like expanders, high-dimensional grids and hypercubes. Thus as a general rule of thumb for well connected graphs the dispersion time is usually of order $t_{hit}$ and poorly connected graphs it is usually of order $t_{hit}\cdot \log n $. The behaviour of this extra log factor potentially appearing in the dispersion time contrasts with that of the log factor which may appear in the cover time.
	
	Let $ t_{hit}(\pi,S)$ denote the expected hitting time of $S\subseteq V$ by a random walk from stationarity. We provide a general framework for establishing bounds better than $\mathcal{O}(t_{hit} \log n)$ by considering certain sums of hitting times of subsets of decreasing sizes.

	\begin{theorem}[see Theorems~\ref{partialparupper} and ~\ref{sequentialprecise}, and  Corollary \ref{parlam2}]\label{precise}
		For any connected $n$-vertex graph $G$,
		\[
		t_{par}(G) \leq 60 \cdot \sum_{j=1}^{\lceil \log_{2} n \rceil} \left( t_{mix} + \max_{S \subseteq V \colon |S| \geq 2^{j-2}} t_{hit}(\pi,S) \right),
		\]
		where walks in the IDLA process are lazy. Furthermore,
		\[
		t_{seq}(G) \leq 30 \cdot \max_{1 \leq j \leq \lceil\log_2 n \rceil} \left\{ j \cdot \left( t_{mix} + \max_{S \subseteq V \colon |S| \geq 2^{j-2}} t_{hit}(\pi,S))  \right) \right\}.
		\]Consequently for any connected $n$-vertex almost-regular graph, \[t_{par}(G)=\BO{\frac{n}{1-\lambda_2}}  .\]
	\end{theorem}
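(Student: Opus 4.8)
The plan is to partition the evolution of each process into $\lceil \log_2 n\rceil$ \emph{phases}, indexed so that during phase $j$ the number of currently unoccupied vertices (equivalently, the number of still-unsettled particles) lies in the range $(2^{j-1}, 2^j]$; phase $j$ ends once this count has been halved. Throughout phase $j$ the unoccupied set $S$ therefore satisfies $|S| \ge 2^{j-1} \ge 2^{j-2}$, so that any hitting-time estimate needed in that phase may be taken against a set of size at least $2^{j-2}$, which is exactly the restriction appearing in the statement. I would set the fundamental \emph{block length} to $B_j := c\,\big(t_{mix} + \max_{S \colon |S| \ge 2^{j-2}} t_{hit}(\pi,S)\big)$ for a suitable constant $c$, and reduce both bounds to a single-block estimate: over a block of $B_j$ steps a given unsettled particle settles with at least a constant probability, provided the unoccupied set still has size at least $2^{j-2}$.

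This single-block estimate is the common engine. First I would run the walk for $t_{mix}$ steps so that its law is within total variation $1/4$ of the stationary distribution $\pi$; then, by Markov's inequality applied to the hitting time of the target set from $\pi$, the walk hits a set of size at least $2^{j-2}$ within a further $2\max_{|S|\ge 2^{j-2}} t_{hit}(\pi,S)$ steps with probability at least $1/4$. Repeating this mix-then-hit sub-block a constant number of times boosts the per-block settling probability to any fixed constant while keeping the block length of order $t_{mix} + \max_{|S|\ge 2^{j-2}} t_{hit}(\pi,S)$. For the \emph{Sequential} process this is essentially the whole story at the level of a single particle, since only one particle moves at a time, so the target set (the unoccupied set at the moment the particle is launched) is \emph{fixed} throughout that particle's trajectory.

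For $t_{seq}$ I would then combine the per-particle geometric tail with the fact that phase $j$ launches at most $2^j$ particles. Since each particle's trajectory length exceeds $\ell B_j$ with probability at most $(3/4)^\ell$, a union bound over the $\le 2^j$ particles of phase $j$ shows that the \emph{maximum} trajectory length within that phase is, up to the block length, on the scale $j\cdot B_j$; this is the source of the multiplicative $j$ factor. The subtle point is that the theorem asserts a \emph{maximum} over phases rather than a sum: this works because $\max_{1\le j\le \lceil\log_2 n\rceil} j\,B_j$ exceeds the natural scale $j B_j$ of every individual phase by a constant cushion (the constant $30$), so when one integrates the tail of $\max_j\max_{i\in\text{phase }j} L_i$ the contributions of all phases beyond this threshold form a geometrically decaying series of lower order than the threshold itself. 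I would make this precise by writing $\Ex{\max_j M_j} = \int_0^\infty \Pro{\max_j M_j > t}\,dt$ and bounding $\Pro{M_j>t}\le 2^{j+1}2^{-t/B_j}$.

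The genuinely hard part is $t_{par}$, where all unsettled particles move together and the target set is no longer fixed: it shrinks as particles settle, and particles \emph{compete}, since when several land on the same unoccupied vertex only one of them settles. My plan is to bound the expected number $W$ of particles still unsettled at the end of a block and show $\Ex{W} \le m/4$ when the block starts with $m \le 2^j$ unoccupied vertices, so that Markov's inequality gives $\Pro{W > m/2}\le 1/2$ and each phase completes in a constant expected number of blocks; summing the block lengths over phases then yields $t_{par} \le \BO{\sum_j B_j}$, the claimed bound. To control $\Ex{W}$ I would couple the particles to independent free random walks and exploit the deterministic observation that a particle unsettled at the end of the block never hits the final unoccupied set $S_T$ — every vertex of $S_T$ stays unoccupied throughout, so the first particle ever to reach it would settle there. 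The main obstacle is precisely that $S_T$ is random and correlated with the walks, so one cannot naively apply the single-block hitting estimate to it; resolving this — either by a union bound over candidate final sets of size $\ge m/2$ combined with a Chernoff bound on the independent free walks, or by charging settled particles to distinct first-visited vertices and controlling collisions — is where the real work lies. Finally, in both processes I would convert the high-probability and in-expectation block statements into the stated bounds on $t_{par}(G)$ and $t_{seq}(G)$, absorbing the rare events on which a phase overruns by falling back on the crude $\BO{n^3\log n}$ worst-case bound of Theorem~\ref{theorem:general}, whose contribution to the expectation is negligible.
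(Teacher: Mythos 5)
Your phase decomposition, block length, and mix-then-hit estimate are exactly those of the paper, and your sequential argument is the paper's proof of Theorem~\ref{sequentialprecise} in different bookkeeping: the paper takes a single global threshold $\tau = 15\max_j j\bigl(t_{mix}+\max_{|S|\ge 2^{j-2}} t_{hit}(\pi,S)\bigr)$, checks that the bucketed union bound $\sum_j 2^j 2^{-\lfloor \tau/(5(\cdots))\rfloor}$ is at most $1/2$, and iterates in multiples of $\tau$ to get the expectation, which is equivalent to your tail-integral computation. (Your version would produce unspecified constants rather than the stated $30$ and $60$; that is cosmetic.)

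The step you explicitly leave open --- handling the correlation between the random final unoccupied set and the walks in the parallel process --- is resolved in the paper by precisely the first of your two candidate routes, and it does go through. In the proof of Theorem~\ref{partialparupper}, with $k$ unoccupied vertices forming $S$ at the start of phase $j$, the particles are coupled to $k$ independent free walks, and one shows that within a window of $30\bigl(t_{mix}+\max_{|S'|\ge 2^{j-2}}t_{hit}(\pi,S')\bigr)$ steps, with probability at least $1/2$, \emph{every} subset $S'\subseteq S$ with $|S'|\ge k/2$ is hit by at least $k/2$ of the free walks: a fixed walk misses a fixed $S'$ with probability at most $2^{-6}$ (six mix-then-Markov sub-blocks), so at least $k/2$ walks miss $S'$ with probability at most $\binom{k}{k/2}2^{-3k}\le 2^{-2k}$, and a union bound over the at most $2^k$ candidate subsets leaves failure probability $2^{-k}\le 1/2$. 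The deterministic closure is your own observation run in reverse: if at least $k/2$ particles were still unsettled at the end of the window, the still-unoccupied vertices would form a set $S'$ of size at least $k/2$; every free walk that hits $S'$ belongs to a particle that must have settled at or before that hit (vertices of $S'$ remain unoccupied throughout the window), so at least $k/2$ particles have settled --- a contradiction. Note the paper's bookkeeping differs slightly from your $\Ex{W}\le m/4$ plan: instead of bounding the expected number of survivors and applying Markov, it defines the stopping time $\tau_j$ at which all large subsets have been hit by many walks and argues that $\tau_j$ stochastically dominates the phase length, so each phase has a geometric tail in units of the block length and no fallback on the crude $\BO{n^3\log n}$ bound is needed.
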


	Neglecting constant factors, both upper bounds look comparable, however it is not difficult to verify that the upper bound on $t_{seq}$ is at most the upper bound on $t_{par}$, up to constants. Conversely, the gap between the two upper bounds can be shown to be at most $\mathcal{O}(\log n)$. Note that both statements recover the basic $\mathcal{O}(t_{hit} \cdot \log n)$ upper bound, but as soon as there is a sufficient speed-up for hitting times of larger sets (and the mixing time is not too large), these bounds may give a bound of $\mathcal{O}(t_{hit})$. We will see that this is indeed the case for several fundamental classes of graphs in Section \ref{fundamental}, where we apply the previous bounds, and in particular Theorem~\ref{precise}.

	\begin{table}[ht]
		\small
		\label{results table}
		\centering
		\rule{0pt}{4ex}    
		\begin{tabular}{|l|c|c|c|c|c|}
			\hline
			Graph family name  & $\text{Cover time} $ & $\text{Hitting time} $ & $\text{Mixing time}$ &\multicolumn{2}{c|}{$\text{Dispersion time}$} \\\cline{5-6} 
			& $t_{cov}$ & $t_{hit}$ & $t_{mix}$ & $t_{seq}$ & $t_{par}$  \\ \hline \hline \rule{0pt}{2.5ex}    
			path& $n^2$ & $n^2$ & $\mathcal{O}(n^2)$ & \multicolumn{2}{c|}{$\kappa_{p}\cdot  n^2\log n$}  \\ 
			\hline \rule{0pt}{2.5ex}
			cycle & $n^2/2$ & $n^2/2$ & $\mathcal{O}(n^2)$ & \multicolumn{2}{c|}{$\Theta(n^2\log n)$} \\	
			\hline \rule{0pt}{2.5ex}   
			2-dimensional grid & $\Theta(n\log^2 n)$& $\Theta(n\log n)$ &$\Theta (n)$ & $\Omega(n \log n)$ & $\BO{n\log(n)^2}$ \\ \hline \rule{0pt}{2.5ex}
			d-dimensional grid, $d>2$ & $\Theta(n \log n)$ & $ \Theta(n)$ & $\Theta(n^{2/d})$ & \multicolumn{2}{c|}{$\Theta(n)$} \\ \hline\rule{0pt}{2.5ex}
			hypercube & $\Theta(n \log n )$ &$\Theta (n)$ & $\log n \log\log n$ &\multicolumn{2}{c|}{$\Theta(n)$}\\ \hline\rule{0pt}{2.5ex}
			binary tree & $\Theta(n \log n )$ &$\Theta (n\log n)$ & $n$ &\multicolumn{2}{c|}{$\Theta(n\log(n)^2)$} \\ \hline\rule{0pt}{2.5ex}
			complete graph &$\Theta(n \log n) $& $\Theta (n)$ & $1$ & $\kappa_{cc}\cdot n  $&$(\pi^2/6)\cdot n$  \\\hline\rule{0pt}{2.5ex}
			expanders & $\Theta(n \log n)$ & $\Theta (n)$ & $\BO{\log n}$ & \multicolumn{2}{c|}{$\Theta(n)$ }  \\\hline
		\end{tabular}\caption{The last two columns summarize our results, the first three columns are for comparison. The constant $\kappa_{cc}$ above has an explicit formula given by Lemma \ref{bren} and it evaluates to roughly $1.255$, to be contrasted with $\pi^2/6 \approx 1.644$. The constant $\kappa_p$ is a non-explicit, though specified in Section \ref{fundamental}. Simulations run by Nikolaus Howe (student) suggest $\kappa_p\approx 0.6\dots$.}
		\label{tbl:results}
	\end{table}

	In Section \ref{fundamental} we calculate the dispersion times in several fundamental networks. Table \ref{tbl:results} summarises our results and shows that we can determine the expected dispersion times up to multiplicative constant factors in all graphs apart from the $2$-dimensional grid, where there is a discrepancy of order $\log n$ between the lower and upper bounds. This remains an interesting open problem which seems to require very detailed knowledge of the shape of the aggregate on a {\em finite} box/tori. As discussed in Section \ref{relwork} below, this is a non-trivial problem even in the {\em infinite} 2d-grid. For many other graphs we obtain the correct upper bound from one of our results in Section \ref{bounds} and then have to find a matching lower bound by hand. One particularly involved case not captured by our general results is the binary tree, where a tailored analysis reveals a (relatively large) dispersion time of $\Theta(n \log^2 n)=\Theta(t_{hit}\cdot \log n)$, see Theorem~\ref{thm:binarytree}.
	
	\subsection{Techniques Used}
	The first tool we invent to analyse these processes is the Cut \& Paste bijection which maps between the histories of IDLA processes. The bijection allows us to couple the dispersion times of the Parallel-IDLA to those of the Sequential-IDLA and other variants such as Uniform-IDLA (where at each step a random unsettled particle moves), as well as IDLA processes with lazy or continuous-time walks. Bounding dispersion times via these other variants is useful for avoiding issues such as periodicity or simultaneous arrivals at unoccupied vertices. At a base level the stochastic domination of $\tau_{seq}^v$ by $\tau_{par}^v$ means we can sandwich both quantities with a bound on $\tau_{par}^v$ from above and on $\tau_{seq}^v$ from below. Another useful way describe dispersion time is in terms of hitting times of sets by multiple random walks. In particular we present two different upper bounds on $\tau_{par}^v$ and $\tau_{seq}^v $ in terms of hitting times of sets. We also prove a lower bound on $\tau_{seq}^v $ by the mixing time, this comes from the relationship between the mixing time and the hitting time of large sets.
	
	Although the Sequential and Parallel IDLA processes are closely related, the different sources of dependence arising from the contrasting scheduling protocols provide several challenges. In the Sequential-IDLA interaction between the walkers comes via the configuration of vertices settled by the previous walks. This can make proving a tight lower bound on $\tau_{seq}^v$ tricky and often some knowledge of the geometry of the aggregate after a certain time is helpful. What is needed are results reminiscent of the ``shape theorems'' discussed in Section \ref{relwork} below. This requirement for detailed knowledge of the aggregate appears to be crucial in achieving a tight lower bound on $\tau_{seq}^v$ for the binary tree and 2-dimensional grid. In comparison with the Sequential-IDLA interactions are less passive in the Parallel-IDLA as particles jostle to be the first to settle a vertex. This interaction can increase the length of the longest walk as is witnessed by the Cut \& Paste bijection.
	
	\subsection{Related Work}\label{relwork} 
	As pointed out by Diaconis \& Fulton \cite{DiaFul}, there are several mathematical reasons for studying IDLA, including using it to take a product of sets - a special case of the ``smash product''. 
	The limit shape of the aggregate on $\mathbb{Z}^d$ was first studied by Lawler, Bramson and Griffeath \cite{LawBramGriff} who showed that, after adding $n$ particles and properly rescaling the aggregate by $n^{1/d}$, in the limit as $n\to \infty$ this converges to an Euclidean ball. There has been a series of improvements to this ``shape theorem'' of \cite{LawBramGriff}, by bounding the rate of convergence to the euclidean ball. The first refinement was made by Lawler \cite{Law95} and the state of the art was achieved recently by two independent groups of authors \cite{AssGau1,AssGau2,AssGau3,Sheff1,Sheff2}. 
	Several authors have also proved shape theorems on other infinite graphs and groups including combs, $d$-ary trees, non-amenable groups and Bernoulli percolation on $\mathbb{Z}^d$ \cite{HusSav,IDLAgroups,IDLAnonamengraphs,Shel,CopLucCyrAri}. In all of these cases the limit shape is always a ball with respect to the underlying graph metric. Limit shapes in $\mathbb{Z}^d$ for other variants of IDLA have also been established. These variations include using non-standard random walks such as for drifted \cite{lucas} and  cookie walks \cite{Cookie} or starting the walks from different positions \cite{unif}. The time for the process started with some initial aggregate to ``forget'' this starting state has also been studied \cite{LevSil,SilNew}. 
	
	One model where interaction between particles prevents settling at a site is a two-type particle system called ``Oil and Water'' where particles of opposite types displace each other \cite{OilWater}. There have been some papers on models related to the Parking function of a graph where cars drive randomly around a graph searching for vacant spots  \cite{park,GoldPrz}. More commonly, however, interaction is directly between particles and not with the host graph such as predator prey/coalescing models \cite{CoopFrezMulti}. The problem of uniformly distributing $n$  non-communicating memoryless particles across $n$ unoccupied sites is also considered from a game theoretic perspective \cite{alpern2002spatial}. 
	
	Other models related to IDLA include rotor-router aggregation, chip firing, Abelian sandpile models and activated random walks \cite{Chip,LaplacianGrow,arw}. Many of these interacting particle systems satisfy a so-called ``least action principle'' which is key to their analysis. Such a principle roughly states that the natural behavior of the system is in a sense optimal and, if the process is perturbed, then the outcome will have a higher energy. One may try to find a least action principle for Sequential-IDLA by conjecturing that if we allow that a random walk sometimes does not settle when visiting an unoccupied vertex (thereby performing more random walk steps), then this could only delay the dispersion time. However, we show in Proposition \ref{leastact} that this is not the case. In particular, we give a graph for which the dispersion time decreases if one allows some particles to perform more random walks steps.

	To the best of our knowledge, the dispersion time and IDLA on a finite graph has not been studied before. Moore and Machta consider running IDLA walks synchronously for the purposes of simulating the limit shape \cite{MoMaParallel} in parallel models of computation, however their results don't appear to overlap with ours. Simulating the process efficiently has also been studied more recently \cite{FriLev}. Thacker and Volkov \cite{ThaVol} study a border DLA based growth model on finite graphs and investigate how long until the aggregate grown from a fixed origin hits a fixed boundary.

	\section{Preliminaries} \label{prelim}
	
	Throughout $G=(V,E)$ will denote an undirected, unweighted, connected graph with $n$ vertices. Let $\Delta(G)$ denote the maximum degree of $G$. We say that a graph $G$ is almost-regular if the ratio between maximum degree and minimum degree is bounded from above by a constant.
	
	To recap we let $\tau^v_{par}(G)$ denote the dispersion time of the Parallel-IDLA process on $G$ started from $v$, that is the first iteration at which every vertex hosts (exactly) one particle. Similarly $\tau^v_{seq}(G)$ denotes the dispersion time of the Sequential-IDLA process on $G$ started from $v$, that is the longest time it takes a single particle to settle. Let $t_{seq}(G)=\max_{v \in V}\Ex{\tau_{seq}^v(G)}$ and $t_{par}(G)= \max_{v \in V}\Ex{\tau_{par}^v(G)}$. We shall drop the dependence on $G$ from our notation when the graph is clear from the context. 
	
	Further, let $t_{hit}(u,v) = \Ex{ \tau_{hit}(u,v) }$, where $\tau_{hit}(u,v)$ is the time for a random walk to reach $v$ from $u$. Let $t_{hit}(G) := \max_{u,v\in V(G)} t_{hit}(u,v)$. For a probability distribution $\mu$ on $V$ and a set $S\subset V$ let $t_{hit}(\mu,S)$ denote the expected time for the walk starting from $\mu$ to hit any vertex in $S$. 
	
	Thanks to our results relating lazy and non-lazy walks (Theorem \ref{lazy2normal}), we can conveniently switch between the two models at the cost of a constant factor (under some mild additional conditions this factor is $2+o(1)$), thus walks may be lazy. We use $P$ to denote the transition matrix of the non-lazy walk (and $\tilde{P} = \left(I + P \right)/2$ for the lazy walk). We also use $p^t_{u,v}$ to denote the probability a random walk goes from $u$ to $v$ in $t$ steps (and $\tilde{p}^t_{u,v}$ respectively for the lazy walk). We let $t_{\mix}=\min_{t\geq 1}\left\{t:\;\max_{x\in V}\sum_{y\in V}\left|\tilde{p}^{t}_{x,y}-\pi(y) \right|\leq 1/e\right\}$ denote the mixing time of $G$.  
	
	Some of the dispersion results in the paper hold in expectation, some hold w.h.p.\ (with probability $1-o(1)$) and others hold in both senses. One does not necessarily imply the other, in particular Proposition \ref{conccounter} show there are graphs where neither dispersion time concentrates. 
	\paragraph{Road Map.}
	The rest of this paper is organized as follows. We first present some general upper and lower bounds in Section~\ref{bounds} before turning to the more involved coupling proofs in Section~\ref{coupling}. In Section~\ref{fundamental} we apply the results from Section~\ref{bounds} and Section~\ref{coupling} to specific networks completing the results in Table~\ref{results table}, for some of these networks a more refined analysis is required. We conclude the paper in Section~\ref{sec:conclusion} with a summary of our results and some open problems.
	
	\section{General Bounds} \label{bounds}
	\subsection{Upper Bounds}
	The first upper bound we present holds for any graph and only requires knowledge of the maximum hitting time of a random walk between two vertices. Although this result can be also recovered from the more general Theorem~\ref{partialparupper}, it serves as a good ``warm-up''. 
	
	\begin{thm}\label{lem:general}
		Let $G$ be any connected graph with $n$ vertices. Then for any $v \in V$, \[ \Pro{\tau_{par}^v(G) > 8 \cdot t_{hit}(G) \cdot  \log_{2} n}\leq \frac{1}{n^2}\qquad \text{and} \qquad t_{par}(G) = \BO{ t_{hit}(G)\cdot  \log n }.\]
		The same results also hold for $\tau_{seq}^v$ and $t_{seq}$.
	\end{thm}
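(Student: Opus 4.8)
The plan is to run a phase-based argument with phases of length $2\,t_{hit}(G)$, exploiting the elementary fact that from any vertex a random walk hits any fixed target within $2\,t_{hit}(G)$ steps with probability at least $\tfrac12$ (Markov's inequality applied to the hitting time, which is at most $t_{hit}(G)$ in expectation). Combined with the strong Markov property this yields that a walk fails to hit a fixed target in $2\,t_{hit}(G)\cdot j$ steps with probability at most $2^{-j}$. I would first treat Sequential-IDLA, where the analysis is cleanest, then Parallel-IDLA; the expectation bounds will follow from the tail bounds by summation.

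For Sequential-IDLA the crucial simplification is that there is no competition: while the $i$-th particle moves, no other particle moves, so the set $S_i$ of unoccupied vertices is fixed throughout its journey and has size $n-i+1\ge 1$. Hence the number of steps taken by the $i$-th particle is exactly the hitting time of the fixed set $S_i$, which is stochastically dominated by the hitting time of any single vertex $w\in S_i$. By the phase argument above, the $i$-th particle takes more than $2\,t_{hit}(G)\cdot j$ steps with probability at most $2^{-j}$. Taking $j=3\log_2 n$ and a union bound over the $n$ particles yields $\Pro{\tau_{seq}^v(G) > 6\,t_{hit}(G)\log_2 n}\le n\cdot 2^{-3\log_2 n}=n^{-2}$.

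For Parallel-IDLA I would instead track the potential $U_t$ equal to the number of unsettled particles at time $t$, which always coincides with the number of unoccupied vertices. The goal is the one-step drift estimate
\[
\Ex{\,U_{t+2t_{hit}(G)}\mid \mathcal{F}_t\,}\le \tfrac12\,U_t .
\]
The key structural observation enabling this is that a vertex unoccupied at time $t$ is still unoccupied at time $t+2t_{hit}(G)$ if and only if no particle visits it during the interval (a vertex becomes occupied exactly when it is first visited while unoccupied, since its first visit necessarily occurs while it is still unoccupied). Fixing any bijection between the $U_t$ unoccupied vertices and the $U_t$ unsettled particles and applying the per-vertex Markov bound to each matched pair controls the expected number of surviving unoccupied vertices. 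Iterating the drift gives $\Ex{U_{2t_{hit}(G)\cdot j}}\le n\,2^{-j}$, and since $U$ is integer valued, $\Pro{\tau_{par}^v(G)>2t_{hit}(G)\cdot j}=\Pro{U_{2t_{hit}(G)\cdot j}\ge 1}\le \Ex{U_{2t_{hit}(G)\cdot j}}\le n2^{-j}$; choosing $j=3\log_2 n$ gives the claimed $n^{-2}$ tail. Both expectation bounds $t_{seq},t_{par}=\BO{t_{hit}(G)\log n}$ then follow by writing $\Ex{\tau}=\sum_{t\ge 0}\Pro{\tau>t}$, splitting the sum at $6\,t_{hit}(G)\log_2 n$, and bounding the remainder with the geometrically decaying estimates $\Pro{\tau>2t_{hit}(G)\cdot j}\le n2^{-j}$ valid for $j\ge \log_2 n$; the bounds are uniform in the origin, so they pass to the maxima $t_{seq},t_{par}$.

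The step I expect to be the main obstacle is the drift estimate for Parallel-IDLA, precisely because of the competition between particles: a particle's trajectory is truncated as soon as it settles, so a particle matched to a far-away vertex may settle elsewhere before ever reaching its assigned vertex, and an unoccupied vertex may be ``stolen'' by another particle. The if-and-only-if characterization above is what tames this, since it lets me count surviving unoccupied vertices through which vertices are left untouched rather than tracking individual particle fates; but verifying that this yields the sharp factor $\tfrac12$ (rather than merely a constant-factor contraction) is the delicate point, and is exactly where the asymmetry between the two processes, with Parallel-IDLA being the slower one, originates. By contrast, the Sequential-IDLA argument is competition-free and immediately gives the stated constant.
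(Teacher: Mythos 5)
Your Sequential-IDLA argument is correct, and it is actually more direct than the paper's route: the paper proves the bound for the parallel process and then transfers it to the sequential one via the stochastic domination $\tau_{seq}^v \preceq \tau_{par}^v$ (Theorem \ref{PStocS}), whereas you exploit the fact that the unoccupied set $S_i$ is frozen while the $i$-th particle moves, dominate its settling time by the hitting time of a single vertex, and conclude by the phase argument plus a union bound, recovering exactly the stated constants.

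The gap is in the Parallel-IDLA drift estimate. Your per-pair Markov bound controls the probability that the \emph{untruncated} walk of the matched particle $p(w)$ hits $w$ within $2t_{hit}(G)$ steps, but your if-and-only-if characterization concerns visits by \emph{actual} (truncated) trajectories. If $p(w)$ settles at some other vertex before its walk would have reached $w$, then the full walk of $p(w)$ hits $w$ and yet $w$ may be visited by no particle at all and survive; so the matched-pair estimate only yields $\Pro{w \text{ survives}} \leq \tfrac12 + \Pro{p(w)\text{ settles elsewhere}}$, whose second term can be close to $1$. The characterization does not tame this by itself. The known repair is the mechanism in the paper's Theorem \ref{partialparupper}: if $p(w)$'s full walk hits a vertex that remains unoccupied throughout the window, then $p(w)$ must settle \emph{somewhere} in the window (otherwise its actual trajectory would reach that still-unoccupied vertex and settle there). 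Writing $A$ for the set of surviving vertices and $Y$ for the number of matched pairs whose full walk misses its target, injectivity of the matching and the identity (number of particles settled in the window) $= k-|A|$ give $|A| \leq Y + (k-|A|)$, hence $\Ex{|A|} \leq \tfrac34 k$ since $\Ex{Y}\leq k/2$. This is a genuine contraction, but it is $3/4$, not $1/2$; iterating it still proves $t_{par}=\BO{t_{hit}\log n}$ in expectation, but it needs roughly $7\log_2 n$ windows of length $2t_{hit}$ (about $14\, t_{hit}\log_2 n$ total) to push the tail below $n^{-2}$, so the stated bound at $6\,t_{hit}\log_2 n$ does not follow, and I see no proof that your claimed $1/2$ drift is true. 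The paper sidesteps competition entirely with a covering argument you could graft onto your proof: give every particle a predetermined full walk of length $6t_{hit}\log_2 n$; by your own phase argument plus a union bound over vertices, each such walk covers all of $V$ with high probability; and a particle whose full walk covers $V$ must settle, because as long as it is unsettled some vertex remains unoccupied throughout the window (occupation is monotone) and the walk visits it. This settles all particles in one shot, with no drift iteration.
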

	\begin{proof}
		To begin sample $n$ random walks of length $T=8t_{hit}(G)\log_2 n$ starting from the origin, then w.p. at least $1-n^{-2}$ all of these walks have covered all the vertices of the graph. To see this note the probability a single walk of length $2t_{hit}$ visits $u \in V$ is at least $1/2$ by Markov's inequality, thus by the Markov property $u$ is visited in time $T$ w.p. $1-n^{-4}$. Thus, by a union bound, in time $T$ one walk covers the graph w.p. at least $1-n^{-3}$ and all walks cover the graph w.p. at least $1-n^{-2}$. 
		
		Now, we run the Parallel-IDLA process by using the $n$ sampled walks, thus each particle follows a predetermined trajectory. Since all the $n$ walks cover the graph, it follows that all the particles have to settle by time $T=8t_{hit}(G)\log_2 n$ with probability at least $1-n^{-2}$. To obtain the result in expectation, divide the time in phases of $8t_{hit}(G)\log_2 n$ time-steps, then the number of phases needed to finish the process is stochastically dominated a geometric random variable of mean $1/(1-n^{-1})$ concluding that $\Ex{\tau_{par}^v(G)} = \mathcal{O}(t_{hit}(G)\log n)$. Since this holds for any $v\in V$ it follows that $t_{par}=\mathcal{O}(t_{hit}(G)\log n )$. The same results holds for $\tau_{seq}^v$ and $t_{seq}$ due to Theorem~\ref{PStocS}.
	\end{proof}

	This simple bound is actually tight in many cases, see Table~\ref{tbl:results}. The next result is a simple consequence, yet it provides the correct asymptotic worst-case bounds for the dispersion time.
	
	\begin{cor}[General quantitative bounds on graphs]\label{cor:general}$\qquad $

		\begin{itemize}\itemsep0pt
			\item For any $n$-vertex graph, $t_{seq}, t_{par}=\BO{n^3\log n}$. 
			\item For any regular $n$-vertex graph, $t_{seq},t_{par}=\BO{n^2\log n}$.
		\end{itemize} 
	\end{cor}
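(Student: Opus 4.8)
The plan is to treat this as a direct corollary of Theorem~\ref{lem:general}, which already supplies $t_{seq}, t_{par} = \BO{t_{hit}(G)\cdot \log n}$ for both processes. Thus the entire task reduces to two standard worst-case bounds on the maximum hitting time: $t_{hit}(G) = \BO{n^3}$ for an arbitrary connected graph, and $t_{hit}(G) = \BO{n^2}$ for a regular graph. Once these are in hand, both bullet points follow by substituting into the $\BO{t_{hit}(G)\cdot\log n}$ estimate, and the statements for $t_{seq}$ and $t_{par}$ are identical because Theorem~\ref{lem:general} (via the stochastic domination of Theorem~\ref{PStocS}) handles both processes in the same way.

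For the hitting-time estimates I would use the electrical-network identity bounding a single hitting time by the commute time: for any two vertices $u,v$,
\[
t_{hit}(u,v) \le t_{hit}(u,v) + t_{hit}(v,u) = 2|E| \cdot R_{\mathrm{eff}}(u,v),
\]
together with the elementary inequality $R_{\mathrm{eff}}(u,v) \le \dist(u,v) \le \diam(G)$, which follows by restricting the network to a single shortest $u$--$v$ path (a series of $\dist(u,v)$ unit resistors) and invoking Rayleigh monotonicity. For a general connected graph this already suffices: since $|E| \le \binom{n}{2}$ and $\diam(G) \le n-1$, we get $t_{hit}(G) \le n(n-1)^2 = \BO{n^3}$, and Theorem~\ref{lem:general} then yields $t_{seq}, t_{par} = \BO{n^3 \log n}$.

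The regular case is where the naive estimate must be sharpened, and this is the one genuinely non-bookkeeping step. Here $|E| = dn/2$, so the commute-time bound reads $t_{hit}(u,v) \le dn \cdot R_{\mathrm{eff}}(u,v)$; feeding in only $R_{\mathrm{eff}} \le n-1$ recovers a useless $\BO{n^3}$. The point is that a regular graph cannot have both large degree and large diameter simultaneously: by the classical degree--diameter bound, any connected graph of minimum degree $\delta$ satisfies $\diam(G) = \BO{n/\delta}$, so a $d$-regular graph has $\diam(G) = \BO{n/d}$. Combining this with $R_{\mathrm{eff}}(u,v) \le \diam(G)$ gives $t_{hit}(u,v) \le dn \cdot \BO{n/d} = \BO{n^2}$, uniformly in $u,v$, hence $t_{hit}(G) = \BO{n^2}$ and therefore $t_{seq}, t_{par} = \BO{n^2 \log n}$ by Theorem~\ref{lem:general}.

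The only real content, as opposed to substitution, is this final cancellation of the degree $d$ between $|E| = \Theta(dn)$ and $\diam(G) = \BO{n/d}$; everything else is direct plug-in into the already-established $\BO{t_{hit}\log n}$ bound. I would flag the degree--diameter bound as the key auxiliary fact to cite, since without it the resistance-times-edges argument is too lossy to separate the regular case from the general case.
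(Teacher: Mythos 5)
Your proposal is correct, and its top-level structure is exactly the paper's: reduce both bullet points to worst-case hitting-time bounds via Theorem~\ref{lem:general} (with the $t_{seq}$ statements inherited through Theorem~\ref{PStocS}). The only difference is in how the hitting-time bounds themselves are obtained. The paper simply cites \cite[Theorem 2.1]{lovasz1993random} for $t_{hit}=\BO{n^3}$ in general and $t_{hit}=\BO{n^2}$ for regular graphs, whereas you derive both from scratch: the commute-time identity $t_{com}(u,v)=2|E|\cdot R_{\mathrm{eff}}(u,v)$, Rayleigh monotonicity giving $R_{\mathrm{eff}}(u,v)\le \dist(u,v)\le \diam(G)$, and, for the regular case, the degree--diameter bound $\diam(G)\le 3n/(d+1)=\BO{n/d}$ so that the degree cancels against $|E|=dn/2$. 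All of these steps are sound (note the paper itself states the commute-time identity with the factors transposed in the proof of Theorem~\ref{lowerbound}; your version is the correct one), so your write-up amounts to a self-contained proof of the cited Lov\'asz bounds. What your route buys is independence from the external reference and an explicit identification of where regularity enters (the cancellation of $d$); what the paper's route buys is brevity and slightly sharper constants, since the survey's bounds are tighter than $3n^2$ and $n(n-1)^2$. Either way the corollary follows.
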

	\begin{proof}This follows from Theorem \ref{lem:general} and the bounds on $t_{hit}$ in \cite[Thm. 2.1]{lovasz1993random}.
	\end{proof} 
	
	Notice these bounds exceed the corresponding upper bounds on the cover time \cite[Thm. 6.12, Thm. 6.15]{aldousfill} by a $\log n$-factor. Both bounds above are sharp up to a multiplicative constant as witnessed by the lollipop and the cycle respectively, see Proposition \ref{lolli} and Theorem \ref{cycle} respectively. In fact for any fixed $r<\infty$ one can construct a family of $r$-regular graphs for which the second bound above is tight. For example when $r=3$ one can iteratively augment an even cycle by adding an edge between two vertices of degree two who are at distance two to obtain a 3-regular graph with the same asymptotic dispersion time as the cycle.

	\subsubsection{General Bounds in Terms of Hitting Times of Sets}
	In this section we achieve more refined bounds by considering hitting times of sets as opposed to vertices. To avoid periodicity related issues we assume the trajectory of the particles is a lazy random walk. As shown in Theorem~\ref{lazy2normal}, the parallel or sequential dispersion times with lazy walk are equivalent to their non-lazy counterparts up to constant factors, thus any results established for the dispersion time with lazy walks also apply for non-lazy walk (up to a constant factor) and vice versa.  Define $\tau_{par}^{v}(G,k)$ to be the first time (from worst case start vertex) that less than $ 2^k -1  $ vertices are left to be settled in the Parallel-IDLA, and let $t_{par}^{k
	}(G)= \max_{v \in V} \Ex{\tau_{par}^{v}(G,k) } $ denote the worst-case expectation. Clearly $\tau_{par}^{v}(G,1) = \tau_{par}^v(G)$, which is the standard parallel dispersion time.  
	\begin{thm}\label{partialparupper}
		Consider the Parallel-IDLA process with lazy walks. Then, for any connected $n$-vertex graph and any $k\geq 1 $, we have
		\[
		t_{par}^k(G) \leq 60 \cdot \sum_{j=k}^{\lceil \log_{2} n \rceil} \left( t_{mix} + \max_{S \subseteq V \colon |S| \geq 2^{j-2}} t_{hit}(\pi,S) \right).
		\]
	\end{thm}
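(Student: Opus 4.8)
The plan is to split the run into $\BO{\log n}$ phases according to dyadic thresholds on the number of unoccupied vertices, to show that each phase lasts $\BO{t_{mix} + \max_{|S|\geq 2^{j-2}} t_{hit}(\pi,S)}$ in expectation, and to sum. Concretely, with $J=\lceil\log_2 n\rceil$, for $k\le j\le J$ let $\sigma_j$ be the first time at which fewer than $2^{j}-1$ vertices are unoccupied; since at time $0$ fewer than $2^{J+1}-1$ vertices are unoccupied we may set $\sigma_{J+1}=0$, and then $\tau^v_{par}(G,k)=\sigma_k=\sum_{j=k}^{J}\Delta_j$ with $\Delta_j:=\sigma_j-\sigma_{j+1}$ the duration of the phase that lowers the count from below $2^{j+1}-1$ to below $2^{j}-1$. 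Throughout phase $j$ the unoccupied set $U$ satisfies $|U|\ge 2^{j}-1\ge 2^{j-2}$, so $t_{hit}(\pi,U)\le H_j:=\max_{S\colon|S|\ge 2^{j-2}} t_{hit}(\pi,S)$ (the threshold $2^{j-2}$ rather than $2^{j-1}$ leaves room for the count to dip within a single block). By the strong Markov property it then suffices to prove $\Ex{\Delta_j}=\BO{t_{mix}+H_j}$ uniformly over the configuration at the start of the phase, and sum.

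For the per-phase bound I would run the process in blocks of length $L_j=\BO{t_{mix}}+128\,H_j$, each beginning with a mixing prefix of $\BO{t_{mix}}$ lazy steps (the lazy-walk hypothesis makes $t_{mix}$ well defined and lets each particle reach total-variation distance at most $1/128$ from $\pi$) followed by $128\,H_j$ further steps. The engine is the following halving lemma: if a block starts with $N'\in[2^{j}-1,2^{j+1}-2]$ unoccupied vertices, then with probability at least $1-2^{-N'}$ the count drops below $N'/2$ during the block. Granting it, the number of blocks needed to finish phase $j$ is dominated by a geometric variable of mean $\BO{1}$ (the per-block success probability exceeds $1/2$ even in the boundary case $N'=\BO{1}$), so $\Ex{\Delta_j}\le\BO{1}\cdot L_j=\BO{t_{mix}+H_j}$, and summing over $j$ yields the theorem with an absolute constant in place of $60$.

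The heart of the matter is the halving lemma, and its only subtlety is the dependence between particles. The clean combinatorial fact (already implicit in the proof of Theorem~\ref{lem:general}) is that a vertex of the block-start unoccupied set $U$ becomes occupied during the block if and only if it is \emph{visited} by some particle: settled particles are frozen, so the first visit to any $u\in U$ is made by an unsettled particle while $u$ is still free, whereupon that particle settles at $u$. Hence the number of particles that settle equals the number of distinct vertices of $U$ that get visited. Now suppose the phase fails to halve, i.e.\ the uncovered set $Z\subseteq U$ has $|Z|>N'/2$; then more than $N'/2$ particles are still unsettled at the end of the block. Crucially, an unsettled particle has never met a free vertex, so its real trajectory coincides with an \emph{unabsorbed} (ghost) lazy walk run for the whole block, and all such ghost trajectories avoid $Z\supseteq S$ for any fixed $S\subseteq Z$ with $|S|=2^{j-2}$. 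The ghost walks are mutually independent, and after the mixing prefix each avoids a fixed $S$ of size $2^{j-2}$ for the remaining $128\,H_j\ge 128\,t_{hit}(\pi,S)$ steps with probability at most $1/64$ (Markov's inequality from the stationary start plus the $1/128$ total-variation error). Thus failure to halve forces, for some fixed $S$ of size $2^{j-2}$, at least $N'/2$ of the $N'$ independent ghosts to avoid $S$; a Chernoff bound makes this at most $2^{-\Omega(N')}$ per fixed $S$, and a union bound over the at most $\binom{N'}{2^{j-2}}\le 2^{N'}$ choices of $S$ (here the chosen per-set avoidance probability $1/64$ is small enough to beat the $2^{N'}$ factor) leaves total probability $2^{-\Omega(N')}\le 2^{-N'}$, which is the lemma.

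The main obstacle, and the step I would be most careful about, is exactly this transfer from the interacting process to independent walks. A naive ghost coupling over-counts coverage, since ghosts visit a superset of what the absorbed particles visit, so a direct ``independent walks cover a constant fraction'' estimate bounds the wrong quantity. What rescues the argument is that failure of halving produces \emph{many} particles that never settle, and for those particles the real and ghost trajectories literally agree for the whole block — turning a statement about the dependent process into one about independent walks avoiding a fixed large set, an event that is rare precisely because every set of size at least $2^{j-2}$ is hit quickly from stationarity, which is what $H_j$ encodes. Tracking the constants through the mixing threshold, the Markov step and the Chernoff/union bound is what pins the numerical factor (comfortably below $60$), and the reduction to lazy walks is used only to keep $t_{mix}$ meaningful.
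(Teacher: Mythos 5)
Your proposal is correct and follows essentially the same route as the paper's proof: dyadic phases on the number of unoccupied vertices, the key observation that particles still unsettled at the end of a window have trajectories identical to independent (ghost) lazy walks that avoid the still-uncovered set, a per-walk miss probability of $2^{-6}$ obtained from a mixing prefix plus Markov's inequality, a Chernoff bound over the walks combined with a union bound over at most $2^{N'}$ subsets, and geometric repetition to pass to expectation. The only slip is quantitative: with your specific choices (a single window of length $128\,H_j$, total-variation error $1/128$, geometric mean $2$) the constant you obtain is roughly $256$ rather than ``comfortably below $60$'' --- to recover the stated constant one instead uses, as the paper does, six repeated windows of length $5\,(t_{mix}+t_{hit}(\pi,S'))$, giving $2 \cdot 30 = 60$.
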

	One consequence of this theorem for $k=\log_2 n - 1$ is that within $\mathcal{O}(t_{mix})$ steps, at least $n/2$ random walks are settled (this follows since by the duality between hitting time of large sets and mixing time \cite{PerSou}, $\max_{S \subseteq V \colon |S| \geq n/4} t_{hit}(\pi,S)  =  \mathcal{O}(t_{mix})$.
	\begin{rem}Note that the upper bound can be estimated directly to be at most $60\lceil \log_{2} n \rceil \cdot \left( t_{mix} + t_{hit} \right) \leq 120 \lceil \log_{2} n \rceil \cdot t_{hit}$, so this bound is (up to a multiplicative constant) a refinement of Theorem \ref{lem:general}.
	\end{rem}

	\begin{proof}[Proof of Theorem \ref{partialparupper}]
		We divide the process into $\log_2 n$ phases which are labelled in reverse order $\lceil \log_{2} n \rceil, \lceil \log_{2} n \rceil-1,\ldots,2,1$. Phase $j$ starts as soon as the number of unsettled walks $k$ satisfies $k \in [2^{j-1},{2}^{j})$. It could be case that the number of unsettled walks more than halves in one step and phase $j$ is skipped, for now assume this is not the case. Let $t$ be the first time step at the beginning of phase $j$, and let $S\subseteq V$ be the set of unoccupied vertices at time $t$, thus $|S|=k$. Consider $k$ random walks moving independently and having no interaction with the unsettled vertices, then let $\tau_j$ be the (random) time such that no subset $S'$ of $S$ with size at least $k/2$ is visited by any less than $k/2$ of these walks. We now argue by contradiction that $\tau_j$ stochastically dominates the length of phase $j$. Suppose the number of unsettled walks is still at least $ k/2$ at step $t+\tau_j$. Hence there exists still a subset $S'$ of unoccupied vertices with size at least $k/2$ at step $t+\tau_j$. We know that at least $k/2$ of the walks would hit at least one vertex of this set $S'$. Thus all these walks must terminate earlier, as otherwise the vertices in $S'$ cannot all be unoccupied at step $t+\tau_j$, however, in this case we have a contradiction to the assumption that at least $k/2$ of the walks are still unsettled.

		We will now bound $\Ex{\tau_j}$ from above. Consider first a fixed random walk and a fixed set $S' \subseteq S$ of size at least $k/2$. The probability that a fixed random walk does not hit $S'$ within $30 \cdot (t_{mix} + t_{hit}(\pi,S'))$ steps is at most $(1/2)^6$, this follows easily from the fact that after $5t_{mix}$ time, with probability at least $1-e^{-1}$, we can couple the Markov chain with the stationary distribution  (e.g. Lemma A.5. in  \cite{kanade2016coalescence}), and then, given that the coupling is successful Markov's inequality gives us that with probability at most $1/5$ we do not hit $S'$, thus, the probability we do not hit $S'$ in  $5 \cdot (t_{mix} + t_{hit}(\pi,S'))$ steps is at most $e^{-1}+(1-e^{-1})(1/5) < 1/2$, and thus after $6$ time-intervals of length $5 \cdot (t_{mix} + t_{hit}(\pi,S'))$ the probability the walk does not hit $S'$ is at most $(1/2)^6$.
		
		Hence the probability that at least $k/2$ of the $k$ walks do not hit the set $S'$ is at most
		\[
		\binom{k}{k/2} \cdot \left(\frac{1}{2^6}\right)^{-k/2} \leq 2^k \cdot 2^{-3k}.
		\]
		Taking the Union bound over all possible $\binom{k}{k/2} \leq 2^k$ subsets of $S$ which are of size at least $k/2$, it follows that the probability that there exists a subset $S$ of the unoccupied vertices of size at least $k/2$ such that at least $k/2$ of the walks do not hit the set $S$ is at most
		\[
		2^k \cdot 2^k \cdot 2^{-3k} \leq 2^{-k} \leq 1/2.
		\]
		Hence the expected time the process spends in phase $j$ (assuming that we reach this phase and do not skip it) is at most
		\[
		2\cdot 30 \cdot (t_{mix} + \max_{S \subseteq V \colon |S| \geq 2^{j-2}} t_{hit}(\pi,S)).
		\]Summing up these contributions from $k$ to $\lceil \log_2 n\rceil$ yields the result.
	\end{proof}
	
	\begin{cor}\label{parlam2}
		Let $G$ be a connected $n$-vertex almost regular-graph. Then, $t_{par}(G)=\BO{n/(1-\lambda_2)}  $.
	\end{cor}
	\begin{proof}
		By Lemma \ref{setestimate} states that 
		$t_{hit}(v,S) =\BO{\frac{n(1+\lceil \log |S| \rceil)}{(1-\lambda_2) |S|}}$ holds for any $S\subset V$, $v\in V$. We also have $t_{\mix} = \BO{\frac{\log n}{1-\lambda_2} } $ by \cite[Thm.\ 12.3]{levin2009markov}. Plugging these estimates into Theorem \ref{partialparupper} yields
		\[
		t_{par}(G) \leq \BO{1} \cdot \sum_{j=1}^{\lceil \log_{2} n \rceil} \left(\frac{\log n}{1-\lambda_2} + n\cdot \frac{1+\log 2^{j-2})}{(1-\lambda_2)2^{j-2}}\right)=\BO{\frac{n}{1-\lambda_2}}.
		\]The result follows since lazy and non-lazy dispersion times are equivalent up to a constant factor by Theorem \ref{lazy2normal}. 
	\end{proof}
	Let us now turn to the sequential process, where we can derive a similar bound, which turns out to be slightly stronger.
	\begin{thm}\label{sequentialprecise} For any $n$ vertex graph $G$, we have
		\[
		t_{seq}(G) \leq 30 \cdot \max_{1 \leq j \leq \lceil\log_2 n \rceil} \left\{ j \cdot \left( t_{mix} + \max_{S \subseteq V \colon |S| \geq 2^{j-2}} t_{hit}(\pi,S))  \right) \right\}.
		\]
	\end{thm}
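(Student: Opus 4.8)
The plan is to mirror the phase argument behind Theorem~\ref{partialparupper}, while exploiting a crucial structural difference: in the sequential process the dispersion time is a \emph{maximum} over individual particle excursions, which is exactly why the final bound is a maximum over $j$ rather than a sum. Concretely, since all particles start at $v$ and particle $i$ settles precisely at the moment its walk first hits the set $S_i$ of currently unoccupied vertices, the running time of particle $i$ equals the hitting time of $S_i$ from $v$, so $\tau_{seq}^v(G) = \max_i \tau_{hit}(v, S_i)$. Because each settling removes exactly one vertex, $|S_i|$ decreases by one per particle, from $n-1$ down to $1$.

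First I would partition the particles into phases, placing particle $i$ in phase $j$ when $|S_i| \in [2^{j-1}, 2^j)$. As $|S|$ drops by exactly one at a time, no phase is skipped and phase $j$ contains exactly $2^{j-1}$ particles; moreover every set $S_i$ occurring in phase $j$ has size at least $2^{j-1} \geq 2^{j-2}$, so $t_{hit}(\pi, S_i) \leq \max_{S : |S| \geq 2^{j-2}} t_{hit}(\pi, S)$. Writing $h_j := t_{mix} + \max_{S : |S| \geq 2^{j-2}} t_{hit}(\pi, S)$, I would reuse the single-walk tail estimate from the proof of Theorem~\ref{partialparupper}: conditioned on any configuration of settled vertices, a walk from $v$ fails to hit a target set of stationary hitting time at most $h_j - t_{mix}$ within $5 h_j$ steps with probability below $1/2$ (coupling to stationarity after $5 t_{mix}$ steps, then Markov), and hence fails within $5\ell\, h_j$ steps with probability at most $2^{-\ell}$.

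Next I would take union bounds. Choosing $\ell = 6j$ makes the probability that a single phase-$j$ particle exceeds $30\,j\,h_j = 5 \cdot 6j \cdot h_j$ steps at most $2^{-6j}$; a union bound over the $2^{j-1}$ particles of phase $j$ gives a phase failure probability below $2^{-5j}$, and $\sum_{j \geq 1} 2^{-5j}$ is a small constant. Thus, with high probability, every particle of phase $j$ settles within $30\,j\,h_j$ steps, giving $\tau_{seq}^v(G) \leq 30 \max_{1 \leq j \leq \lceil \log_2 n \rceil}\{ j\, h_j\}$ with high probability. To obtain the bound in expectation, I would repeat the union bound with $\ell = 6j\lambda$ to control the upper tail $\Pro{\tau_{seq}^v(G) > \lambda M}$, where $M := 30\max_j\{ j\, h_j\}$; the per-phase contribution becomes $2^{j(1-6\lambda)}$, so summing over phases yields a tail decaying like $2^{-5\lambda}$ for $\lambda \geq 1$, and integrating $\Ex{\tau_{seq}^v(G)} = \int_0^\infty \Pro{\tau_{seq}^v(G) > t}\,dt$ recovers the claimed bound (the mass beyond $M$ contributes only a negligible fraction, absorbed into the constant).

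The main obstacle, and the point that separates this from the parallel bound, is securing a maximum over $j$ rather than a sum. In Parallel-IDLA the phases are traversed one after another in real time, so their durations accumulate; here instead each particle's excursion is charged to the single phase it belongs to, and the dispersion time, being the longest such excursion, is bounded by the largest individual phase budget. The extra factor $j$ multiplying $h_j$ is forced by the union bound over the $\approx 2^{j-1}$ particles of phase $j$: overcoming that count requires $\ell$ proportional to $j$ independent hitting intervals. A secondary point requiring care is that the set $S_i$ is random and depends on the entire past of the process; this causes no trouble, since the single-walk tail estimate is uniform over all admissible configurations, depending only on the cardinality bound $|S_i| \geq 2^{j-2}$, so conditioning on the history is harmless.
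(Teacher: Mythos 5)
Your proposal is correct and follows essentially the same route as the paper's proof: both reduce the sequential dispersion time to per-walk hitting times of the shrinking unoccupied set, reuse the $5\cdot(t_{mix}+t_{hit}(\pi,S))$ single-walk tail estimate from the proof of Theorem~\ref{partialparupper}, take a union bound over the walks organized dyadically by the number of unoccupied vertices, and iterate geometrically to convert the with-high-probability bound into an expectation bound. The only quibble is constant bookkeeping: with your choice $\ell = 6j\lambda$ the final integration gives roughly $32\cdot\max_j\{j\,h_j\}$ rather than $30\cdot\max_j\{j\,h_j\}$, but taking $\ell = 5j\lambda$ instead (the within-phase union bound then leaves total failure probability $\sum_{j\geq 1} 2^{-4j-1} < 1/2$, and the tail still decays geometrically in $\lambda$) recovers the stated constant.
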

	\begin{proof}Since in the sequential process only one walk moves at a time we can couple simple and lazy walks so that the dispersion time with simple walks is always less than with lazy walks. Thus we can assume the walk is lazy. 
		Fix a time $\tau$ to be determined later.
		Consider the $(n-k)$-th walk in the Sequential-IDLA, when there are still $k$ unoccupied vertices. It was argued in the proof of Theorem~\ref{partialparupper} that the probability the random walk does not hit a set $S$ of size $k$ within $5 (t_{mix} + \max_{S \subseteq V \colon |S|=k} t_{hit}(\pi,S))$ time steps is at most $1/2$ regardless of the initial vertex $v$. Denote 
		$$q(k) = \left\lfloor \frac{\tau}{5\cdot (t_{mix} + \max_{S \subseteq V \colon |S|=k} t_{hit}(\pi,S))}\right\rfloor,$$
		hence the probability that the random walk does not succeed within $\tau$ steps (assuming $\tau$ is large enough) is at most $2^{-q(k)}$.	Thus by the Union bound, the probability that at least one of the $n$ walks do not succeed is at most $\sum_{k=1}^n 2^{-q(k)}.$ By dividing the sum into $\lceil \log_2 n\rceil$ buckets of sizes (at most) $1,2,\ldots, 2^m,\ldots, 2^{\lceil \log_2 n \rceil}$, and using monotonicity of hitting times, it follows that the above term is at most
		\[
		\sum_{j=1}^{\lceil \log_2 n \rceil} 2^j \cdot \exp\left(- \frac{\tau \log 2}{5 \cdot (t_{mix} + \max_{S \subseteq V \colon |S| \geq 2^{j-2}} t_{hit}(\pi,S))}		\right).
		\]
		Next observe that we need to ensure that for every $j$ it holds that 
		\[
		\tau \geq j \cdot 5 \left( t_{mix} + \max_{S \subseteq V \colon |S| \geq 2^{j-2}} t_{hit}(\pi,S))  \right), 
		\]
		otherwise just a single addend above is larger than $1$. However, if we just choose
		\[
		\tau := 3\max_{1 \leq j \leq \log_2 n} \left\{ j \cdot 5 \cdot \left( t_{mix} + \max_{S \subseteq V \colon |S| \geq 2^{j-2}} t_{hit}(\pi,S))  \right) \right\},
		\]
		then we see that the total sum in the Union bound expression is at most $1/2$, and we can conclude that with probability at least $1/2$ none of the $k$ walks takes more than $\tau$ steps. Repeating the argument that the probability that one walk take more than $m\tau$ steps is at most $2^{-m}$ gives the result.
	\end{proof}
	
	It can be checked that the bounds of Theorem~\ref{sequentialprecise} are (up to constant) potentially better than the bounds of Theorem~\ref{partialparupper} up to a $\log n$ factor.
	
	Bounds on the expected hitting time of sets can be obtained by analyzing return probabilities, in some situations these bounds are very tight. Since those bounds are more related to Markov chains properties than the  IDLA process, and in order to keep the analysis of the IDLA process as clean as possible, we do not provide those bounds here, but in Appendix~\ref{section:BoundsHitSet}. These bounds can be applied either in Theorem~\ref{partialparupper} and Theorem~\ref{sequentialprecise}, but also in directly for specific graph families.

	\subsection{Lower Bounds}
	\begin{thm}\label{lowerbound}
		Let $G$ be a connected $n$-vertex graph with maximum degree $\Delta$, then  $t_{seq}(G)=\Omega(|E|/\Delta)$. Hence in particular, $\Omega(n)$ is a lower bound for almost-regular graphs.
	\end{thm}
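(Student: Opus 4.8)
The plan is to bound the worst-case dispersion time from below by the length of a single, carefully chosen particle, and to extract the factor $|E|/\Delta$ from the elementary identity $1/\pi(w)=2|E|/\deg(w)\geq 2|E|/\Delta$ for the expected return time to a vertex.

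First I would isolate the \emph{last} particle. In Sequential-IDLA the $n$-th particle starts at the origin $v$ and performs a fresh random walk until it reaches the unique remaining unoccupied vertex $W$, settling there; hence its length $L_n$ is exactly the first hitting time of $W$. Since $W$ is determined by the first $n-1$ particles and is independent of the last walk, conditioning on $W$ gives $\Ex{L_n \mid W=w}=t_{hit}(v,w)$, so that
\[
\Ex{\tau_{seq}^v(G)} \;\geq\; \Ex{L_n} \;=\; \sum_{w}\Pro{W=w}\,t_{hit}(v,w) \;=\; \Ex{t_{hit}(v,W)}.
\]
It therefore suffices to exhibit an origin $v$ for which the expected hitting time to the last-occupied vertex is $\Omega(|E|/\Delta)$.

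Next I would locate a hard target. From the return-time identity $\sum_{u\sim w}t_{hit}(u,w)=2|E|-\deg(w)$ (equivalently, the effective resistance from $w$ to any other vertex is at least $1/\deg(w)\geq 1/\Delta$), a minimum-degree vertex $w$ satisfies $\max_u t_{hit}(u,w)\geq 2|E|/\deg(w)-1\geq 2|E|/\Delta-1$, so low-degree vertices are genuinely expensive to hit. Taking $v$ to be a well-connected (e.g.\ maximum-degree) origin, I would aim to show that with probability bounded away from $0$ the last-occupied vertex $W$ is such a low-degree vertex, for which $t_{hit}(v,W)=\Omega(|E|/\Delta)$. The mechanism I would exploit is the \emph{asymmetry} that hitting a low-$\pi$ vertex is slow while reaching a high-degree origin is fast; note that a bare commute-time bound is symmetric in $v$ and $w$ and does not separate the two directions, so this asymmetry is essential.

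The hard part will be controlling the distribution of $W$: I must rule out the degenerate possibility that the process almost always terminates at an easy, high-degree vertex close to $v$. The clean case is that of vertex-transitive graphs, where symmetry forces $W$ to be uniform over the non-origin vertices, so that $\Ex{t_{hit}(v,W)}=\frac{1}{n-1}\sum_{w}t_{hit}(v,w)=\frac{n}{n-1}\cdot K\geq \tfrac{n}{2}$ via the random-target identity and the bound $K\geq (n-1)/2$ on Kemeny's constant. In general I would argue instead that vertices of small stationary weight are visited rarely and hence remain unoccupied until the very end with constant probability; this is precisely the step where some mild control of the aggregate's shape enters, and it is the main obstacle. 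Granting this, the almost-regular corollary is immediate: there $|E|=\Theta(\Delta n)$, so $|E|/\Delta=\Theta(n)$ and the bound reads $t_{seq}(G)=\Omega(n)$.
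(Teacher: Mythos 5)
You correctly reduce to $\Ex{\tau_{seq}^v(G)} \geq \Ex{t_{hit}(v,W)}$, where $W$ is the last unoccupied vertex, but the proposal never establishes the one fact it needs: that $t_{hit}(v,W)=\Omega(|E|/\Delta)$ holds with probability bounded away from zero. You flag this yourself as ``the main obstacle'', and it cannot simply be granted: controlling where the IDLA aggregate finishes on an arbitrary graph is exactly the kind of shape information that is unavailable in general (the paper needs bespoke arguments of this type for the binary tree and the $2$-dimensional grid, and there it is the hard part of the analysis). Worse, the one case you present as clean is false: vertex-transitivity does not force $W$ to be uniform over the non-origin vertices, because only the stabilizer of the origin is a symmetry of the process. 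On the $4$-cycle with origin $0$, the occupied set is always an interval around $0$; after the second particle settles at a neighbour of $0$, a gambler's-ruin computation shows the third particle settles at the other neighbour with probability $2/3$, so the antipodal vertex is last with probability $2/3$, not $1/3$. Hence even the ``clean case'' of your argument fails as stated, and the general step remains entirely open.

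The missing idea is that one should choose the \emph{origin}, not the target. By Corollary 2.5 of \cite{lovasz1993random} there is an ordering of the vertices such that if $u$ precedes $v$ then $t_{hit}(u,v) \leq t_{hit}(v,u)$; take the origin $w$ to be the last vertex in this ordering, so that $t_{hit}(w,v) \geq t_{hit}(v,w)$ for every $v$. Then
$t_{hit}(w,v) \geq \tfrac{1}{2}\, t_{com}(w,v) = |E| \cdot R(w,v) \geq |E|/\Delta$
simultaneously for \emph{every} $v$, using the commute-time identity and $R(w,v) \geq 1/\deg(w) \geq 1/\Delta$. With this origin the last walk takes expected time $\Omega(|E|/\Delta)$ no matter which vertex ends up being $W$, so no control of the aggregate is needed at all; since $t_{seq}(G)$ is a maximum over origins, the theorem follows. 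Your diagnosis that the symmetric commute-time bound must be broken asymmetrically was correct, but you tried to break it through the target (which requires understanding the law of $W$), whereas the resolution breaks it through the origin, where Lov\'asz's ordering guarantees the favourable direction for all targets at once.
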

	\begin{proof}
		We will analyse the Sequential-IDLA process and lower bound the time it takes for the last walk to find a free site.
		
		Recall that for any pair of vertices $u,v \in V$, $t_{com}(u,v)=t_{hit}(u,v)+t_{hit}(v,u)$ is the commute time between $u$ and $v$.
		By \cite[Cor. 2.5]{lovasz1993random} there is an ordering of the $n$ vertices so that if $u$ precedes $v$, then $t_{hit}(u,v) \leq t_{hit}(v,u)$. Let us take the vertex $w$ as the origin of the dispersion process so that for any other vertex $v$, we have $t_{hit}(w,v) \geq t_{hit}(v,w)$. Hence for every vertex $v$,
		\[
		t_{hit}(w,v) \geq 1/2 \cdot t_{com}(w,v).
		\]
		Let $ R(u,v)$ be the effective resistance between $u$ and $v$ and note that $R(w,v) \geq 1/\deg(w) + 1/\deg(v) \geq 2/\Delta$. Hence $t_{com}(w,v) = 2|E| \cdot R(w,v) = \Omega(|E|/(\Delta+1))$ by the commute time identity \cite[Prop. 10.6]{levin2009markov}. It follows that, in expectation, the last walk in the Sequential-IDLA takes $\Omega(|E|/\Delta)$ steps.
	\end{proof}
	Theorem \ref{complete} shows this is tight up to constant when $G$ is the complete graph $K_n$. We also present a refined lower bound for trees. 
	
	\begin{thm}\label{treebdd}Let $T$ be any $n$-vertex tree, then $t_{seq}(T)\geq 2n-3. $
	\end{thm}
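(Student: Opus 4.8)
The plan is to prove the bound for \emph{every} choice of origin $r$, so that the maximum in $t_{seq}(T)=\max_v \Ex{\tau_{seq}^v(T)}$ can only help. The strategy is to isolate the very last particle in a Sequential-IDLA run and show that its expected length alone is already at least $2n-3$; since $\tau_{seq}^r(T)$ is the maximum length over all $n$ particles, this is enough.

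First I would analyse the last particle. When the $n$-th particle is released, exactly one vertex $U$ is still unoccupied and every other vertex is settled, so this particle simply performs a free simple random walk from $r$ until it first reaches $U$. Hence its expected length, conditioned on the value of $U$, is exactly $t_{hit}(r,U)$, and therefore $\Ex{\tau_{seq}^r(T)} \ge \Ex{t_{hit}(r,U)}$. It then remains to lower bound $t_{hit}(r,U)$, which requires understanding which vertex can be last.

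The key structural step, and the part I expect to be the real content of the proof, is to show that $U$ is always a \emph{leaf} of $T$. Root $T$ at $r$. For any non-root vertex $x$ and any proper descendant $y$ of $x$, the vertex $x$ separates $r$ from $y$, so every walk started at $r$ visits $x$ before $y$; in particular the walk that settles $y$ (the first ever to reach $y$) must have passed through $x$ while $x$ was \emph{already} occupied, since otherwise it would have settled at $x$ and never reached $y$. Thus every vertex is settled strictly before all of its descendants, so the last-settled vertex $U$ has no proper descendants. A non-root vertex with no children has its unique neighbour equal to its parent, i.e. $\deg(U)=1$, so $U$ is a leaf of $T$.

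Finally I would compute the hitting time to a leaf. If $\ell$ is a leaf with neighbour $w$, then $t_{hit}(\ell,w)=1$, and since $R(w,\ell)=\dist(w,\ell)=1$ on a tree, the commute-time identity used in the proof of Theorem~\ref{lowerbound} gives $t_{com}(w,\ell)=2|E|=2(n-1)$, whence $t_{hit}(w,\ell)=2(n-1)-1=2n-3$. As $w$ lies on every path to $\ell$, for an arbitrary start $u$ we get $t_{hit}(u,\ell)=t_{hit}(u,w)+t_{hit}(w,\ell)\ge 2n-3$. Applying this with $\ell=U$ shows $t_{hit}(r,U)\ge 2n-3$ deterministically, so $\Ex{\tau_{seq}^r(T)}\ge 2n-3$ for every origin $r$, and hence $t_{seq}(T)\ge 2n-3$. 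The only delicate point is the cut-vertex argument forcing $U$ to be a leaf; the remaining hitting-time estimate is short.
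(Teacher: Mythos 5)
Your proof is correct and follows essentially the same route as the paper: the last settled vertex must be a leaf, and the expected length of the last walk is bounded below by the hitting time across the pendant edge, which equals $2n-3$. The only differences are cosmetic --- you derive the value $2n-3$ via the commute-time identity rather than citing the Essential Edge Lemma of Aldous--Fill, and you spell out the separation/descendant argument showing the last vertex is a leaf, a claim the paper simply asserts.
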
	
	\begin{proof}
		If an IDLA process started from any vertex of $T$ the last vertex settled by must be a leaf. Call the last vertex $v$ which is connected to $T$ by one edge $\{u,v\}$. Thus the expected time taken by the last walk to settle is at least the expected time $t_{hit}(u,v)$ to cross the edge $\{u,v\}$. The Essential edge Lemma \cite[Lem. 5.1]{aldousfill} states that $H(u,v) = 2|A(u,v)|-1$ where $A(u,v)$ is the component of $T$ containing $u$ after the removal of $\{u,v \}$. Since $|A(u,v)| \geq n-1$, the proof is complete.
	\end{proof}
	Let $S_n$ be the $n$-vertex star and notice that $t_{seq}(S_n)= 2t_{seq}(K_n)\approx 2.6 n $ by Theorem \ref{complete}. This shows Theorem \ref{treebdd} is tight up to a small multiplicative constant.
	
	\begin{rem}\label{rem:lowbddcounter} It would be natural to hope the lower bound $t_{seq}= \Omega\left( t_{hit}\right)$ should hold since one would expect the vertices with largest hitting times to be explored later by the sequential process and thus contribute to the dispersion time. Proposition refutes this by exhibiting a graph where $t_{seq}$ is a $poly(n)$-factor smaller than $t_{hit} $.
	\end{rem} 
	For a graph $G$ let $\Phi$ be the conductance of $G$ and let $\lambda_2$ and $t_{\mix}$ be the second largest eigenvalue and mixing time associated with the lazy random walk on $G$ respectively. The following lower bound is tight up to a $\log n$ factor as witnessed by the cycle, Theorem \ref{cycle}.
	
	\begin{pro}\label{mixlbb}
		Let $G$ be a graph satisfying $\Delta =\BO{|E|/n }$ . Then there exits a $v\in V$ such that $\BOhm{n}$ walks in the Sequential-IDLA process from $v$ talk time $\BOhm{t_{\mix}}$ to settle w.h.p., consequently
		\[
		t_{seq}(G)= \Omega( t_{\mix})   = \Omega \left( \frac{1}{1-\lambda_2} \right)= \Omega \left( \frac{1}{\Phi} \right).
		\]

	\end{pro}
	\begin{proof}
		By the characterization of mixing times by hitting times of large sets \cite{PerSou}, for all reversible lazy random walks
		\begin{equation}\label{hitmixcharactor}
		t_{\mix} \leq c \max_{u, A: \pi(A) > 1/3} t_{hit}(u,A),
		\end{equation}
		where $c<\infty $ is a universal constant, which can be assumed to be greater than 1. Let $u$ and $A$ be a vertex and a set that together maximize the above expectation $t_{hit}(u,A)$. Let $r:=\Delta\cdot n/|E| <\infty $ and observe that $|A|\Delta > 2|E|/3 $ and thus $|A|> n/(3 r) $. Consider now a simple random walk of length $\ell:= t_{\mix}/(120\cdot r\cdot c)$. For every vertex $v \in V$, let $p_v$ be the probability that a random walk starting from $v$ hits the set $A$ within $\ell$ steps. Note that there must be at least one vertex $w \in V$ such that $p_w <1/(12\cdot r)$ since otherwise the expected time to hit $A$ is less than $t_{\mix}/(c\cdot 10)$ for all vertices $v$, contradicting \eqref{hitmixcharactor}. 
		
		Let $X$ be the number of walks from $w$ that take time less than $\ell$ to hit $A$ (if there were not allowed to settle before this). As these walks are independent it follows that $X $ is distributed as $\bin{n}{p_w}$, thus $\Ex{X} <n/(12r) $ and $ \Pro{X \geq n/(6r)} \leq e^{-\Omega(np_{w})} $ by the Chernoff bound. Thus w.h.p.\ at least $(1-1/(6r))n$ of the $n$ walks will take time at least $\ell$ to reach $A$, thus at least $n/(6r) $ walks which settling in $A$ take time at least $\ell$. Hence 
		\[
		t_{seq} = \Omega(t_{\mix}),
		\]
		proving the result. Then using the fact that $t_{mix}\geq \left(1/(1-\lambda_2) -1 \right)\log(1/e)$ \cite[Thm. 12.4]{levin2009markov} and then the fact that we need at least one step to mix gives $t_{mix} = \Omega( \frac{1}{1-\lambda_2})$. Cheeger's inequality \cite[Thm. 13.14]{levin2009markov}, which states $\frac{1}{1-\lambda_2} = \Omega( \frac{1}{\Phi})$, completes the proof.
	\end{proof}

	The following bound will be of use in Section \ref{coupling} as although rather weak it holds w.h.p. for any start vertex. The proof appears in a different context \cite{TightBounds2011} but we reproduce it here for completeness.
	
	\begin{lem}\label{lemma:whpLB}
		For any connected $n$-vertex graph and $v \in V(G)$, if $n$ is large enough, it holds that $\tau^v_{seq} > \frac{1}{14}\log n$ with probability at least $1-e^{-n^{1/2}}$.
	\end{lem}
	\begin{proof}
		Consider the following process: run $n$ independent random walks on $G$ starting from $v$, and we stop them at time $L = c \log n$, with $c = 1/14$. Denote by $\mathcal C$ the set of vertices that are hit by at least one of those random walks. A simple coupling argument shows that $\Pro{\mathcal C \neq V} \leq \Pro{\tau^v_{seq} > L}$, and thus we will prove that $\Pro{\mathcal C \neq V} \geq 1-e^{-n^{1/2}}.$
		
		For any of the $n$ walks, denote by $\mathcal C_i$ the set of vertices covered by the $i$-th walker in the first $L$ steps. Hence $\mathcal C = \cup_{i=1}^n \mathcal C_i$.  Denote $U = \{u \in V: \Pro{u \in \mathcal C_i} \geq \frac{3}{2}\frac{L}{n}\}$, and note $U$ is independent of $i$. Hence for any $i$,\begin{align}
		L \geq \Ex{|\mathcal C_i|} = \sum_{u \in V}\Pro{u \in \mathcal C_i} \geq \sum_{u \in U} \frac{3}{2}\frac{L}{n} \geq |U|\frac{3}{2}\frac{L}{n},\nonumber
		\end{align}
		therefore $|U|\leq \frac{2}{3}n$. Denote by $\mathcal D = V \setminus \mathcal C$ the set of uncovered vertices, then
		
		\begin{align}
		\Ex{|\mathcal D|} \geq \sum_{u \in V\setminus U} \Pro{u \in \mathcal D} \geq \sum_{u \in V \setminus U} (1-\Pro{u \in C_1})^n \geq \frac{3}{n}\left(1-\frac{3L}{2n}\right)^n \geq  \frac{n}{3}\cdot e^{ - 3L/(2-3L/n)},\nonumber 
		\end{align}
		where in the last step we use the bound $e^{-x/(1-x)} \leq 1-x$ for $|x|<1$. Hence, we deduce that $\Ex{|\mathcal D|} \geq \frac{n}{3} n^{-\frac{3c}{2}(1+o(1))} \geq \frac{1}{3}n^{1-3c}.$
		
		Finally, we will prove that with probability at least $1-e^{-n^{1/2}}$ it holds that $|\mathcal D|$ is greater than $\frac{1}2 \Ex{|\mathcal D|}$, and then $|\mathcal D|> 0$. Note that $|\mathcal D|$ depends on the trajectory of the $n$ random walks and changing one of them changes $|\mathcal D|$ in at most $L+1$ values. Therefore by the method of bounded differences we have
		
		\[
		\Pro{|\mathcal D|-\Ex{|\mathcal D|}<-\frac{1}{2}\Ex{|\mathcal D|}} \leq \exp\left(-\frac{\Ex{|\mathcal D|}^2}{2n(L+1)^2} \right)\nonumber\leq \exp\left(-\frac{\left(\frac{1}{3}n^{1-3c} \right)^2}{2n(L+1)^2} \right),\]recalling $L = c \log n$ gives that $\Pro{|\mathcal D|-\Ex{|\mathcal D|}<-\frac{1}{2}\Ex{|\mathcal D|}}$ is at most
		\[\exp\left( - \frac{n^{2-6c}}{18n (c\log n)^2(1+o(1))} \right)\leq\exp\left( - \frac{n^{1-13c/2}}{20 c^2} \right)\nonumber \leq \exp\left(-n^{1-7c}\right) =e^{-n^{1/2}}, \nonumber\]
		since $c = 1/14$. The result follows as $\Ex{|\mathcal D|} \geq \frac{1}3 n^{1-3/14}>2$.
	\end{proof}

	\section{Coupling and Stochastic Domination}\label{coupling}

	In this section we shall prove the following stochastic domination using a coupling.
	\begin{thm}\label{PStocS}Let $G$ be a finite graph and $v \in V(G)$. Then 
		$$ \tau_{seq}^v(G) \preceq \tau_{par}^v(G).$$
		
	\end{thm}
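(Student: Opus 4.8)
The plan is to construct an explicit coupling of the two processes on a common probability space and to show that, under this coupling, $\tau_{seq}^v(G) \le \tau_{par}^v(G)$ holds surely; stochastic domination then follows. I would realize both processes from the same randomness by placing at every vertex $u$ an infinite stack of i.i.d.\ uniformly random out-neighbours, and letting a particle leaving $u$ pop the top unused card of that stack. A short computation using the memorylessness of i.i.d.\ sequences shows that, for \emph{any} (possibly adaptive) schedule, each particle's trajectory is a simple random walk from $v$ and the $n$ trajectories are mutually independent; hence running Sequential-IDLA, respectively Parallel-IDLA, against these shared stacks reproduces the correct law of each process. The first dividend of this representation is the abelian property of Diaconis--Fulton \cite{DiaFul}: for fixed stacks the final aggregate (all of $G$) and the \emph{odometer} $u \mapsto \#\{\text{exits from } u\}$ are schedule-independent, so the total number of steps $\sum_u \mathrm{odo}(u)$, and in fact the whole multiset of directed edges traversed, is \emph{identical} in the two coupled runs. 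This is exactly the statement that the total number of jumps has the same law in both processes, and it is the backbone of the comparison of the maxima: with the sums of trajectory lengths pinned to a common value, the task reduces to showing that the sequential length profile $(L_1^{seq},\dots,L_n^{seq})$ is more balanced than the parallel one $(L_1^{par},\dots,L_n^{par})$, in the sense that $\max_i L_i^{seq} \le \max_i L_i^{par}$.

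I would then obtain the sequential run as a deterministic reorganization of the parallel run via the Cut \& Paste map, which re-threads the common multiset of traversed edges into a different collection of $n$ journeys, each of which starts at $v$ and stops at the first unoccupied vertex it meets (an ``interruption-free'' sequential journey), rather than continuing past every vertex lost to a competitor. Two facts must be verified. First, that the reorganized object is a legal Sequential-IDLA history on the same stacks, i.e.\ that it corresponds to processing the particles one at a time with the correct settling rule. Second, that the map is measure-preserving, so that the reorganized run genuinely carries the Sequential-IDLA law; the key input here is the exchangeability and independence of the per-particle walks established in the first step, which is what makes splicing futures at a shared location a measure-preserving operation, and what lets the (run-dependent) order in which particles are processed still produce the correct joint law.

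The main obstacle is proving that the maximum weakly decreases under this reorganization. One must be careful here, because the \emph{local} operation of swapping the futures of two particles that meet at an unoccupied vertex is measure-preserving but merely permutes their two lengths, leaving the maximum unchanged; since Parallel- and Sequential-IDLA genuinely differ (already the clique exhibits a constant-factor gap by Theorem~\ref{complete}), the decrease of the maximum cannot be produced by composing such length-preserving swaps and must instead come from a genuinely global feature of the rearrangement. The delicate point is that a long parallel journey accumulates the wasted steps of all the races it loses, and these wasted segments --- which do not individually begin at $v$ --- have to be rerouted through the origin and re-attributed to several shorter sequential journeys, using precisely the Eulerian/odometer structure of the shared edge multiset. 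I would make this quantitative by carrying an invariant along the sequence of cuts, for instance a potential equal to the current longest unfinished trajectory, and arguing that no cut ever creates a journey longer than the longest parallel journey; establishing this invariant through all cuts, while keeping the map a bijection, is the crux of the argument.
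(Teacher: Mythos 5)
Your framework is sound, and in fact it is essentially the paper's own coupling in different clothing: the Diaconis--Fulton stack representation realizes exactly the pairing that the paper constructs combinatorially via the Cut \& Paste bijection (Lemma~\ref{cor1}), and your appeal to abelianness for the equality of total step counts replaces the paper's observation that $\CP$ merely rearranges the multiset of transitions, so that paired histories have identical probability. The problem is that the one statement carrying all the weight --- that the maximum journey length can only decrease when the parallel run is re-threaded into the sequential one --- is never proved. You explicitly label it ``the crux'' and sketch only a plan (``carry a potential \dots argue that no cut ever creates a journey longer than the longest parallel journey''). As written, this is a proof outline with the key lemma missing, and it is precisely the lemma that distinguishes this theorem from a triviality, as your own remark about length-permuting swaps shows.

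Moreover, you have chosen the direction in which that lemma is genuinely hard. The paper proves monotonicity in the \emph{opposite} direction (Lemma~\ref{ordering}): it transforms a \emph{sequential} block into a parallel one via $\stp$, which reads cells in parallel order, and there the argument is nearly immediate --- a cut tail can only be pasted onto a row whose endpoint has not yet been read, hence whose endpoint lies in the same column or a later one, so the receiving row becomes at least as long as the row that was cut and the maximum cannot decrease. In your direction (reorganizing the parallel run via $\pts$), the analogous claim requires an invariant that survives all cuts, and the natural candidates fail: after a paste, cells are relocated to strictly earlier columns, and a row that still contains unread cells can later \emph{receive} a paste, lengthening the tails hanging off those unread cells; the naive invariant ``the endpoint of each unsettled vertex lies weakly before each of its unread occurrences'' is therefore not preserved, and one must instead track a combined quantity (endpoint length plus current tail length) through an induction that is distinctly more delicate than Lemma~\ref{ordering}. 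Alternatively your plan would follow if you proved that $\pts$ is the inverse of $\stp$, but the paper deliberately avoids needing that fact (it is stated without proof as a remark). To close the gap, either carry out that induction in full, or switch directions and prove the easy monotonicity for $\stp$ as the paper does; note also that your claim of mutual independence of the $n$ stopped trajectories should be weakened to independence of the underlying card sequences, since the stopping rules couple the particles.
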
An immediate corollary of this is the relation $\Ex{\tau_{seq}^v(G)} \leq \Ex{\tau_{par}^v(G)}$, we also prove the reverse inequality up to $\log n$ factors. 
	\begin{thm}\label{couplog}Let $G$ be a finite graph and $v \in V(G)$. Then 
		$$\Ex{\tau_{par}^v(G)} = \BO{ \Ex{\tau_{seq}^v(G)}\cdot\log  n }.$$
	\end{thm}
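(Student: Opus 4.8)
The plan is to bound $\tau_{par}^v(G)$ by decomposing the trajectory of the last particle to settle into at most $\lceil \log_2 n\rceil$ segments and showing that, via the coupling behind Theorem~\ref{PStocS}, each segment is stochastically dominated by a copy of $\tau_{seq}^v(G)$; summing the $O(\log n)$ contributions then yields the claim by linearity of expectation. Concretely, I would partition the run of the Parallel-IDLA into phases indexed by $j=\lceil \log_2 n\rceil,\dots,1$, where phase $j$ runs from the first moment at which the number of unsettled particles (equivalently, unoccupied vertices) drops below $2^{j}$ until it drops below $2^{j-1}$. Since this count only decreases and starts at $n$, there are at most $\lceil \log_2 n\rceil$ phases and $\tau_{par}^v(G)=\sum_j(\text{duration of phase }j)$, so it suffices to control each phase in expectation.

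The heart of the argument is to bound the expected duration of a single phase by $\BO{\Ex{\tau_{seq}^v(G)}}$. At the start of phase $j$ the unsettled particles sit at arbitrary positions and the set $S$ of unoccupied vertices is a complicated random set of size $k\in[2^{j-1},2^{j})$; the phase ends once $\lceil k/2\rceil$ of these particles have settled. I would couple this sub-process with a fresh sequential run: using the ``cut \& paste'' bijection restarted at the beginning of the phase together with the Markov property, one transfers the domination of Theorem~\ref{PStocS} to the conditional process so that the phase duration is dominated by the dispersion time of a Sequential-IDLA instance. The quantitative comparison is then supplied by the hitting-time-of-sets estimates underlying Theorems~\ref{partialparupper} and~\ref{sequentialprecise}, together with the mixing-time duality that makes the starting configuration irrelevant after $\BO{t_{mix}}$ steps; this is exactly the mechanism by which the two upper bounds differ by only a $\log n$ factor.

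The main obstacle I anticipate is precisely this mid-process restart: at the start of a phase the configuration is neither ``all particles at the origin'' nor ``empty aggregate,'' so Theorem~\ref{PStocS} does not apply verbatim, and one must argue that the worst such intermediate configuration is no slower than a clean sequential dispersion from $v$ on the full graph. The delicate step is establishing the required monotonicity of the coupling, namely that enlarging the occupied set or displacing the unsettled particles to worse start positions can only be absorbed into the domination; the cut \& paste bijection is designed to provide exactly such comparisons. Once this monotonicity is in place, combining the per-phase bound $\BO{\Ex{\tau_{seq}^v(G)}}$ over the at most $\lceil \log_2 n\rceil$ phases gives $\Ex{\tau_{par}^v(G)}\le \BO{\log(n)\cdot \Ex{\tau_{seq}^v(G)}}$, as required.
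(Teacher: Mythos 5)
There is a genuine gap, and it sits exactly where you flag it: the per-phase bound. Your phase decomposition is harmless bookkeeping, but the claim that the duration of a single phase of the Parallel-IDLA is bounded by $\BO{\Ex{\tau_{seq}^v(G)}}$ is the entire difficulty, and none of the tools you invoke can deliver it. First, the Cut \& Paste machinery cannot be ``restarted'' mid-process: the bijection of Lemma~\ref{cor1} is between $\operatorname{Par}_v^m$ and $\operatorname{Seq}_v^m$, i.e.\ between blocks in which \emph{every} row is a path starting at the common origin $v$; at the start of your phase $j$ the unsettled particles are scattered over the graph and the occupied set is an arbitrary random set, so there is no block space, no bijection, and no coupling defined for that conditional process. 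Second, the direction is wrong: the coupling behind Theorem~\ref{PStocS} yields $\tau_{seq}^v \preceq \tau_{par}^v$, i.e.\ it dominates the sequential time by the parallel one, so it cannot be used to dominate a piece of the parallel process by a sequential dispersion time --- that is precisely the inequality being proven. Third, the monotonicity you hope for (``worse start positions or a larger occupied set can be absorbed into the domination'') is not available and is genuinely suspect: Proposition~\ref{leastact} shows that natural least-action/monotonicity principles fail for these processes. Finally, Theorems~\ref{partialparupper} and~\ref{sequentialprecise} are both \emph{upper} bounds (on $t_{par}$ and $t_{seq}$ respectively); to convert a hitting-time bound on a parallel phase into a comparison with $\Ex{\tau_{seq}^v}$ you would need a matching \emph{lower} bound on $\Ex{\tau_{seq}^v}$ in terms of $t_{mix}$ and hitting times of large sets, which the paper does not have (and which it explains is hard, requiring shape-theorem-type control of the aggregate). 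Two upper bounds that happen to differ by $O(\log n)$ say nothing about the ratio of the true quantities.

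For contrast, the paper's proof never leaves the block formalism and needs no restart or monotonicity. It applies $\pts$ to the parallel block after permuting the rows by a uniformly random $\sigma$ (revealed on the fly), and tracks the last cell $\xi$ of the longest row, of length $l=\tau_{par}^v$. Each Cut \& Paste that hits the row currently containing $\xi$ sends $\xi$ to the end of a uniformly random \emph{unread} row, so the expected number of rows still to be read roughly halves at every such cut; hence $\xi$ moves only $O(\log n)$ times with probability at least $1/2$. Since cutting the longest row $X$ times leaves a segment of length at least $l/X$ intact in some row of the output, the resulting sequential block --- which, by the bijection, is distributed exactly as a Sequential-IDLA run --- has a row of length at least $l/(C\log n)$ on that event, giving $\Ex{\tau_{seq}^v(G)} \geq \Ex{\tau_{par}^v(G)}/(2C\log n)$ directly. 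If you want to salvage your plan, you would have to supply exactly this kind of direct argument; the phase decomposition by itself does not reduce the problem.
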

	
	We define the lazy Sequential/Parallel-IDLA to be the Sequential/Parallel-IDLA with the particles moving according to a lazy (instead of simple) random walk. Let $\tau_{L-seq}^v(G)$ be the dispersion time of the lazy Sequential-IDLA on $G$ starting from $v$, and $\tau_{L-par}^v(G)$ be the analogous quantity for the lazy Parallel-IDLA. The relation between the lazy and standard IDLA dispersion times is given in the following theorem.
	
	\begin{thm}\label{lazy2normal}Let $G$ and $v\in V(G)$. Then the following holds w.h.p.\ and in expectation\[\tau_{L-seq}^v(G) =\BT{\tau_{seq}^v(G)} \qquad \qquad \text{and}\qquad \qquad \tau_{L-par}^v(G) =\BT{\tau_{par}^v(G)}.\]
		Additionally, if there exits some $\ell = \lohm{\log n}$ such that $\Pro{\tau_{par}^v(G) \leq \ell} \leq 1/\ell $ then	
		\[
		\tau_{L-seq}^v(G) = (2+o(1))\cdot \tau_{seq}^v(G) \qquad \qquad \text{and} \qquad \qquad \tau_{L-par}^v(G) = (2+o(1))\cdot \tau_{par}^v(G),\]hold w.h.p.\ and in expectation.
	\end{thm}
	
	The proofs of the above theorems are based on a coupling between the Sequential and Parallel-IDLA processes. To construct this coupling we consider a (Parallel or Sequential) IDLA process on $G$ as an irregular 2-dimensional array $L$ where each element $L(i,j)\in V$. This array $L$ has $n$ rows representing the $n$ particles. Column $t$ represents time $t$, and thus $L(i,t)$ represents the vertex visited by walk $i$ at time $t$. We let $\rho_i$ denote the length of walk $i$, hence the index of each row $i$ goes from $0$ to $\rho_i$. We denote by $\mathcal{I}_L$ the set of all indices $(i,t)$ of the array $L$. 
	
	Given $(i,s), (j,t) \in \mathcal I_L$, we say that $(i,s)$ is smaller than $(j,t)$ in sequential order, written $(i,s) <_S (j,t)$ if either $(i<j)$ or $(i=j,s<t)$. Thus in sequential order, the block $L$ is read as \[L(1,0), L(1,1), \dots L(1,\rho_1),L(2,0),  \dots L(2,\rho_2),\dots , L(n,0), \dots ,L(n,\rho_n).\]
	Likewise we say that  $(i,s)$ is smaller than $(j,t)$ in parallel order, denoted by $(i,s) <_P (j,t)$ if either $(s<t)$ or $(s=t , i<j)$. So, in parallel order, the block $L$ is read as \[L(1,0), L(2,0),\dots, L(n,0),L(1,1),L(2,1), \dots, L(n,1), \dots ,L(1,r),L(2,r), \dots, L(n,r),\dots\] where if $r>\rho_i$ then $L(i,r)$ is empty so it is skipped.
	
	Note that if $L$ is a block representing a parallel or Sequential-IDLA the following property holds
	\begin{equation}\label{propA}
	\text{$L(i,\rho_i) \neq L(j,\rho_j)$ for each pair $i\neq j$.}
	\end{equation}
	If $L$ satisfies \eqref{propA} then $\{L(i,\rho_i): i \in [n]\} = V$ and the final element of each row is unique.

	A block $L$ satisfying \eqref{propA} represents a Sequential-IDLA process if and only if each row $i$ represents a path in $G$ from vertex $L(i,0)=v$ to $L(i,\rho_i)$ and for all $(i,t) \in \mathcal{I}_L$ \begin{equation}\label{seqdef}
	(i,t) \text{ is the first occurrence of vertex } L(i,t)\text{ in } L \text{ w.r.t.}<_S \text{ iff }t = \rho_i.
	\end{equation} This says that when $L$ is read in sequential-order the first time a new vertex is read it ends the current row. Similarly  a block $L$  satisfying \eqref{propA} is a realization of a Parallel-IDLA process if and only if  each row $i$ represents a path in $G$ from vertex $L(i,0)=v$ to $L(i,\rho_i)$ and and for all $(i,t) \in \mathcal I_L$ \begin{equation}\label{pardef}
	(i,t) \text{ is the first occurrence of vertex } L(i,t)\text{ in } L \text{ w.r.t.}<_P \text{ iff } t = \rho_i.
	\end{equation} 
	
	For a $2$-dim array $L$ we denote its total length (the work done) by $W(L)$, this is the total number of moves recorded by $L$ and thus $W(L):=\rho_1 + \cdots +\rho_n$.  Let $\operatorname{Seq}_v^m$, or $\operatorname{Par}_v^m$, denote the set of all sequential, respectively parallel, blocks representing realizations of IDLA starting from $v$ and total length $m$, i.e. $W(L)=m$. 
	
	To build the coupling between Sequential and Parallel-IDLA, we are going to use a series of ``Cut \& Paste'' transformations. Consider $(i,t) \in \mathcal{I}_L$, then define $\CP_{(i,t)}(L)$ as the block constructed by taking $L$ and cutting the cells $(i,t+1),\ldots, (i,\rho_i)$ and pasting it after the unique $(k,\rho_k)$ with $L(i,t) = L(k,\rho_k)$.

	\textbf{Example:} Represented below are $L$, a block on $V = \{1,2,3,4\}$, and $\CP_{(4,1)}(L)$ which is the result of applying the cut \& paste $\CP_{(4,1)}$ to $L$. 
	\begin{align*}	L&= \young(1,12,1223,121234)
	\qquad \qquad\quad  \CP_{(4,1)}(L)= \young(1,121234,1223,12) \end{align*}	
	While $\CP_{(1,0)}(L)= \CP_{(2,1)}(L) = \CP_{(3,3)}(L) = \CP_{(4,5)} = L$.
	Note that if $L$ satisfies property \eqref{propA}, then $L' = \CP_{(i,t)}(L)$ also satisfies \eqref{propA}. Property \eqref{propA} is an important invariant for our algorithms.
	\subsection{Algorithms}
	We propose two algorithms \textbf{StP} and \textbf{PtS}, formally specified by Algorithms \ref{StP} and \ref{PtS} below. The algorithm \textbf{StP} transforms a sequential process into a parallel and \textbf{PtS} transforms a parallel process into a sequential. The key component of both algorithms is the ``cut \& paste'' operation $\CP$.
	
	Both algorithms work as follows: a pointer moves through the input array $L$ in a fixed order and when the pointer sees a vertex label for the first time this label is added to the set $\mathcal{S}$ of seen vertices and a cut \& paste transform $\CP$ is applied to $L$ at this position before the pointer continues. The difference is that in \textbf{StP} the pointer explores columns then rows (i.e. in parallel order $<_P$), whereas \textbf{PtS} reads rows then columns (i.e. in sequential order $<_S$). 
	
	Broadly speaking the algorithms try to read the input array as if it was of the type specified by the output and if the input fails to have this form then it will edit it using the cut \& paste transform until it has the correct form.

	\medskip

	\noindent\makebox[\textwidth][c]{
		
		\begin{minipage}{\dimexpr.5\textwidth-.5\columnsep}
			\begin{algorithm}[H]\label{StP}
				\SetAlgoLined
				\KwResult{transforms a sequential array $L$ into a parallel array}
				
				$\mathcal S \leftarrow \emptyset$\;	$t \leftarrow 0$\;
				\While{$|\mathcal S| < n$}
				{
					
					\For{$i=1,\dots, n$}
					{
						
						\If{$(i,t) \in \mathcal I_{L}$ and $L(i,t) \not\in \mathcal S$}
						{
							$\mathcal S \leftarrow \mathcal S \cup \{L(i,t)\}$\;
							$L \leftarrow \CP_{(i,t)}(L)$\;
							
						}
						
					} 	$t \leftarrow t+1$\;
				}
				return $L$\; 
				\vspace{11.5pt}
				\caption{Sequential to Parallel ($\mathbf{StP}$)\!\!\!\!\!\!}
			\end{algorithm}
			\hfill
		\end{minipage}
		\begin{minipage}{\dimexpr.5\textwidth-.5\columnsep}
			\begin{algorithm}[H]\label{PtS}
				\SetAlgoLined
				\KwResult{transforms a parallel array $L$ into a sequential array}
				
				$\mathcal S \leftarrow \emptyset$\;
				\nl \For{$i = 1,\dots,n$}
				{
					$t \leftarrow 0$\;
					\nl\While{$(i,t) \in \mathcal I_L$}
					{
						\nl\If{$L(i,t) \not\in \mathcal S$}
						{
							\nl	$\mathcal S \leftarrow \mathcal S \cup \{L(i,t)\}$\;
							\nl$L \leftarrow \CP_{(i,t)}(L)$\;
							$\mathbf{exit(while)}$

						}	$t \leftarrow t+1$\;
						
					}
				}
				return $L$\;
				
				\caption{Parallel to Sequential ($\mathbf{PtS} $)\!\!\!\!}
			\end{algorithm}
	\end{minipage}}

	\medskip
	
	The set $\mathcal{S}=\mathcal{S}(L,k)$ stores the different values of $L(i,j)$ observed after $k$ iterations of the innermost loop. The algorithms terminate once they have scanned the whole array, this is the first time when $|\mathcal S| = n$. Sometimes they may apply $\CP_{(i,j)}$ with $j=\rho_i$, this leaves $L$ unchanged.

	\begin{lem}[Correctness and bijectivity of Algorithms \ref{StP} \& \ref{PtS}]\label{cor1} The following holds,
		\begin{itemize}
			\item $\pts$ is a bijection from $\operatorname{Par}_v^m$ to $\operatorname{Seq}_v^m$.
			\item $\stp$ is a bijection from $\operatorname{Seq}_v^m$ to $\operatorname{Par}_v^m$.
		\end{itemize}
	\end{lem}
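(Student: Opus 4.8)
The plan is to prove both bullet points at once by showing that $\pts$ and $\stp$ are mutually inverse bijections between $\operatorname{Par}_v^m$ and $\operatorname{Seq}_v^m$. I would split the argument into three parts: (a) invariants of a single $\CP$ and well-definedness of the algorithms, (b) correctness of the codomains, and (c) mutual invertibility.

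For part (a) I would first record the elementary properties of one step $\CP_{(i,t)}$. Writing $w=L(i,t)$ and letting $k$ be the unique row with $L(k,\rho_k)=w$ (unique by \eqref{propA}, since each vertex is the endpoint of exactly one row), the operation detaches the segment $L(i,t+1),\dots,L(i,\rho_i)$ and re-attaches it after $(k,\rho_k)$. Because the segment is re-attached at a cell labelled $w=L(i,t)$, every row stays a walk from $v$; the total length $m$ is unchanged; and the endpoint set stays $V$ (the pair $\{w,L(i,\rho_i)\}$ remains the endpoint set, only swapping which of rows $i,k$ realises each), so \eqref{propA} is preserved, and in fact the whole multiset of directed steps $\{(L(j,s),L(j,s+1))\}$ is preserved. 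Since each algorithm applies $\CP$ exactly when a vertex is read for the first time and the block always covers $V$, each algorithm performs exactly $n$ applications of $\CP$ (one per vertex), terminates, and maps the set of length-$m$ blocks from $v$ satisfying \eqref{propA} into itself.

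For part (b) I would prove by a loop invariant that $\pts(L)$ satisfies the sequential condition \eqref{seqdef} and $\stp(L)$ satisfies the parallel condition \eqref{pardef}. For $\pts$, the invariant is that after rows $1,\dots,i$ have been processed they end exactly at their first vertex that was unseen when read (in $<_S$), and any relocated tail is pasted onto a row of index $>i$; this is precisely \eqref{seqdef}. The key point is that the single cut made while scanning row $i$ truncates that row at its first new vertex, which is the sequential settling rule, and pushes the already-seen remainder forward without touching the finalized rows $<i$. The argument for $\stp$ and \eqref{pardef} is identical with $<_S$ replaced by $<_P$. I expect this bookkeeping to be routine, the one point needing care being an auxiliary invariant that a not-yet-processed row never has its current endpoint reappear earlier in the same row (true initially by the defining property, and preserved by each paste).

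Finally, part (c) is the crux: I would show $\pts\circ\stp=\mathrm{id}$ on $\operatorname{Seq}_v^m$ and $\stp\circ\pts=\mathrm{id}$ on $\operatorname{Par}_v^m$, which together with (b) forces both maps to be bijections. The engine is a single-step invertibility lemma: the cut $\CP_{(i,t)}$ (moving the post-$w$ tail of row $i$ onto the row ending at $w$) is undone exactly by $\CP_{(k,\rho_k)}$ applied to the result, where $k,\rho_k$ are the receiving row and its old endpoint, because after the first cut the truncated row $i$ now ends at $w$, so the second cut carries the same tail back. I would then match the (at most $n$) non-trivial cuts of $\stp$ with those of $\pts$ by the vertex $w$ being settled, using the lemma to see that paired cuts are inverse. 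The main obstacle is precisely this matching: the two algorithms scan in different orders ($<_P$ versus $<_S$), so their cut sequences are \emph{not} literal reverses of one another (for instance the origin is settled first in both), and cuts can cascade when a relocated tail is itself cut later. The technical heart is therefore to show that the interleaving of the non-trivial cuts is consistent — that at each stage the vertex $\pts$ is about to settle has a pending, not-yet-undone $\stp$-cut whose inverse does not disturb the remaining pairs — so that the forward and backward cut families telescope to the identity. Establishing this non-interference is where I expect essentially all the difficulty to lie; the second composition $\stp\circ\pts=\mathrm{id}$ then follows by the analogous argument with the two orders exchanged.
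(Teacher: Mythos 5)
Your parts (a) and (b) line up with the paper's proof: the paper likewise records that each $\CP$ preserves Property \eqref{propA}, total length, and path-validity, and then verifies the codomain conditions \eqref{seqdef} and \eqref{pardef} by the same ``once a row is cut at a first occurrence, nothing can later be pasted onto it'' reasoning (for $\pts$ because a paste can only come from a row of smaller index, for $\stp$ because a paste onto a settled row would require reading its endpoint label for the first time a second time in parallel order).

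Part (c), however, contains a genuine gap, and it sits exactly where you said you expect all the difficulty to lie. You reduce bijectivity to $\pts\circ\stp=\mathrm{id}$ and $\stp\circ\pts=\mathrm{id}$, whose proof requires matching and telescoping the cut sequences produced by the two algorithms, which scan in the different orders $<_S$ and $<_P$; you name this non-interference property but do not prove it, so the crux of your argument is missing. Your single-step inversion lemma is correct (cutting row $i$ at $(i,t)$ and pasting the tail onto the row ending at $w=L(i,t)$ is undone by cutting the receiving row at its old endpoint, since the truncated row $i$ now ends at $w$), but the cascading interaction of cuts across the two orders is precisely what makes mutual invertibility delicate --- indeed the paper remarks that $\stp$ has inverse $\pts$ and explicitly omits the proof because it is never used. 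The paper's route avoids this entirely: it proves that \emph{each} of $\pts$ and $\stp$ is injective by a short ``first difference'' argument (if $L\neq L'$, consider the first cell, in the relevant order, at which they differ; in every case --- both labels already seen, neither seen, or one seen and one not --- the processing of that cell fixes its value in the output, so the outputs differ there too), and then invokes finiteness: two injections $\operatorname{Seq}_v^m\to\operatorname{Par}_v^m$ and $\operatorname{Par}_v^m\to\operatorname{Seq}_v^m$ between finite sets force $|\operatorname{Seq}_v^m|=|\operatorname{Par}_v^m|$, hence both maps are surjective and thus bijections. To complete your write-up you should either supply the full telescoping argument, or replace part (c) by this injectivity-plus-counting argument, which needs no compatibility between the two cut sequences at all.
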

	\begin{proof}Observe that during the running of the $\textbf{PtS}$ and $\textbf{StP}$, Algorithms \ref{StP} \& \ref{PtS}, the only changes made to the input array $L$ are a sequence of cut \& paste transforms $\CP_{i_1,t_1},\CP_{i_2,t_2}\dots $. Since each cut \& paste transform preserves Property \eqref{propA} it follows that $\textbf{PtS}$ and $\textbf{StP}$ preserve \eqref{propA}. Likewise cutting \& pasting preserves total length, thus so do $\textbf{PtS}$ and $\textbf{StP}$. Recall that the operator $\CP_{(i,t)}$ cuts and pastes the random walk trajectory $(i,t+1), \ldots, (i,\rho_i)$ onto the unique $(k,\rho_k)$ with $L(i,t) = L(k,\rho_k)$. Thus row $k$ in $L' = \CP_{(i,t)}$ is a valid path from vertex $L(0,k)$ to $L(i,\rho_i)$. 	
		
		For $\textbf{PtS}$ we must check that if $L \in \operatorname{Par}_v^m$, then $\textbf{PtS}(L) \in \operatorname{Seq}_v^m$, i.e. $\pts(L)$ satisfies \eqref{seqdef}. Recall that the $\pts$ algorithm reads the input array $L$ in sequential order and when a vertex label is seen for the first time at some position $(i,j)$ it applies the cut \& paste transform $\CP_{(i,j)}$ and the pointer moves to the next row.	If $(i,j+1)$ is non-empty then $\CP_{(i,j)}$ pastes the remainder of row $i$ to some row $i'$ with endpoint value $L(i,j)$. Observe that $i'>i$ since $(i,j)$ is the first occurrence of $L(i,j)$ in sequential order. Thus each new vertex found w.r.t. $ <_S$ forms an endpoint as it is cut when it is first discovered and nothing else can be pasted onto that row later by the algorithm. This proves that $\pts(L)$ is a valid Sequential-IDLA block.

		Likewise for $\stp$ let $L \in \operatorname{Seq}_v^m$ and we check $\stp\left( L\right)$ satisfies \eqref{pardef}. Suppose when reading $L$ in parallel order $(i,j)$ is the first occurrence of $L(i,j)$, $\stp$ will apply $\CP_{(i,j)}$ and continue to read the array in parallel order. Position $(i,j)$ is now fixed as the end point of row $i$ as no later copy \& paste can alter this row. This holds since to paste something else onto row $i$ we would have to see vertex $L(i,j)$ for the first time (again) later in parallel order which cannot happen.
		
		For injectivity let $\mathbf{F}$ represent either of the maps $\pts,\stp$, and $L ,L'$ be distinct arrays both from $\operatorname{Par}_v^m$ or $\operatorname{Seq}_v^m$ respectively. Assume for a contradiction that $\mathbf{F}(L)= \mathbf{F}(L')$. Since $L\neq L'$ there is a first position $(i,j)$ at which they differ  w.r.t. $<_S$ or $<_P$, i.e. $L(i,j) \neq L'(i,j)$. It cannot be the case that $L(i,j)= \emptyset$ and $L' \neq \emptyset$, or vice versa, since otherwise the arrays must differ at position $(i,j-1)$ which occurs before $(i,j)$ in either ordering. Let $(i,j)$ be the current position when $\mathbf{F}$ is is running on $L$ and $L'$. If $L(i,j) \not\in \mathcal{S}(t,L)$ and $L'(i,j) \not\in \mathcal{S}(t,L')$ then $\CP_{(i,j)}$ is applied and the position $(i,j)$ is now fixed in both arrays, i.e. $\mathbf{F}(L)(i,j) \neq \mathbf{F}(L')(i,j)$, a contradiction. Similarly if $L(i,j) \in \mathcal{S}(t,L)$ and $L'(i,j) \in \mathcal{S}(t,L')$ then no transform is applied and the positions are fixed. Otherwise the element at $(i,j)$ is seen in one array and not in the other, i.e. $L(i,j) \not\in \mathcal{S}(t,L)$ and $L'(i,j) \in \mathcal{S}(t,L)$. This is a contradiction as $(i,j)$ is the first position at which $L$ and $L'$ differ.
		
		For bijectivity since $\stp:\operatorname{Seq}_v^m\rightarrow \operatorname{Par}_v^m$ and $\pts:\operatorname{Par}_v^m\rightarrow \operatorname{Seq}_v^m$ are both injections and $\operatorname{Seq}_v^m,\operatorname{Par}_v^m$ are finite it follows that $|\operatorname{Seq}_v^m|= |\operatorname{Par}_v^m|$. Thus $\stp,\pts$ are surjections.\end{proof}
	\begin{rem}
		One can prove $\stp$ has inverse $\pts$, we omit the proof as we do not use this fact.  
	\end{rem}

	\begin{lem}\label{ordering}
		Let $L \in \operatorname{Seq}_v^m$. Then $\max_{i\in \mathcal{I}_{L}} \rho_i \leq \max_{i\in \mathcal{I}_{\stp(L)}} \rho_i $.  
	\end{lem}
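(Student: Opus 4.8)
The plan is to track the maximum row length through the sequence of Cut \& Paste operations that $\stp$ performs, and to show that this maximum never decreases. Recall that $\stp$ reads $L$ in parallel order $<_P$ and, whenever it encounters the first occurrence of a vertex at a position $(i,t)$, applies $\CP_{(i,t)}$; every change made to the array is of this form, and each preserves Property \eqref{propA} (as used in Lemma~\ref{cor1}). For an array $L'$ write $M(L')$ for its maximum row length $\max_i \rho_i$. Since $\stp(L)$ is obtained from $L$ by a finite composition of such $\CP$ steps, it suffices to prove that $M$ is non-decreasing under a single step.

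Consider one step $\CP_{(i,t)}$ applied during the run, and set $v := L(i,t)$, the newly-seen vertex. If $t = \rho_i$ the operation leaves $L$ unchanged and there is nothing to prove, so assume $t < \rho_i$. The operation cuts the cells $(i,t+1),\dots,(i,\rho_i)$ from row $i$ and pastes them after the unique endpoint $(k,\rho_k)$ with $L(k,\rho_k) = v$; uniqueness is guaranteed by \eqref{propA} together with the fact that the set of row endpoints always equals $V$. Only rows $i$ and $k$ are affected: afterwards row $i$ has length $t$ and row $k$ has length $\rho_k + (\rho_i - t)$, while every other row is untouched.

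The crux is the inequality $\rho_k \geq t$. Indeed, since $(i,t)$ is the \emph{first} occurrence of $v$ with respect to $<_P$, no occurrence of $v$ precedes it in parallel order; in particular the endpoint $(k,\rho_k)$, which also carries value $v$, satisfies $(k,\rho_k) \geq_P (i,t)$. As we are in the case $(k,\rho_k) \neq (i,t)$, this gives $(k,\rho_k) >_P (i,t)$, which by definition of $<_P$ forces either $\rho_k > t$, or $\rho_k = t$ with $k > i$; in both cases $\rho_k \geq t$. Consequently the new length of row $k$ obeys $\rho_k + (\rho_i - t) \geq \rho_i$, so the step produces a row of length at least $\rho_i$, while the only row that was shortened is row $i$. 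Hence $M$ cannot have decreased: the new maximum is at least $\rho_k + \rho_i - t \geq \max(\rho_i,\rho_k)$, and all remaining rows are unchanged, so $M(\stp\text{-step applied to }L) \geq M(L)$. Composing over all steps of $\stp$ gives $M(L) \leq M(\stp(L))$, which is the claim.

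I expect the only genuinely delicate point to be the inequality $\rho_k \geq t$: it rests on locating the endpoint $(k,\rho_k)$ relative to $(i,t)$ in the parallel order purely from the ``first occurrence'' condition, and on the invariant that at every stage of the algorithm the row endpoints are exactly the vertices of $V$, so that $k$ is well defined. Everything else — that a single $\CP$ step changes only two rows, and the bookkeeping of their resulting lengths — is routine.
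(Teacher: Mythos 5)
Your proof is correct and follows essentially the same route as the paper's: the heart of both arguments is the observation that, because $\stp$ reads in parallel order and only triggers a Cut \& Paste at a first occurrence, the receiving endpoint $(k,\rho_k)$ cannot lie before the cut position $(i,t)$ in parallel order, so $\rho_k \geq t$ and the pasted row has length $\rho_k + (\rho_i - t) \geq \rho_i$. The paper phrases this as a proof by contradiction about the rows attaining the maximum, whereas you track the maximum directly through each $\CP$ step, but this is a presentational difference only.
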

	\begin{proof} Assume for a contradiction that $\max_{i\in \mathcal{I}_{L}} \rho_i > \max_{i\in \mathcal{I}_{\stp(L)}} \rho_i $. This means that each row attaining maximum length in $L$ must have a section cut and pasted to a row of shorter length by the \textbf{StP} algorithm. However the \textbf{StP} algorithm runs in parallel order and cannot paste onto a cell which it has already read. Thus any row suitable to receive the end of the current row must have its end point in the same column or a column to the right of the current one. This cannot decrease the length of the longest row.
	\end{proof}

	We now have what we need to prove that $\tau_{seq}^v(G) \preceq \tau_{par}^v(G) $ for any $G$ and $v \in V(G)$. 	
	\begin{proof}[Proof of Theorem \ref{PStocS}]
		By Lemma \ref{cor1} $\stp$ is a bijection between $\operatorname{Par}_v^m$ and $\operatorname{Seq}_v^m$. Thus we can pair every sequential process $L$ of total length $W(L)=m$ with a unique parallel process $L'$ of total length $W(L')=m$. Both $L$ and $L'$ visit the same vertices with the same frequency and in the same order, thus the probability of each vertex sequence of total length $m$ in either process is the same. This implies that the total lengths of the processes are distributed identically. 
		
		Lemma \ref{ordering} states that for this pair the longest row in $L'$ is at least as long as the longest row in $L$. Thus for any $k,m\geq 0$, \[\Pro{ \max\limits_{i\in \mathcal{I}_{L}} \rho_i \geq k \,\Big|\, W(L) =m } \leq \Pro{ \max\limits_{i\in \mathcal{I}_{L'}} \rho_i\geq k \,\Big|\, W(L') =m   }.\] This implies the result since $\tau_{seq}^v(G)$ and $\tau_{par}^v(G)$ are given by the length of the longest row in the sequential and parallel processes respectively.
	\end{proof}
	
	In the other direction we now prove $\Ex{\tau_{par}^v} = \BO{ \Ex{\tau_{seq}^v}\cdot\log n}$ for any $G$ and $v \in V$. 	
	\begin{proof}[Proof of Theorem \ref{couplog}.]
		Let $L$ be a Parallel-IDLA block and $\sigma$ be a random permutation of $\{2,\ldots, n\}$. Let $\sigma(L)$ be the block that results from permuting the rows of $L$ using $\sigma$. The block $\sigma(L)$ represents a Parallel-IDLA process where conflicts between particles are solved by giving priority to particles with least value of $\sigma(index)$ (instead of least $index$, as per the definition of Parallel-IDLA).  Also, for simplicity we fix $\sigma(1) = 1$.  Note that $L$ and $\sigma(L)$ have the same rows, and thus the maximum row-length is the same in both blocks. We remark that $\pts$, Algorithm \ref{PtS}, still produces a valid sequential array even if the input is $\sigma(L)$ instead of $L$. 
		
		Let $L$ be an arbitrary parallel array and consider a run of $\pts$, Algorithm \ref{PtS}, on $\sigma(L)$ where we do not reveal $\sigma$ in advance. Instead we reveal the permutation $\sigma$ row by row as $\pts$ reads the array in sequential order (in other words, instead of running $\pts(\sigma(L))$, we equivalently run $\pts(L)$ but we read rows in random order, starting with row $1$ $(=\sigma(1))$ of $L$, and then rows $\sigma(2), \sigma(3), \ldots, \sigma(n)$. This is equivalent to replacing $i$ by $\sigma(i)$ in lines 1-5 of Algorithm \ref{PtS}). Note that the Cut \& Paste operation is unaffected by not revealing the order of the rows. This holds because the Cut \& Paste transform only pastes behind unread rows, independent of their location in the array $L$ and what is more, there is only one row where we can paste a cut section by property \eqref{propA}. Consider the largest row (or choose one arbitrarily if there is more than one) in the original block $L$. We shall paint this row red and call the last cell $\xi$. During the running of $\pts(L)$ the marked cell $\xi$ moves from row to row because of the Cut \& Paste operations. Here is the key observation: If $\ell$ is the length of the original red row and $\xi$ moves no more than $N$ times then in the output array $\pts(L)$ has a row of length at least $\ell/N$. This holds because the red row was partitioned $N$ times and thus one of the pieces has to have length at least $\ell/N$ 
		
		Let $i_k$ be the iteration (how many rows we have read) by the $k^{th}$ time $\pts$ reads a row containing the marked cell $\xi$. When we read a row which contains $\xi$ for first time in iteration $i_1$, we may apply a Cut \& Paste somewhere in this row (if not we are done). If so $\xi$ would find itself at the end of an unread row $x_2$ of $L$, which will be read in a (random) iteration $i_2$, i.e. $\sigma(i_2) = x_2$. Note $i_2$ is a uniform random value in $\{i_1+1,\ldots, n\}$. In iteration $i_2$, we read the row with the marked cell and again, the algorithm might cut and paste this row behind an unread row $x_3$ which will be read at some time $i_3$, which is again uniformly random in $ \{i_2+1,\ldots, n\}$, and so on. Each time we make a cut and paste the index $i_j$ of the recipient row will be in the latter half of the list $\{i_j+1,\dots ,n\} $ with probability $1/2$. Thus since $\pts$ works through this list in order the expected length of the list of possible positions for the next value $i_{j+1}$ halves every iteration. We cannot keep halving this list indefinitely because either at some point a row ended by $\xi$ is not cut or $\xi$ is in the last row to be read (which is never cut). Thus the number of times $\xi$ moves (i.e. expected times the longest row is cut) is at most $C\log n$ with probability at least 1/2 by Markov's inequality. Denote by $X$ the (random) number of times we cut a row containing the marked cell $\xi$. Let $\ell$ be the length of the longest row of $L$, and $\ell'$ the random variable representing the length of the longest row of $\pts(L)$ using a random permutation $\sigma$.  Conditional on cutting $L$'s longest row $X$ times, we have must have at least one row of length $\ell/X$ once the algorithm has terminated. Thus, given the block $L$ with largest row $\ell$, we have
		$$\Ex{\ell'\,|\,L} > \Ex{\ell'\,|\,L,X\leq C \log n}\cdot \frac{1}{2}\geq \frac{\ell}{2C\log n}.$$
		By taking expectation over all blocks $L$ generated from a Parallel-IDLA with a random $\sigma$ we conclude the result.
	\end{proof}

	\subsection{Uniform-IDLA}
	
	Recall that in the Sequential-IDLA we run the walks one by one in order and walk $i+1$ starts only after walk $i$ has settled, while in the Parallel-IDLA all particles walk simultaneously until they settle, breaking ties by settling the particle with smallest index. In either Sequential or Parallel we are interested in the longest walk. Another natural way to run the IDLA process is in uniform order: we choose a random unsettled particle and move it to a random neighbouring vertex which it settles on if unoccupied. We call this process the Uniform-IDLA. This process can be seen as lying between the Sequential and Parallel-IDLA models. To sample from the Uniform-IDLA process, we first consider an infinite sequence $R = (R_i)$ where the $R_i$s are independent random variables sampled from $\{2,\ldots, n\}$.
	Then we run the Uniform-IDLA as follows: First particle $1$ settles at the origin, so the origin is occupied. Then, at each time-step $t \geq 1$, if particle $R_t$ is unsettled, it moves to a random neighbour, otherwise it stays in its current location. If such neighbour is not occupied, particle $R_t$ settles on it and the vertex is now occupied.
	
	Clearly for some sequences $R$ the process may never terminate, for example $R=(1,1,1,\dots )$. We say that an sequence $R$ on $n$ indices is index-repeating if for any index $i\in \{1,\dots,n \}$ and any $T\geq 0$, there exits some $t\geq T$ such that $R_t=i$. 
	
	\begin{rem}\label{a.s.index}For any fixed $n$ and any fixed distribution $\mathcal{D}$ on the indices with full support, the $R$ obtained by sampling indices according to $\mathcal{D}$ will be index-repeating almost surely.
	\end{rem}
	
	Given an index-repeating ordering $R$, we can find a bijection between the Uniform-IDLA and Parallel-IDLA. An $R$-block is defined in the same fashion as a parallel block, i.e. $L(i,j)$ represents the position of the $i$-th particle after $j$ jumps, but additionally, we associate to every $(i,j) \in \mathcal I_L$ an integer $T(i,j)$. This $T$ is called the timing array and defined as $T(i,j) = t$ if $R_t = i$ for $j$-th time and $T(i,0) = 0$ for all particles $i$. Note that using the block and timing array we can reconstruct the uniform process as we have not only the paths but the time-steps when the particles moved. Whenever we speak of an $R$-block we shall assume that $R$ is index repeating.
	
	The bijection between an $R$-block and a parallel block is defined algorithmically in the same fashion as before. To transform an $R$-block into a parallel block we just apply $\stp$, Algorithm \ref{StP}, to the $R$-block oblivious to $R$ since $\stp$ reads in parallel order. However to transform a parallel block into an $R$-block, we must read the block in the order given by $T(i,t)$ (i.e. read the block with smallest value $T(i,t)$, then the second smallest, etc..) and apply $\CP_{i,j}$ whenever the vertex $L(i,j)$ is read for first time. It is very important that now when applying the Cut \& Paste operation we move not only the cells containing a portion of the path but also the times $T(i,t)$ associated to those cells, i.e. if cell $(i,t)$ moves to $(j,s)$ then $T(j,s)$ gets the value of $T(i,j)$, while $T(i,t)$ is left undefined. Pseudo-code for the procedure we have just described is given in Algorithm \ref{PtU}.
	
	\begin{algorithm}\label{PtU}
		\SetAlgoLined
		\KwResult{transforms a parallel array $L$ and order sequence $R$ into a $R$-Uniform array}
		
		$\mathcal S \leftarrow \emptyset$\;
		$C\leftarrow \text{list of cells $(i,j)$ ordered  by  $T(i,j)$ in increasing order}$\;
		$k \leftarrow 0$\;
		\While{$|\mathcal S|<n$}
		{	
			$k \leftarrow k+1$\;
			\nl	$(i,j) \leftarrow C(k)$\;
			\nl	\If{$L(i,j) \not\in \mathcal S$}
			{
				$\mathcal S \leftarrow \mathcal S \cup \{L(i,j)\}$\;
				$L \leftarrow \CP_{(i,j)}(L)$\;

			}
		}
		return $L$\;
		
		\caption{Parallel to $R$-Uniform ($\mathbf{PtU}_R $)\!\!\!\!}
	\end{algorithm}

	Let $\operatorname{Unif}_{R,v}^m$ be the set of all Uniform-IDLA blocks with ordering $R$ starting from $v$ with total number of steps $m$. Then, using similar arguments to the sequential-parallel case we obtain.
	\begin{thm}\label{thm:UtP}
		For any fixed index-repeating sequence $R$ on $n$ indices, there is a bijection between $\operatorname{Unif}_{R,v}^m$ and $\operatorname{Par}_v^m$. Moreover the number of steps taken by the longest walk of the Uniform-IDLA is stochastically dominated by the number of steps in the longest walk of the Parallel-IDLA.   
	\end{thm}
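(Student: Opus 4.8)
The plan is to mirror the proof of Theorem~\ref{PStocS}, replacing the pair $(\stp,\pts)$ by the pair $(\stp,\ptu_R)$ and reusing Lemma~\ref{ordering} essentially verbatim. First I would establish the bijection. Since $\stp$, Algorithm~\eqref{StP}, reads a block in parallel order $<_P$, it is insensitive to the timing data $T(i,t)$ attached to an $R$-uniform block; applying it to an element of $\operatorname{Unif}_{R,v}^m$ therefore runs exactly as in Lemma~\ref{cor1}, and the same argument shows that the output satisfies Property~\eqref{pardef}, hence lies in $\operatorname{Par}_v^m$, and that total length is preserved. For the reverse direction I would analyse $\ptu_R$, Algorithm~\eqref{PtU}, which reads a parallel block in the order dictated by $T$ and applies $\CP$ at each first occurrence of a vertex label. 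The key point to verify is that the output is a legal $R$-uniform block: once a cell $(i,t)$ is the first occurrence of its label in $T$-order and $\CP_{(i,t)}$ is applied, that cell becomes an endpoint and can never be pasted onto again, since a later paste onto its row would require reading $L(i,t)$ as a first occurrence a second time, which cannot happen. This is the exact analogue of the endpoint-fixing argument for $\stp$ and $\pts$, with ``parallel/sequential order'' replaced by ``$T$-order''.

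The only genuinely new bookkeeping is that $\CP$ must now transport the timing stamps together with the path cells: when $\CP_{(i,t)}$ moves the tail $(i,t+1),\dots,(i,\rho_i)$ onto a row ending at $(k,\rho_k)$, the associated values of $T$ must travel with their cells while the stamp at the cut point is released. I would check that this keeps $T$ consistent with the fixed ordering $R$, so that the reconstructed block genuinely encodes the Uniform-IDLA run driven by $R$. Injectivity of both maps then follows exactly as in Lemma~\ref{cor1}: at the first position (in the relevant reading order) where two inputs differ, both algorithms either fix that position identically or make incompatible $\CP$ decisions, producing distinct outputs. As $\operatorname{Unif}_{R,v}^m$ and $\operatorname{Par}_v^m$ are finite, two injections between them force equal cardinality, so both maps are bijections.

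For the stochastic domination I would argue as in Theorem~\ref{PStocS}. Fixing $R$, the bijection $\stp$ pairs each $R$-uniform realization $L$ with a parallel realization $\stp(L)$ of the same total length that visits the same vertices with the same frequencies and in the same order; since every step in either process is a uniform move to a neighbour, paired realizations have identical probability, so the two total-length distributions coincide. The proof of Lemma~\ref{ordering} uses only that $\stp$ reads in parallel order and cannot paste onto an already-read cell, so it applies unchanged to any block read by $\stp$, giving $\max_{i}\rho_i(L)\le \max_i\rho_i(\stp(L))$. Hence, conditioning on total length $m$,
\[
\Pro{\max_i \rho_i(L) \geq k \,\Big|\, \text{length}=m} \leq \Pro{\max_i \rho_i(\stp(L)) \geq k \,\Big|\, \text{length}=m},
\]
which yields $\tau_{Unif,R}^v(G) \preceq \tau_{par}^v(G)$; integrating over a random ordering $R$ preserves the domination.

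The main obstacle I anticipate is not the domination (which is a transcription of the previous proof) but the correctness of $\ptu_R$: one must confirm that reading in $T$-order and carrying the timestamps through each $\CP$ really reproduces the uniform settling rule, namely that a particle settles precisely at the first moment in $R$-order at which it reaches a still-unoccupied vertex. Verifying this invariant---that $T$-order first occurrences correspond exactly to settling events in the Uniform-IDLA---is where the argument differs from, and is slightly more delicate than, the sequential/parallel case.
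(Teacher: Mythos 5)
Your proposal is correct and follows essentially the same route as the paper's (very terse) proof: the paper likewise obtains the bijection from correctness and injectivity of $\stp$ and $\mathbf{PtU}_R$ exactly as in Lemma~\ref{cor1}, and derives the stochastic domination by running $\stp$ and invoking Lemma~\ref{ordering}, noting---as you do---that both $\stp$ and that lemma's proof are oblivious to the ordering/timing data of the input block. Your additional care about transporting the timestamps through each $\CP$ and about $T$-order first occurrences matching settling events is exactly the bookkeeping the paper's definition of $\mathbf{PtU}_R$ builds in, so there is no gap.
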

	\begin{proof}
		The bijection follows from injectivity and correctness of $\stp$ and $\mathbf{PtU}_R$ (as in Theorem \ref{cor1}). Then as in the proof of Theorem \ref{PStocS} we run $\stp$ and apply Lemma \ref{ordering}. This Lemma still applies as $\stp$ is oblivious to the ordering of the input array.  
	\end{proof}
	Observe however that the dispersion time of the Uniform array is not determined purely by the number of steps/length of the longest row but by the values $T(i,j)$ of the timing array.

	\subsection{Continuous-Time IDLA}
	
	In this section we consider continuous-time versions of the Sequential and Uniform-IDLA process. By this we mean running these IDLA processes with random walks with exponential rate $1$ jumps. We shall need the following concentration result.   
	\begin{lem}\label{Erlconc}Let $X$ be an $\Erl{n}{\lambda} $ random variable and $Y=\sum_{i=1}^{n}Y_i$ where $Y_i$ are independent $\geo{p} $ random variables. Then for any $\delta >0$ and $0<\varepsilon<1 $ ,
		\begin{enumerate}[(i)]
			\item \label{itmerl} $\displaystyle \Pro{X \geq \frac{(1+\delta)n}{\lambda}  } \leq e^{-\frac{\delta n}{2}}\;\;\;\;\,\qquad$ and $\qquad \displaystyle \Pro{X \leq \frac{(1-\varepsilon)n}{\lambda}  } \leq e^{-\frac{\varepsilon n}{2}}.$ 
			\item \label{itmsumgeo} $\displaystyle \Pro{Y\geq \frac{(1+\delta)n}{p}} \leq e^{-\frac{\delta^2n}{2(1+\delta)}} \qquad $ and $\qquad \displaystyle \displaystyle \Pro{Y\leq \frac{(1-\varepsilon)n}{p}}  \leq e^{-\frac{\varepsilon^2 n}{2(1-2\varepsilon/3) }}.$ 
		\end{enumerate} 
	\end{lem}
	\begin{proof}\textit{Item \eqref{itmerl}}: If $X$ is  $\Erl{n}{\lambda} $ then $ \Ex{X} = n/\lambda$  and $ \Ex{e^{tX}}= \left(1-t/ \lambda \right)^{-n}$ for all $t<\lambda$. Now by Markov's inequality for any $t<\lambda$,
		\[ \Pro{X\geq (1+\delta) \mu } \leq e^{-t(1+\delta) \mu }\Ex{e^{tX}}\leq e^{-t(1+\delta)n/\lambda }\left(1-t/ \lambda \right)^{-n}\leq e^{-t(1+\delta)n/\lambda } e^{tn/\lambda } \leq e^{-\delta tn/\lambda }. \]  
		By considering $-X $ one can also show $\Pro{X\leq  (1-\varepsilon) \mu }\leq e^{-\varepsilon tn/\lambda } $ provided $\varepsilon<1 $. Since $t <\lambda  $ was arbitrary the result follows by choosing $t=\lambda/2$.
		
		\noindent\textit{Item \eqref{itmsumgeo}}: Following \cite{GeoIneq}, if $Y\geq k\Ex{Y}$ then we have less than $n$ successes in $k\Ex{Y}$ Bernoulli trials with success probability $p$. Thus $\Pro{Y\geq (1+\delta)n/p}\leq\Pro{\bin{(1+\delta)n/p}{p}< n}$ and  \[\Pro{\bin{(1+\delta)n/p}{p}< n} = \Pro{\bin{(1+\delta)n/p}{p}< (1+\delta)n - \delta n} \leq e^{-\frac{\delta^2n}{2(1+\delta)} } , \]by \cite[Thm. 3.2]{ChernSurvey}. Similarly for the lower bound \[\Pro{Y\leq  (1-\varepsilon )n/p}\leq\Pro{\bin{(1-\varepsilon)n/p}{p}\geq  n}\leq e^{-\frac{\varepsilon^2n^2}{2((1-\varepsilon)n+\varepsilon n/3 ) }}  = e^{-\frac{\varepsilon^2 n}{2(1-2\varepsilon/3) }}. \] 
	\end{proof}
	
	For the Sequential-IDLA it is easy to consider its continuous-time analogue, the ContSeq-IDLA, we just have random walks that jump at times given by a Poisson process of intensity $1$. Also, we can easily sample from the ContSeq-IDLA by sampling a standard (discrete time) Sequential-IDLA and then considering independent exponential times of mean $1$ between the jumps. Let $\tau_{c-seq}^v(G)$ be the time it took to the slowest particle to settle in the ContSeq-IDLA from $v \in V$. 
	\begin{thm}\label{c-seq} Let $G$ and $v\in V(G)$. Then
		\begin{enumerate}[(i)]
			\item \label{itm:co1}$\displaystyle \tau_{c-seq}^v(G) =\BT{\tau_{seq}^v(G)}$ holds w.h.p.\ and in expectation.  
		\end{enumerate}If in addition $\tau_{seq}^v(G)= \lohm{\log n}$ then	
		\begin{enumerate}[(i)]\setcounter{enumi}{1}
			\item \label{itm:co2}  $\displaystyle{\tau_{c-seq}^v(G) = (1+o(1))\cdot \tau_{seq}^v(G)}$ holds w.h.p.\ and in expectation.\end{enumerate}
	\end{thm}
	\begin{proof}We sample a ContSeq-IDLA by sampling $L$, a Sequential-IDLA, and a $\expy{1}$ random variable for each walk step in $L$. Thus conditional on $L$ walk $i$ in the ContSeq-IDLA has length $\Erl{\tau_i}{1}$. Let $\mathcal{E_\ell}$ be the event that $L$ contains at least one row of length at least $\ell\geq 1 $.
		
		For the upper bound conditional on $\rho_*$ being length of the longest row of our sampled Sequential-IDLA we can stochastically dominate the length of any walk in the ContSeq-IDLA by an independent $\Erl{\rho_*}{1}$ random variable. Observe that $\Pro{\tau_{c-seq}^v>\rho_* + \delta \rho_*\mid  \rho_* } \leq   n\cdot  e^{-\delta \rho_*/2}$ by Lemma \ref{Erlconc} \eqref{itmerl}. Thus conditional on $\mathcal{E_\ell}$ we can take $\delta =(4\log n)/\ell$ so w.p. $1-\lo{1/n}$ at most $4\log n$ extra steps are taken by any continuous walk. This gives \begin{equation}\label{upwhp}\Pro{\tau_{c-seq}^v\leq (1+4\log( n) /\ell)\tau_{seq}^v} \geq 1 - \Pro{(\mathcal{E}_\ell)^c}-\lo{1/n}. \end{equation} Let $\rho_*\geq 1$ be the length of the longest row of $L$, and observe that $\Pro{\tau_{c-seq}^v>\rho_* + i\log n \mid \rho_*}\leq ne^{-(i\log n)/2} $ follows by taking $\delta =(i\log n)/\rho_* $  in Lemma \ref{Erlconc} \eqref{itmerl}. Thus   \begin{align}
		\Ex{\tau_{c-seq}^v \mid \rho_*  }&\leq \rho_* +  2\log n + (\log n)\cdot \sum_{i=2}^{\infty}ne^{-(i\log n)/2} =\rho_*+\BO{\log n }.\label{eqn:expbdd2}
		\end{align} Observe that $\Ex{\rho_*} =  \Ex{\tau_{seq}^v }$ as $\rho_* $ is the longest row of $L$, thus $\Ex{\tau_{c-seq}^v }\leq \Ex{\tau_{seq}^v}+\BO{\log n }$.
		Note by definition that if $\Pro{\mathcal{E}_\ell}=1-\lo{1}$ then $\Ex{\tau_{seq}^v}\geq (1-\lo{1})\ell$. Lemma~\ref{lemma:whpLB} states that for any graph and any start vertex,  $\Pro{\mathcal{E}_{(\log n )/14 } }=1-\lo{1} $. Thus the upper bound in \eqref{itm:co1} holds in expectation. The upper bound in \eqref{itm:co2} follows similarly assuming $\Pro{\mathcal{E}_{\omega(\log n )} }=1-\lo{1} $. By \eqref{upwhp} the upper bound in cases \eqref{itm:co1} \& \eqref{itm:co2} also hold w.h.p.. 
		
		For the lower bounds take one walk from $L$ of the maximum length $ \rho_*$ and consider its length in the ContSeq-IDLA process. This has Gamma distribution $\Erl{\rho_*}{1}$ and thus  \[\Pro{\tau_{c-seq}^v< \rho_{*} - \sqrt{\ell}\; \Big|\; \rho_{*}}\cdot \mathbf{1}_{\mathcal{E}_\ell} \leq  e^{-\sqrt{\ell}/2}\] by Lemma \ref{Erlconc} \eqref{itmerl}. The w.h.p.\ lower bounds for \eqref{itm:co1} and \eqref{itm:co2} follow by taking expectations of the above, since in both cases $\ell =\Omega( \log n )$ and $\Pro{ \left(\mathcal{E}_\ell\right)^c} = \lo{1}$. This holds in expectation also.
	\end{proof}

	It is natural to consider the continuous-time version of the Uniform-IDLA, we call this the CTU-IDLA. In this process each particle has an exponential clock with rate $1$. Then, as long as the particle is not settled, when the clock rings the particle moves to a random neighbour and settles if possible. Note that this is equivalent to running the discrete-time Uniform-IDLA with a uniformly random sequence $R$ but waiting an amount of time distributed $\expy{1/(n-1)}$ between each step in $R$ (recall particle $1$ occupies the origin and $R_t$ takes values in $\{2,\ldots, n\}$). Alternatively, we can sample the CTU-IDLA by using $\mathbf{PtU}_R $, Algorithm \ref{PtU}. First, sample a $L$, a (discrete-time) Parallel-IDLA, then run Algorithm \ref{PtU} but using a list $C$ built from a timing array $T$ populated as follows: Set $T(i,0)=0$ for each $i$. Then let $T(i,j+1)= T(i,j)+X_{i,j}$ where $\{X_{i,j}\}_{i \in [n],j\in \mathbb{N}}$ are independent $\expy{1}$ random variables. We shall name this procedure $\mathbf{PtU}_C$. This procedure can also be seen as running Algorithm \ref{PtU} but instead of using the list $C$ to choose the next cell $(i,t)$ (line 1), each row of the block has a exponential clock of mean $1$. When the clock of row $i$ rings, the algorithm chooses the first unread cell of row $i$ (if it exists), and proceeds with line $2$. One can show this algorithm is (almost surely) correct due the bijection between $\operatorname{Unif}_{R,v}^m$ and $\operatorname{Par}_v^m$ for a fixed ordering $R$ established in Theorem~\ref{thm:UtP} and Remark \ref{a.s.index}. Let $\tau_{c-unif}^v$ be the time it takes the CTU-IDLA started from $v$ to settle all particles. 
	
	\begin{thm}\label{conttimepar}  Let $G$ be a connected graph and $v\in V(G)$. Then
		\begin{enumerate}[(i)]
			\item \label{itm:cu1}$\displaystyle{\tau_{c-unif}^v(G) =\BT{\tau_{par}^v(G)}}$ holds w.h.p.\ and in expectation.  
		\end{enumerate}If in addition $\tau_{par}^v(G)= \lohm{\log n}$ then	
		\begin{enumerate}[(i)]\setcounter{enumi}{1}
			\item \label{itm:cu2}  $\displaystyle{\tau_{c-unif}^v(G) = (1+o(1))\cdot \tau_{par}^v(G)}$ holds w.h.p.\ and in expectation.\end{enumerate}
	\end{thm}
	\begin{proof}
		We can sample a CTU-IDLA as described above by sampling $L$, a Parallel-IDLA, and running $\ptu_C$ on $L$. By Theorem \eqref{thm:UtP} the longest row of $\mathbf{PtU}_C(L)$ is no longer than the longest row of $L$, thus conditional on $L$ walk $i$ in the CTU-IDLA has length $\Erl{\tau_i}{1}$. The proofs of the upper bounds in \eqref{itm:cu1} and \eqref{itm:cu2} now follow by the exact same argument as in the proof of Theorem \ref{c-seq}. Let $\mathcal{E_\ell}$ be the event that $L$ contains at least one row of length at least $\ell\geq 1 $.

		For the lower bound take one row from $L$ of the maximum length $\rho_*$ (assume the label of this row is $i$) and consider the action of $\mathbf{PtU}_C$ on the cells in this row. If no cut is made during the running of $\mathbf{PtU}_C(L)$ then the length of $i$ stochastically dominates an $\Erl{\rho_*}{1}  $ random variable. Suppose that a Cut \& Paste transform is applied to $i$ at a cell containing vertex $v$ and the remainder of this row is pasted onto row $j$. Although row $j$ may have contained less cells before $v$ than the number of steps taken by row $i$ to reach $v$, the amount of time (with respect to the clock) it takes particle $j$ to reach $v$ must be at least as long as the time for particle $i$ to reach $v$ (otherwise the Cut \& Paste would not have been applied). Thus conditional on this Cut \& Paste the length of row $j$ stochastically dominates an $\Erl{\rho_*}{1}$ random variable. Now, as in Theorem \ref{c-seq}, \[\Pro{\tau_{c-unif}^v< \tau_{par}^v - \sqrt{\ell}\; \Big|\; \mathcal{E}_\ell} = e^{-\sqrt{\ell}/2}\] by Lemma \ref{Erlconc} \eqref{itmerl}. The lower bounds for \eqref{itm:cu1} and \eqref{itm:cu2} follow since in both cases $\ell =\Omega( \log n )$.  
	\end{proof}
	
	\subsection{Lazy IDLA}
	
	Consider the lazy versions of the discrete-time Sequential and Parallel-IDLA models, where with probability $1/2$ particles stay put and otherwise choose a neighbour uniformly. Note that all our previous results using the coupling via the block representation are also valid for lazy walks as for example one can simply consider the graph with the addition of (multi)-loops at each vertex. Indeed, they are valid for any block that is generated by using a Markov chain to move the particles.
	
	Let $\tau_{L-seq}^v(G)$, $\tau_{L-par}^v(G)$, be the number of steps needed to complete the lazy Sequential, respectively lazy Parallel, IDLA process started from $v$. 
	
	Although we are mainly concerned with the simple random walk IDLA models would like to be able to switch to the lazy setting at times as it allows us to use mixing time results. For the Sequential it is fairly clear that up to lower order terms the lazy sequential is a factor of $2$ slower than the Parallel, using the continuous time Uniform-IDLA we can also show this for Parallel-IDLA.

	\begin{proof}[Proof of Theorem \ref{lazy2normal}]We begin by proving the results for the sequential processes.
		
		Let $\mathcal{E_\ell}$ be the event that $L$ contains at least one row of length at least $\ell\geq 1 $.  We sample a lazy Sequential-IDLA by coupling with a simple Sequential-IDLA $L$ and adding in lazy steps w.p. $1/2$ for each walk. Thus each row $i$ with length $\tau_i$ in the sequential process has length $\sum_{i=1}^{\tau_i}Y_i$ in the coupled lazy process, where the $Y_i$ are independent $\geo{1/2}$ random variables. For the upper bound let $\rho_*$ be the length of $L$'s longest row, then we have $\Pro{\tau_{L-seq}^v>2\rho_* + k2\rho_*\mid  \rho_* } \leq   n\cdot  e^{-\frac{k^2 \rho_*}{2(1+k)}}$ by Lemma \ref{Erlconc} \eqref{itmsumgeo}. Similarly to Theorems \ref{c-seq} the w.h.p.\ upper bounds for $\tau_{L-seq}^v$ follow by conditioning on $\mathcal{E}_{\ell}$ for the two cases of $\ell$ in the statement.
		
		For upper bounds in expectation if we condition on $\left(\mathcal{E}_{\ell}\right)^c$ then $ 1\leq \rho_* \leq \ell $, it follow that 
		\begin{align*}
		\Ex{\tau_{L-seq}^v\mid (\mathcal{E}_\ell)^c}&< 2\ell + 6\ell \log n + (2 \ell \log n)\cdot\sum_{i=3 }^{\infty} \Pro{\tau_{L-seq}^v>\ell  + 2i\ell \log n }\\ &\leq \BO{\ell\log n } + (2\ell \log n )\cdot \sum\limits_{i=3}^{\infty }e^{-i^2/2(i+1)}=\BO{\ell\log n }.
		\end{align*}
		If $\ell\geq \log n /14$ then by Lemma~\ref{Erlconc} \eqref{itmsumgeo} \[ \Pro{\tau_{L-seq}^v>2\rho_* + \sqrt{i\rho_*\log n} \mid \rho_* }\mathbf{1}_{\mathcal{E}_{\ell}}\leq n\cdot e^{-\frac{i\log n}{2\left((i\log n)/\rho_* +1\right)}}\leq n\cdot e^{-(\sqrt{i}\log n)/10},\] thus $\Ex{\tau_{L-seq}^v\cdot \mathbf{1}_{\mathcal{E}_{\ell}} \mid \rho_*  } =2\rho_*+\BO{\sqrt{\rho_*\cdot \log n}},$ similar to \eqref{eqn:expbdd2}. By Jensen's (concave) inequality
		\begin{align} \label{exuplazy} \Ex{\tau_{c-seq}^v }&\leq 2\Ex{\tau_{seq}^v} +\BO{\sqrt{\Ex{\tau_{seq}^v}\cdot \log n} } + \BO{\ell\cdot \log n }\cdot \Pro{\left(\mathcal{E}_\ell\right)^c}.\end{align}
		Thus for any graph $G$ and $v\in V$, $ \Ex{\tau_{L-seq}^v }=\BO{\Ex{\tau_{seq}^v}} $ by Lemma \ref{lemma:whpLB}. If $\Pro{\left(\mathcal{E}_{\ell}\right)^c}\leq 1/\ell $ for some $ \ell = \lohm{\log n }$ then $\Ex{\tau_{c-seq}^v }\leq (2+\lo{1})\Ex{\tau_{seq}^v} $ by \eqref{exuplazy}.

		For the w.h.p.\ lower bound, conditional in $i$ being a walk of maximum length $\rho_*$ in L,  walk $i$ in the L-Seq-IDLA has length $\sum_{i=1}^{\rho_*} Y_{i}$, where $Y_i\sim \geo{1/2}$. So by applying Lemma \ref{Erlconc} \eqref{itmsumgeo},   \[\Pro{\tau_{L-seq}^v< 2\rho_* - (\log n)^{1/3}\sqrt{\rho_*}\; \Big|\; \rho_*}\cdot \mathbf{1}_{\mathcal{E}_\ell} \leq e^{-(\log n)^{1/3}/2(1-o(1))} = \lo{1} .\] Thus, since $\Pro{ \left(\mathcal{E}_\ell\right)^c} = \lo{1}$ for any $G,v\in V$ by Lemma \ref{lemma:whpLB}, talking expectations of the equations above yields $\tau_{L-seq}^v\geq (2-\lo{1})\tau_{seq}^v$ w.h.p\ . Thus this also holds in expectation.
		
		We now prove the bounds for the Parallel processes, the proof technique will be slightly different. First assume that for any $G$, $v\in V$ and some $\ell=\omega(\log n )$ we have $\Pro{ \left(\mathcal{E}_\ell\right)^c} = \lo{1}$.  In this case we know that $\tau_{par}^v(G) = (1+o(1))\tau_{c-unif}^v(G)$ w.h.p.\ and in expectation from Theorem \ref{conttimepar}. Consider the CTU-IDLA but using clocks of mean $2$ and use $\tau_{2-c-unif}^v(G)$ to denote the dispersion time of this process. It is clear that we can couple the clocks of mean $1$ and $2$ to give $\tau_{2-c-unif}^v(G) = (2+o(1))\tau_{c-unif}^v(G)$ w.h.p.\ and in expectation. Note that sampling from this process is equivalent to sampling from the Uniform-IDLA of mean $1$, but ignoring the ring of the clock with probability $1/2 $ (Poisson thinning). Consider the graph $\tilde{G}$, this is $G$ but to each vertex we add as many self loops as neighbours, then $\tau_{c-unif}^v(\tilde{G})$ has the same distribution as $\tau_{2-c-unif}^v(G) $, likewise $\tau_{par}^v(\tilde{G})$ and $\tau_{L-par}^v(G)$ are also equidistributed. Theorem \ref{conttimepar} is then applied to $\tilde{G}$ yielding $\tau_{2-c-unif}^v(G) = (1+o(1))\tau_{L-par}^v(G)$ w.h.p.\ and in expectation. Combining these relations yields \[ \tau_{L-par}^v(G) = \tau_{2-c-unif}^v(G)= (2+o(1))\cdot \tau_{c-unif}^v(G)= (2+o(1))\cdot \tau_{par}^v(G),\] w.h.p.\ and in expectation. For a general graph $G$ and $v \in V$ the exact same argument works however each of the equalities above holds only up to a $\Theta(1)$ factor, the result follows. 
	\end{proof}

	\section{Fundamental Networks}\label{fundamental}
	In this section we determine the dispersion for many well known graph topologies. 
	
	\subsection{The Complete Graph}

	We shall begin with the clique as this is most simple to analyse.
	
	\begin{thm}\label{complete}
		Let $K_n$ be the complete graph on $n$ vertices and $\kappa_{cc}$ be as in Lemma \ref{bren}. Then 
		\begin{align*}
		t_{par}(K_n) &\sim \frac{\pi^2 }{6}\cdot n 
		\qquad \text{and}\qquad t_{seq}(K_n) \sim \kappa_{cc}\cdot n,
		\end{align*}where 
		\[\kappa_{cc}:= \sum\limits_{i=1}^{\infty} \left( \frac{2}{i(3i-1)}-\frac{2}{i(3i+1)} \right) \approx 1.255. \] 
	\end{thm}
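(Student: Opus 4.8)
Since $K_n$ is vertex-transitive I may fix any origin $v$. The point is that on the clique each walking particle performs a memoryless search: when $k$ vertices are already occupied, the next particle sits on an occupied vertex and jumps to a uniformly random one of the $n-1$ other vertices, and all $n-k$ unoccupied vertices are among these neighbours, so it settles at each step with probability exactly $\frac{n-k}{n-1}$. Hence the number of steps this particle takes is a geometric variable $G_k\sim\operatorname{Geom}\!\left(\frac{n-k}{n-1}\right)$, and as the particles use independent randomness the $G_k$ are independent. The sequential dispersion time is the longest single walk, so $\tau_{seq}^v(K_n)=\max_{1\le k\le n-1}G_k$ has exactly the law of the variable $T_{n-1}$ of Lemma~\ref{bren} (the parameters $\frac{n-k}{n-1}$ run over $\frac1{n-1},\dots,\frac{n-1}{n-1}$). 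Therefore $t_{seq}(K_n)=\Ex{T_{n-1}}\sim\kappa_{cc}(n-1)\sim\kappa_{cc}\,n$.

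\textbf{Parallel setup.} Let $m:=n-1$ and let $U_t$ be the number of unoccupied vertices after $t$ rounds, so $U_0=m$. By symmetry $(U_t)$ is a non-increasing Markov chain whose law depends only on its current value: from state $u$ each of the $u$ unsettled particles lands on a uniformly chosen unoccupied vertex with probability $\tfrac{u}{m}$ (and on an occupied vertex otherwise), independently, and $U_{t+1}=u$ minus the number of distinct unoccupied vertices receiving at least one particle. The last particle to settle moves in every round, so $\tau_{par}^v(K_n)=T:=\inf\{t:U_t=0\}$. Since the chain is non-increasing, each level is occupied for one contiguous block, giving $T=\sum_{u=1}^{m}N_u$ with $N_u$ the number of rounds at level $u$, and $\Ex{N_u}=\Pro{A_u}\cdot q_u^{-1}$, where $A_u$ is the event that level $u$ is visited and $q_u=1-\left(1-\tfrac{u}{m}\right)^{u}$ is the per-round probability that at least one particle settles. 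A direct computation also gives the drift $\Ex{U_{t+1}\mid U_t=u}=u\left(1-\tfrac1m\right)^{u}$.

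\textbf{The two regimes.} For the lower bound I fix a large constant $K$: for each fixed $u\le K$ one has $\Pro{A_u}=1-o(1)$ (skipping a bounded level needs two particles to settle in the same round, which over the whole run happens with probability $o(1)$) and $q_u\sim u^2/m$, so $\Ex{T}\ge\sum_{u=1}^{K}\Ex{N_u}=(1-o(1))\,m\sum_{u=1}^{K}u^{-2}$; letting $n\to\infty$ and then $K\to\infty$ gives $\liminf \Ex{T}/m\ge\pi^2/6$. For the upper bound I split at $u_0=m^{0.1}$. Using $\Ex{N_u}\le q_u^{-1}$ together with $q_u\ge\left(1-\tfrac{u^2}{2m}\right)\tfrac{u^2}{m}=(1-o(1))\tfrac{u^2}{m}$ uniformly over $u\le u_0=o(\sqrt m)$, the small levels contribute $\sum_{u\le u_0}\Ex{N_u}\le(1+o(1))\,m\,\tfrac{\pi^2}{6}$. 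For $u\ge u_0$ the bound $\Ex{U_{t+1}\mid U_t=u}\le(1-\delta)u$ with $\delta=1-(1-\tfrac1m)^{u_0}\sim u_0/m$ lets a standard multiplicative-drift argument bound the expected number of rounds with $U_t\ge u_0$ by $O\!\left(\frac{m\log m}{u_0}\right)=o(m)$. Adding the two regimes yields $\limsup \Ex{T}/m\le\pi^2/6$, and combined with the lower bound $t_{par}(K_n)=\Ex{T}\sim\frac{\pi^2}{6}\,n$.

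\textbf{Main obstacle.} The sequential part reduces immediately to Lemma~\ref{bren}, so essentially all the work is in the parallel case, where the crux is the interaction between particles settling simultaneously. I must (i) show that for small $u$ the process behaves like independent geometric waits of mean $m/u^2$ — this is exactly where $\sum_u u^{-2}=\pi^2/6$ enters — while (ii) controlling the moderate-to-large-$u$ phase, in which many particles settle per round and the chain skips over levels, so that it costs only $o(m)$. The delicate points are making the estimate $q_u\sim u^2/m$ and the no-skipping bound $\Pro{A_u}=1-o(1)$ uniform on a growing window, and choosing the threshold $u_0$ to separate the $\Theta(m)$ small-$u$ contribution from the $o(m)$ tail; the naive bound $\Ex{T}\le\sum_u q_u^{-1}$ fails here because it charges $\Theta(1)$ to each of the many large levels and thereby inflates the constant.
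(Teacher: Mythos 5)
Your sequential argument is essentially the paper's: both identify the longest walk on $K_n$ with the maximum of independent geometric random variables and invoke Lemma~\ref{bren}. (Your parameters $\frac{n-k}{n-1}$ are in fact the exact ones; the paper's $\frac{n-i+1}{n}$ agrees only up to asymptotically negligible corrections, so your version is, if anything, cleaner.)

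For the parallel case you take a genuinely different route. The paper never analyses the discrete-time parallel process directly: it invokes Theorem~\ref{conttimepar} to replace Parallel-IDLA by the continuous-time Uniform-IDLA, in which no two particles ever settle simultaneously, so the dispersion time becomes a sum of independent exponentials with means $(n-1)/k^2$, $k=1,\dots,n-1$, and $\pi^2/6$ drops out immediately. You instead study the level chain $U_t$ of the discrete process head-on: the $\sum_u m/u^2$ behaviour comes from showing the low levels are not skipped, and the high levels are controlled by a multiplicative drift estimate. Your route is self-contained --- it needs none of the Cut \& Paste coupling machinery of Section~\ref{coupling} --- at the price of confronting exactly the simultaneous-settlement issue that the paper's continuous-time trick is designed to evade. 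Your upper bound (splitting at $u_0=m^{0.1}$, the uniform estimate $q_u\ge(1-o(1))u^2/m$ for $u\le u_0$, and multiplicative drift with $\delta\sim u_0/m$ above $u_0$) is correct as written.

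There is, however, a genuine gap in your lower bound: the claim $\Pro{A_u}=1-o(1)$ is justified by asserting that two particles settling in the same round ``over the whole run happens with probability $o(1)$''. That assertion is false. In the very first round the $m=n-1$ particles land uniformly on the $m$ non-origin vertices, so roughly $(1-e^{-1})m$ of them settle simultaneously; multi-settlement rounds occur with probability $1-o(1)$, not $o(1)$. What you actually need is that no single round carries the chain from a level $w>u$ to a level strictly below $u$, and this must be excluded for \emph{every} departure level $w>u$, not only bounded ones. The claim is true, but proving it is the real content of the lower bound: conditional on exiting level $w$, the exit skips level $u$ with probability $\Pro{D_w\ge w-u+1}/q_w$, where $D_w$ is the number of distinct unoccupied vertices settled in that round. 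For $w$ of constant order this is $O(w^2/m)$; for intermediate $w$ (say $w\le \sqrt{m}/2$) the binomial bound $\Pro{D_w\ge w-u+1}\le (w^2/m)^{w-u+1}/(w-u+1)!$ decays geometrically in $w$; and for $w$ of order $m$ one needs a concentration inequality for $D_w$ (bounded differences suffice), since $\Ex{D_w}\le (1-e^{-1}+o(1))\,w$ while at least $w-u+1$ settlements are required. Summing these contributions over $w$ gives $\Pro{A_u}=1-o(1)$ for each fixed $u\le K$, which repairs the argument; but as written this step is unsupported, and the one-line reason offered for it is wrong.
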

	
	Before proving the above we state a result needed to treat the Sequential-IDLA on cliques.
	\begin{lem}[\cite{Brennan}]\label{bren}Let $T:=T_n$ be the maximum of $n$ independent geometric random variables with parameters $\frac{i}{n}$ for $1\leq i\leq n$. Then the limit $\lim_{n \rightarrow \infty} \Ex{T}/n$ exists and is equal to $\kappa_{cc}$ 
	\end{lem}
	
	The constant $\kappa_{cc}$ is related to the longest wait time in the coupon collector process \cite{Brennan}. 
	
	\begin{proof} [Proof of Theorem \ref{complete}]
		Instead of analyzing the parallel process, we analyze the continuous-time Uniform-IDLA process (CTU-IDLA), in which each particle has a exponential clock of rate 1, and moves every time the clock rings until the particle settles. By Theorem 4.8 we have that the dispersion time of the Parallel-IDLA process and the CTU-IDLA process are asymptotically equal as long as the dispersion time of the Parallel-IDLA is $\lohm{\log n}$ w.p. $1-\lo{1/\log^2 n}$. The property holds trivially because as the last particle in the Sequential-IDLA takes geometric time of mean $n$ to settle, this holds also for the Parallel-IDLA due to the stochastic domination $\tau_{seq}^v\preceq \tau_{par}^v$ by Theorem~\ref{PStocS}. The analysis of the CTU-IDLA is quite simple: since particles move in continuous-time no two particles settle at the same time. Suppose there are $k$ unsettled particles, then the time needed until one of the $k$ particles settles in one of the $k$ unoccupied vertices is exponentially distributed with mean $(n-1)/k^2$. Summing up from $k=1$ to $n-1$ we obtain that the expected dispersion time is asymptotically $n\sum_{k \geq 1}k^{-2} = n\cdot (\pi^2/6-o(1))$.
		
		For $t_{seq}$ the longest walk in the Sequential-IDLA on $K_n$ is the longest waiting time in the Coupon Collector problem. This time is distributed as the maximum of $n$ independent geometric random variables with parameters $\frac{n-i+1}{n}$ for $1\leq i\leq n$. The result follows from Lemma \ref{bren}. \end{proof}
	\begin{rem}
		Observe that $\kappa_{cc} \approx 1.255$ and $\pi^2/6 \approx 1.645$ so the two constants are distinct. \end{rem}

	\subsection{The Path}
	Let $P_n$ be the path with $n$-vertices. Interestingly, the path provides an example where the sequential and parallel dispersion process take the same time up to lower order terms.
	\begin{thm}\label{PathConstSame}
		Let $M$ be the maximum of $n$ independent random variables representing the hitting time of a random walk to the vertex $n$, starting from $1$ on $P_n$. Then for the dispersion time,
		\[
		t_{seq}(P_n) = (1 \pm o(1)) \cdot \Ex{M}=t_{par}(P_n).
		\]
	\end{thm}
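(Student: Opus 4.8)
The plan is to sandwich both dispersion times between $(1-o(1))\Ex{M}$ and $(1+o(1))\Ex{M}$. Since Theorem~\ref{PStocS} already gives $t_{seq}(P_n)\le t_{par}(P_n)$, it suffices to prove an upper bound $t_{par}(P_n)\le (1+o(1))\Ex{M}$ together with a matching lower bound $t_{seq}(P_n)\ge (1-o(1))\Ex{M}$; chaining these with the stochastic domination pins down both quantities simultaneously. Throughout I label the path $1,2,\dots,n$ and use the structural fact that on $P_n$ the aggregate is always an interval $\{1,\dots,k\}$, because the only unoccupied neighbour of $\{1,\dots,k\}$ is $k+1$. In particular, from the endpoint origin $1$ the vertices are settled in the order $1,2,\dots,n$, and $n$ is always the last vertex settled.

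For the upper bound I would reuse the coupling behind the proof of Theorem~\ref{lem:general}. Run Parallel-IDLA from the endpoint $1$ with each particle $i$ following a predetermined independent walk $W_i$ and settling at the first unoccupied vertex it meets. As in that proof, once every $W_i$ has covered $P_n$ all particles must have settled; and a walk started at $1$ covers $P_n$ exactly at the moment it first hits $n$. Hence $\tau_{par}^1(P_n)\le \max_i \sigma_i$ where the $\sigma_i\sim\tau_{hit}(1,n)$ are i.i.d., i.e. $\tau_{par}^1(P_n)\preceq M$ and $\Ex{\tau_{par}^1(P_n)}\le\Ex{M}$. For a general origin $v$ the same argument bounds $\tau_{par}^v$ by the maximum of $n$ i.i.d.\ \emph{cover times} of $P_n$ from $v$, and I would argue that this maximum is still $(1+o(1))\Ex{M}$ for every $v$, using that each hitting or cover time on a path has an exponential upper tail whose rate is governed by the spectral gap and is essentially insensitive to the starting vertex.

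For the lower bound I would use that Sequential-IDLA from the endpoint settles vertex $j$ via a walk whose length is an independent copy of $\tau_{hit}(1,j)$, so $\tau_{seq}^1(P_n)=\max_{2\le j\le n}X_j$ with $X_j\sim\tau_{hit}(1,j)$ independent. Restricting to the top block $J=\{j\ge j_0\}$ with $j_0:=n-n/\log n$ and using the pathwise monotonicity $\tau_{hit}(1,j)\succeq\tau_{hit}(1,j_0)$ for $j\ge j_0$, I would lower bound $\tau_{seq}^1(P_n)$ by the maximum of $N:=n/\log n$ i.i.d.\ copies of the end-to-end hitting time on a path of $j_0=n(1-o(1))$ vertices. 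Since $t_{seq}(P_n)\ge\Ex{\tau_{seq}^1(P_n)}$, this feeds directly into the sandwich.

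The main obstacle is the extreme-value comparison that matches the constants. Writing the expected maximum of $N$ copies of $\tau_{hit}(1,L)$ as $\gamma_L^{-1}(\log N + O(1))$, with $\gamma_L=\Theta(L^{-2})$ the absorption rate of the walk on a length-$L$ path, the lower bound reduces to checking $\tfrac{j_0^2\log N}{n^2\log n}\to 1$, which holds because $j_0=n(1-o(1))$ and $\log N=\log n-\log\log n$. The delicate part is justifying the form $\gamma_L^{-1}(\log N+O(1))$: I must control not merely the tail \emph{rate} $\gamma_L$ but also its prefactor, and verify that the prefactor is only $n^{o(1)}$ (so that its contribution $\gamma_L^{-1}\log(\text{prefactor})$ is $o(n^2\log n)=o(\Ex{M})$); for the endpoint-to-endpoint hitting times appearing in the lower bound this prefactor is $\Theta(1)$, but for the cover times from arbitrary origins in the upper bound it requires a careful spectral estimate of $\tau_{hit}$ on the path rather than the crude bounds of Section~\ref{bounds}. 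This uniform tail control, across slightly perturbed lengths, targets, and starting vertices, is the technical heart of the argument.
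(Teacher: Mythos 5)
Your proposal is correct, and its skeleton --- the sandwich $(1-o(1))\Ex{M} \le t_{seq}(P_n) \le t_{par}(P_n) \le (1+o(1))\Ex{M}$ via Theorem~\ref{PStocS}, an upper bound on $\tau_{par}^v$ by the maximum of $n$ i.i.d.\ predetermined-walk covering times, and a lower bound on $\tau_{seq}^1$ from the last $\approx n/\log n$ walks each having to travel to distance $(1-o(1))n$ --- coincides with the paper's; the two arguments diverge genuinely at the key comparison step. The paper never evaluates $\Ex{M}$: it works with i.i.d.\ hitting times $Y_i$ to the vertex $n-n/\log n$ and shows (its inequality (3)) that the maximum of $n$ copies exceeds the maximum of $n/\log n$ copies by only $O(n^2\log\log n)$, using a random-permutation coupling (with high probability at most $2\log^2 n$ of the excluded variables beat the included maximum, and each overshoot is stochastically dominated by one plus a fresh end-to-end hitting time, which has an exponential tail at scale $n^2$); it then couples (its inequality (5)) the Sequential-IDLA on paths with $n$ and $n-n/\log n$ vertices by controlling excursions into the first $n/\log n$ vertices. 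Only crude, constant-free exponential tail bounds at scale $n^2$ are needed. You instead pin both maxima to the extreme-value asymptotics $\gamma_L^{-1}(\log N + O(1))$; for this you need not only the prefactor control you flag but also the sharper fact that the absorption rate satisfies $\gamma_L = (c+o(1))L^{-2}$ with the \emph{same} constant $c$ at both lengths $L=j_0$ and $L=n$ (a bare $\gamma_L=\Theta(L^{-2})$ would not make $j_0^2\log N/(n^2\log n)\to 1$ suffice); both facts follow from the explicit diagonalization of the absorbed birth--death chain on the path, machinery the paper avoids entirely. In exchange, your route yields the clean exact identity $\tau_{seq}^1(P_n)=\max_{2\le j\le n}X_j$ with independent $X_j\sim\tau_{hit}(1,j)$ (the paper uses only one-sided dominations), an explicit evaluation $\Ex{M}=(1+o(1))\gamma_n^{-1}\log n$, and --- via cover-time tails such as $\mathbf{Pr}_v[\mathrm{cover}>t]\le 2\,\mathbf{Pr}_1[\tau_{hit}(1,n)>t]$, which hold by pathwise domination on the path --- a uniform treatment of arbitrary origins; this last point is a genuine advantage, since $t_{seq}$ and $t_{par}$ are defined as worst cases over origins, whereas the paper's proof is written only for the endpoint source.
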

	\begin{proof}
		In the following, we will denote by $t_{seq}(m)$ the expected running time of the Sequential-IDLA on a path with $m$ vertices, when the source is the endpoint labelled by $1$. In the following, let $Y_1,Y_2,\ldots,Y_n$ be a collection of $n$ independent random variables, each of which describing the hitting time of a random walk from endpoint $1$ to $n-n/\log n$ (thus $Y_i=\tau_{hit}(1,n-n/\log n)$). In particular, these random walks will not settle and are therefore completely independent.
		
		The proof will be based on the following chain of inequalities: \newline
		{\small
			\begin{eqnarray*}
				t_{seq}\left(n-\frac{n}{\log n}\right) \stackrel{(1)}{\leq} t_{par}\left(n-\frac{n}{\log n}\right)\stackrel{(2)}{\leq} \Ex{\max_{1 \leq i \leq n-\frac{n}{\log n}} Y_i }&\stackrel{(3)}{\leq}&
				(1+o(1)) \cdot\Ex{ \max_{1 \leq i \leq \frac{n}{\log n}} Y_i} \nonumber\\
				&\stackrel{(4)}{\leq}& (1+o(1)) \cdot t_{seq}(n),
			\end{eqnarray*}
		}
		\normalsize
		and then finally
		\begin{align*}
		t_{seq}(n) \stackrel{(5)}{\leq} (1+o(1)) \cdot t_{seq}\left(n-\frac{n}{\log n}\right),
		\end{align*}
		and if all these inequalities hold, the claims of the theorem are established.
		
		Note that inequality (1) is a direct consequence of Theorem~\ref{PStocS}, and inequalities (2) and (4) follow directly from the definition of the Parallel-IDLA and Sequential-IDLA, respectively. Thus it only remains to prove (3) and (5).
		
		We first prove (3) - in fact, for notational convenience we will establish the stronger claim
		\[
		\Ex{ \max_{1 \leq i \leq n} Y_i} \leq (1+o(1)) \cdot \Ex{\max_{1 \leq i \leq \frac{n}{\log n}} Y_i},
		\]
		i.e., on the left hand side, we take the maximum over $n$ random variables instead of just $n-n/\log n$.
	
		To simplify notation, define $\tilde{Y}:=\max_{1 \leq i \leq n/\log n} Y_i$ and 
		define $Y:=\max_{1 \leq i \leq n} Y_i$. 
		In order to prove that $\Ex{\tilde{Y}}$ and $\Ex{Y}$ are close, consider a coupling where we first expose the values of the set $\{Y_1,Y_2,\ldots,Y_n\}$ and then assign those values through a random permutation. Next define by $F$ the random variable counting the $Y_i$'s which are larger than $\tilde{Y}$, in symbols,
		\[
		F := \left| \left\{ n/\log n < j \leq n \colon Y_j > \tilde{Y} \right\} \right|.
		\]
		Next note that for any $\lambda \geq 1$,
		\[
		\Pro{F \geq \lambda \cdot \log n} \leq \prod_{i=1}^{n/\log n} \left(1 - \frac{ \lambda \log n}{n-i+1} \right) \leq \prod_{i=1}^{n/\log n} \exp\left(-  \frac{\lambda \log n}{n} \right) \leq \exp\left(-\lambda \right).
		\]The first inequality holds by considering the probability that random ordering does not ``choose'' any of the $\lambda \log n$ longest walks for one of the first $n/\log n $ walks. I.e. if we have chosen $k$ so far, non of them being one of the $\lambda \log n $ longest, then we choose a long walk next time w.p. $(\lambda \log n )/k $. Thus for $\lambda = 2 \log n$, $\Pro{F \geq 2 \log^2 n } = n^{-2}$.
		
		Consider now the gap between the $(2 \log^2 n)$-th largest element of the values $\{Y_1,Y_2,\ldots,Y_n\}$ and the maximum. To this end, we will use the principle of deferred decisions and expose the $n$ trajectories in parallel order and stop as soon as there at most $2 \log^2 n$ walks which have not hit the other endpoint.
		
		Hence suppose we order these values such that w.l.o.g. $Y_1 \leq Y_2 \leq \cdots \leq Y_n$. Then for any $j \geq n- 2 \log^2 n$, the random variable $ Y_j - Y_{n-2\log^2 n}$ is stochastically dominated by one plus the hitting time from $1$ to $n$, so in particular, $\Ex{Y_j - Y_{n-2\log^2 n}}=\mathcal{O}(1+n^2)$. Furthermore, using the fact that from any start point, a random walk reaches the vertex $n$ is at most $2n^2$ steps with probability at least $1/2$, it follows that for any $\lambda > 0$, 
		\begin{equation}\label{probpath}
		\Pro{ Y_{j} - Y_{n-2\log^2 n} > 1+\lambda \cdot 2 n^2  } = \mathcal{O}(2^{-\lambda}).
		\end{equation}
		Choosing $\lambda = C \log \log n$ for some large constant $C > 0$, it follows by the Union bound over the at most $\log^2 n$ indices $j \in F$ that 
		\[
		\Pro{ Y \geq \tilde{Y} + \mathcal{O}(n^2 \log \log n) \, \mid \, F \leq 2\log^2 n} \leq \frac{1}{\log^2 n}.
		\]
		To conclude, it follows by the Union bound that w.p. at least $1-3/(\log n)^{2}$, our coupling satisfies
		\[
		Y - \tilde{Y}  \leq C \cdot n^2 \log \log n.
		\]
		Otherwise, we still have $\Ex{Y - \tilde{Y}  \, \mid \, \mathcal{E} } = \mathcal{O}(n^2 \log n) + \mathcal{O}(n^2 \log \log n)$, where $\mathcal{E}$ denotes the event that any of the above probabilistic arguments fail. The result follows since $\Pro{\mathcal{E}} = \BO{1/\log^2 n}$.

		We now continue to prove inequality (5).
		To this end we will construct a coupling between the $n$ walks in $t_{seq}(n)$ and the $n-n/\log n$ walks in $t_{seq}(n-n/\log n)$.
		Consider the first $n/\log n$ random walks in the $t_{seq}(n)$ setting. For each of them, the expected time to settle is $\mathcal{O}(n^2/\log^2 n)$ and by an argument similar to \eqref{probpath}, none of them will take more than $\mathcal{O}(n^2)$ with probability $1-n^{-\omega(1)}$.
		
		The trajectories of the next $n-n/\log n$ walks of $t_{seq}(n)$ can be coupled with the ones in $t_{seq}(n-n/\log n)$, so if a walk moves from vertex $x$ to $x+1$ in $t_{seq}(n-n/\log n)$, then the corresponding walk in $t_{seq}(n)$ moves from $x+n/\log n$ to $x+1+n/\log n$. The only difficulty arises when the walk in $t_{seq}(n)$ is at a vertex between $1$ and $n-n/\log n$. To capture this, we will consider so-called excursions which are epochs in which the random walk is at such a vertex. Notice that the total number of steps that are taken as part of any excursion is at most the total number of visits to any vertex in $1,2,\ldots,n/\log n$. However, note that the expected number of visits to any of these vertices is $\mathcal{O}(n \log n)$ for a random walk of $\mathcal{O}(n^2 \log n)$ steps, and thus by a standard Chernoff Bound for random walks, it follows that any of these vertices is visited at most $\mathcal{O}(n \log n)$ times with probability at least $1-n^{-2}$. Thus by the Union bound, the total number steps spend in any excursion is at most $\mathcal{O}(n^2)$ with probability at least $1-n^{-1}$.

		To conclude, we have shown that with probability at least $1-n^{-1}$ there is a coupling between $\tau_{seq}(n)$ and $\tau_{seq}(n-n/\log n)$ such that
		\[
		\tau_{seq}(n) \leq \tau_{seq}(n-n/\log n) + \mathcal{O}(n^2).
		\]
		Note that we can verify whether this coupling holds by inspecting only the first $\mathcal{O}(n^2 \log n)$ steps of the random walks. Thus even conditional on the coupling failing, we have $t_{seq}(n) = \mathcal{O}(n^2 \log n)$. Since $t_{seq}(n) = \Omega(n^2)$, it follows that for the expected values,
		\begin{align*}
		t_{seq}(n)  &\leq (1+o(1)) \cdot t_{seq}(n-n/\log n).
		\end{align*}
		
	\end{proof}

	\subsection{Expanders and The Hypercube}
	We call a graph an expander if $1-\lambda_2 = \Omega(1)$, where $\lambda_2$ is the second largest absolute eigenvalue. 
	
	\begin{thm}\label{expandthom}Let $G$ be an $n$-vertex almost-regular expander graph. Then $t_{seq}(G),t_{par}(G) =\Theta(n)$.
	\end{thm}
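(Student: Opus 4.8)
The plan is to prove matching $\Omega(n)$ and $O(n)$ bounds and to deduce both claims from the single statement $t_{par}(G)=\Theta(n)$, using the stochastic domination $\tau_{seq}^v(G) \preceq \tau_{par}^v(G)$ of Theorem~\ref{PStocS}, which gives $t_{seq}(G) \le t_{par}(G)$. Thus a lower bound on $t_{seq}$ propagates upward to $t_{par}$, while an upper bound on $t_{par}$ propagates downward to $t_{seq}$, so it suffices to lower bound $t_{seq}$ and upper bound $t_{par}$.

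For the lower bound I would invoke Theorem~\ref{lowerbound}, which gives $t_{seq}(G)=\Omega(|E|/\Delta)$. Since $G$ is almost-regular, every degree is $\Theta(\Delta)$, so $|E|=\Theta(n\Delta)$ and hence $|E|/\Delta = \Theta(n)$. This yields $t_{seq}(G)=\Omega(n)$, and therefore also $t_{par}(G)\ge t_{seq}(G)=\Omega(n)$.

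For the upper bound it suffices to bound $t_{par}(G)$, and here I would apply Theorem~\ref{partialparupper} with $k=1$, passing to lazy walks (which costs only a factor $2+o(1)$ by Theorem~\ref{lazy2normal} and leaves the relaxation time $1/(1-\lambda_2)=O(1)$ unchanged up to constants, the laziness also removing any issue with eigenvalues near $-1$). Two ingredients feed into the bound: first, $t_{mix}=O(\log n)$, which follows from $1-\lambda_2=\Omega(1)$ together with the minimum stationary probability being $\Theta(1/n)$ (so $t_{mix}\le \tfrac{1}{1-\lambda_2}\log(1/\pi_{\min})=O(\log n)$); second, the hitting-time-of-sets estimate $t_{hit}(\pi,S)=O\!\left(\tfrac{1}{(1-\lambda_2)\,\pi(S)}\right)=O(n/|S|)$, valid because $1-\lambda_2=\Omega(1)$ and, by almost-regularity, $\pi(S)=\Theta(|S|/n)$; such bounds are exactly what the return-probability estimates of Appendix~\ref{section:BoundsHitSet} supply. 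Since the maximum of $t_{hit}(\pi,S)$ over $S$ with $|S|\ge 2^{j-2}$ is attained up to constants at the smallest admissible size, we get $\max_{|S|\ge 2^{j-2}} t_{hit}(\pi,S)=O(n/2^{j})$. Plugging these into Theorem~\ref{partialparupper} gives
\[
t_{par}(G) \;\le\; 60 \sum_{j=1}^{\lceil \log_2 n \rceil}\left( O(\log n) + O\!\left(\frac{n}{2^{\,j}}\right)\right) \;=\; O(\log^2 n) + O(n) \;=\; O(n),
\]
the second term being a convergent geometric sum dominated by its first term. Combining with the lower bound yields $t_{par}(G)=\Theta(n)$, and then $t_{seq}(G)=\Theta(n)$ follows from $\Omega(n)\le t_{seq}(G)\le t_{par}(G)=O(n)$.

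The only genuinely non-routine step is the uniform bound $t_{hit}(\pi,S)=O(n/|S|)$ on the hitting time of an arbitrary set of prescribed size, and I expect this to be the main obstacle: it requires the spectral/return-probability machinery of Appendix~\ref{section:BoundsHitSet} rather than the trivial estimate $t_{hit}(\pi,S)\le t_{hit}(G)$ (which would only recover the weaker $O(n\log n)$ bound of Theorem~\ref{lem:general}). Everything else reduces to substituting the known expander parameters into the already-proven general bounds and summing a geometric series.
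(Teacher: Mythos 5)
Your proposal is correct and follows essentially the same route as the paper's proof: the lower bound via Theorem~\ref{lowerbound}, the upper bound on $t_{par}$ by feeding the hitting-time-of-sets estimates of Lemma~\ref{setestimate} into Theorem~\ref{partialparupper}, and the sandwich via the domination $t_{seq}(G)\leq t_{par}(G)$ of Theorem~\ref{PStocS}. One minor correction: Lemma~\ref{setestimate} actually yields $t_{hit}(\pi,S)=O\bigl(n(1+\lceil\log|S|\rceil)/((1-\lambda_2)|S|)\bigr)$, i.e.\ with an extra $\log|S|$ factor compared to the $O(n/|S|)$ you claim, but this is harmless: the sum becomes $\sum_{j} j\,2^{-j}$, which still converges, so your geometric-sum step and the conclusion $t_{par}(G)=O(n)$ go through unchanged.
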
 
	
	\begin{proof}The lower bound for $t_{seq}$ follows from Theorem \ref{lowerbound}. The upper bound on $t_{par}(G)$ follows from Corollary \ref{parlam2}. The result then follows since $t_{seq}(G)\leq t_{par}(G)$ by Theorem \ref{PStocS}.  
	\end{proof}
	\begin{rem}
		In particular this result covers (w.h.p.) random $d$-regular graphs, for fixed $d$, and the binomial random graph $\mathcal{G}(n,p)$ above the connectivity threshold, when $np \geq c\log(n)$, $c>1$.  
	\end{rem}
	The Hypercube $H_{d}$, where $n=2^d$, is the graph where each vertex is a binary string of length $d$ and two vertices are connected if their associated binary strings differ in one digit. The hypercube is not an expander since $1-\lambda_2 = 1/d= 1/\log_2 n $ however, we still achieve a linear bound.
	
	\begin{thm}
		Let $H_d$ be the hypercube with $n=2^d$ vertices. Then $t_{seq}(H_n),t_{par}(H_d) =\Theta(n)$. 
	\end{thm}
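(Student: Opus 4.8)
The plan is to mirror the proof of Theorem~\ref{expandthom}, since the hypercube is regular and well-connected; the only point requiring genuine care is that its mixing time $t_{mix}(H_n) = \BO{\log n \log\log n}$ is slightly larger than that of a bounded-degree expander, so I must verify that the mixing-time contributions still sum to $o(n)$.

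For the lower bound, observe that $H_n$ is $\log_2 n$-regular with $|E| = \tfrac{1}{2} n \log_2 n$. Hence Theorem~\ref{lowerbound} applies directly and gives
\[
t_{seq}(H_n) = \Omega\!\left( \frac{|E|}{\Delta} \right) = \Omega\!\left( \frac{n \log_2 n / 2}{\log_2 n} \right) = \Omega(n).
\]

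For the upper bound I would apply Theorem~\ref{partialparupper} with $k=1$ (working with lazy walks and passing back to simple walks via Theorem~\ref{lazy2normal} at the cost of a constant factor), reducing the task to controlling
\[
\sum_{j=1}^{\lceil \log_2 n\rceil} \left( t_{mix} + \max_{S \subseteq V \colon |S| \geq 2^{j-2}} t_{hit}(\pi,S) \right).
\]
The first summand contributes $\lceil \log_2 n \rceil \cdot t_{mix} = \BO{\log^2 n \log\log n} = o(n)$. For the second summand I would invoke the set-hitting-time estimates of Lemma~\ref{setestimate}, which rest on bounding return probabilities, to obtain $t_{hit}(\pi, S) = \BO{n/|S|}$ for every $S$. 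Substituting the worst case $|S| = 2^{j-2}$ and summing the resulting geometric series,
\[
\sum_{j=1}^{\lceil \log_2 n \rceil} \frac{n}{2^{j-2}} = 4n \sum_{j\geq 1} 2^{-j} = \BO{n},
\]
so $t_{par}(H_n) = \BO{n}$. Combining this with Theorem~\ref{PStocS}, which yields $t_{seq}(H_n) \leq t_{par}(H_n)$, both quantities are $\BO{n}$, and together with the $\Omega(n)$ lower bound above this gives $t_{seq}(H_n), t_{par}(H_n) = \Theta(n)$.

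The main obstacle is precisely the set-hitting-time bound $t_{hit}(\pi, S) = \BO{n/|S|}$: its geometric decay in $|S|$ is what collapses the $\lceil \log_2 n \rceil$-term sum of Theorem~\ref{partialparupper} down to $\BO{n}$. A cruder estimate such as $t_{hit}(\pi,S) = \BO{t_{mix}\cdot n/|S|}$, obtained by re-mixing between independent samples, would only give $\BO{n \log n \log\log n}$, which is too weak. Thus the argument genuinely relies on the tight return-probability analysis for the hypercube underlying Lemma~\ref{setestimate}, rather than on merely combining the mixing time with a coupon-collector heuristic.
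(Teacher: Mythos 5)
Your lower bound is correct and matches the paper's (Theorem~\ref{lowerbound} applied to the $\log_2 n$-regular hypercube), and the skeleton of your upper bound --- Theorem~\ref{partialparupper} with lazy walks, then Theorem~\ref{PStocS} --- is also the paper's. The gap is the step you present as a mere citation: Lemma~\ref{setestimate} does \emph{not} give $t_{hit}(\pi,S)=\BO{n/|S|}$ for the hypercube. Its first bound carries the factor $1/(1-\lambda_2)$, and the hypercube is not an expander: the lazy walk on $H_n$ has $1-\lambda_2=1/\log_2 n$, so that bound only yields $t_{hit}(\pi,S)=\BO{n\log n\,(1+\lceil\log|S|\rceil)/|S|}$, and feeding this into Theorem~\ref{partialparupper} gives $\BO{n\log n}$ --- exactly the trivial $t_{hit}\cdot\log n$ bound you set out to improve. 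Its second bound is conditional on the return-probability hypothesis $p^t_{u,w}\leq C'/n+Ct^{-(1+\epsilon)}$, which you never verify for $H_n$; and even granting it, the conclusion is $t_{hit}(\pi,S)=\BO{n/|S|^{\epsilon/(1+\epsilon)}}$, which is never $\BO{n/|S|}$. Your closing sentence, that the argument ``relies on the tight return-probability analysis for the hypercube underlying Lemma~\ref{setestimate}'', inverts the logic: no hypercube-specific analysis underlies that lemma --- the decay of return probabilities is its \emph{hypothesis}, and establishing it is precisely the hard part, which your proposal omits.

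That omitted part is the actual content of the paper's proof, which explicitly declines to use Lemma~\ref{setestimate} here. The paper first reduces, by monotonicity of set hitting times and the fact that the sum has only $\BO{\log n}$ terms, to showing $t_{hit}(\pi,S)=\BO{n/|S|}$ for sets of size $|S|\leq\tfrac12\log_2 n$ (larger sets then contribute $\BO{\log n}\cdot\BO{n/\log n}=\BO{n}$ in total). For such small sets it runs a mix-then-hit epoch argument whose crux is the second-moment estimate $\sum_{t=0}^{\log^2 n}\tilde{p}^{\,t}_{u,S}=\BO{1}$ for $u\in S$: after $4$ steps the walk is at distance $4$ from $u$ with probability $1-O(1/\log^4 n)$, and by symmetry of $H_n$ it is then uniform over all $\binom{\log_2 n}{4}$ vertices at that distance, so the expected number of returns to $u$ within $\log^2 n$ steps is $1+O(1/\log n)$. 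Your route could in fact be repaired in the spirit you intended: one can check, e.g.\ via the spectral bound $p^t_{u,u}\leq\left(\bigl(1+e^{-t/k}\bigr)/2\right)^k$ with $k=\log_2 n$, that the lazy hypercube walk satisfies $p^t_{u,w}\leq C'/n+Ct^{-2}$, whence Lemma~\ref{setestimate} gives $t_{hit}(\pi,S)=\BO{n/\sqrt{|S|}}$, which still sums geometrically to $\BO{n}$. But that verification is a hypercube-specific return-probability computation of the same nature as the paper's, and without it (or with the incorrect $\BO{n/|S|}$ citation in its place) your upper bound does not go through.
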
\begin{proof}
		The lower bound for $t_{seq}$ follows from Theorem \ref{lowerbound}. Due to Theorem~\ref{PStocS} we only need to find an upper bound for $t_{par}$. As laziness only changes the dispersion time by a constant factor, we work with lazy walks. For the upper bound we seek to apply Theorem \ref{partialparupper} however, unlike in Theorem \ref{expandthom}, we shall use an argument based on return probabilities in $H_n$ to bound hitting times rather than appealing to Lemma \ref{setestimate}. Also note that, since the sum in Theorem~\ref{PStocS} only has $\mathcal{O}\!\left(\log n \right)$ terms and by monotonicity of hitting times of sets, it will be sufficient to cover the case $1 \leq |S| \leq (\log n)/2$. If we can prove that the hitting time is $\mathcal{O}(n/|S|)$ in this case then we are done. We divide time into epochs of length $2\log^2 n$ and prove the probability we hit $S$ in one epoch is at least $\Omega((\log n)^2|S|/n)$. In the first $\log^2 n$ steps of an epoch we allow the walk to mix ignoring if the walk hits $S$. Then, with high probability we can couple our walk with the stationary distribution. In the second $\log^2 n$ steps of an epoch we observe if the walk hits $S$. Let $Z$ the random variable which counts the number of visits to the set $S$ in $(\log n)^2$ steps. Then $\mathbf{Pr}_{\pi}[\tau_S \leq (\log n)^2] = \mathbf{Pr}_{\pi}[Z \geq 1]$ and
		$$\mathbf{Pr}_{\pi}[Z \geq 1]= \frac{\mathbf{E}_{\pi}[Z]}{\mathbf{E}_{\pi}[Z|Z\geq 1]} \geq \frac{(\log n)^2|S|/n}{\max_{u \in S}\sum_{t=0}^{(\log n)^2}\tilde p_{u,S}^t}.$$
		\begin{clm}\label{clmreturns}
			For any set $S$ and $u\in S$ if $|S|\leq (\log n )/2$, then  $\sum_{t=0}^{(\log n)^2}\tilde p_{u,S}^t=\BO{1}$.   
		\end{clm}
		Thus by Claim \ref{clmreturns} (proved later) $\mathbf{Pr}_{\pi}[\tau_S \leq (\log n)^2]= \BOhm{\Omega((\log n)^2|S|/n) }$ and so $t_{hit}(\pi ,S)\leq \BO{\max \{n/|S|, n/\log n  \}}$. Thus as discussed earlier the upper bound follows from Theorem \ref{partialparupper}. 
	\end{proof}
	
	The proof of Claim \ref{clmreturns} will make heavy use of \cite[Lem.\ 7]{CooperFriezeVacant}, we paraphrase it hear for convenience: 
	\begin{lem}\label{hypercubewalk}
		Let $\mathcal{W}(i)$, $i\geq 0$ be the lazy walk in $H_d$ and $T=(\log n)^2$. Then, for any $v \in V$,
		\begin{enumerate}[(i)]
			\item
			$\displaystyle{R_v :=\sum_{i=0}^T \tilde{p}_{u,u}^t= 2+\frac{2}{d}+\BO{\frac{1}{d^{2}}}.}$
			\item Suppose $W(0)$  is at distance at least 2 from $v$ (resp. at least 3 from $v$). The probability $W$ visits $\Gamma(v)$ within $L=O(T \log n)$ steps is $P(2,L)=O(1/d)$ (resp. $P(3,L)=O(1/d^{2})$).
			\item Let $C\subseteq N(v)$. For a walk starting from $u \in C$, let $R_C$ denote the expected number of returns to $C$ within $T$ steps. Then, in the lazy walk,   $R_C=2+\BO{1/d}$.
		\end{enumerate}
	\end{lem}
	
	\begin{proof}[Proof of Claim \ref{clmreturns}] Let $C=\{u \}\cup (\Gamma(u)\cap S) $. Let $R_2(u,t)$ (resp. $R_{\geq 3}(u,t)$) be the expected number of visits to a vertex at distance $2$ (resp. distance $\geq 3$) from $u$ before time $t$. Then if $T=(\log n )^2$ we have 
		\begin{equation}\label{returnsbdd33}\sum\limits_{i=1}^{(\log n)^2} \tilde{p}_{u,S}^i \leq  R_C +\frac{\log n}{2}\cdot R_2(u,T) + \frac{\log n}{2}\cdot  R_{\geq 3}(u,T), \end{equation} where the first term counts returns to $u$ and the portion of $S$ in $u$'s neighbourhood, the second term counts visits to members of $S$ at distance $2$ and the third to those at distance $3$ or greater (where we recall that $|S|\leq (\log n)/2$ .
		
		We have the crude bound $R_2(u,T)\leq P(2,T)\cdot R_v $, where $v$ is any vertex by transitivity of $H_d$. Thus by Lemma \ref{hypercubewalk},  $R_2(u,T)\leq \BO{1/d}\cdot (2+\BO{1/d}) = \BO{1/d}$. Similarly $R_{\geq 3}(u,T)\leq P(3,T)\cdot R_v = \BO{1/d^2}$. Note $ R_{C} = \BO{1}$ by Lemma \ref{hypercubewalk} (ii). It follows from \eqref{returnsbdd33} that $\sum_{i=1}^T \tilde{p}_{u,S}^i= \BO{1} $.  
	\end{proof} 
	\subsection{Tori and Grids} 
	Let $B(r):= \left\{\mathbf{x} \in \mathbb{Z}^d : x_1^2 +\dots + x_d^2 \leq r^2 \right\}$ be the ball of radius $r$ in $\mathbb{Z}^d$.	
	\begin{lem}\label{slowwalklem} Let $d=1,2$ be fixed. For any $\beta >0$ there exists some $C>0$ such that the random walk of length $Ct\log t$ from the origin in $\mathbb{Z}^d$ does not exit $B(\sqrt{t})$ with probability at least $1/t^\beta$. 
	\end{lem}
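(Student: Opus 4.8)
The plan is to show that the confinement event is only \emph{polynomially} rare by reducing it to the survival of a \emph{killed} walk inside a box inscribed in the ball, and then lower-bounding that survival probability through the principal Dirichlet eigenvalue of the box. The point to keep in mind is that we need a \textbf{lower} bound on a small-probability event, which is the genuinely delicate direction; the clean device for this is a change of measure (Doob $h$-transform) driven by a \emph{positive} eigenfunction, which rewrites the survival probability as an expectation under an honest Markov chain. Concretely, set $N:=Ct\log t$ and $m:=\lfloor\sqrt{t/d}\rfloor$, and let $Q_m:=\{-m,\dots,m\}^d$. Since $\sum_i x_i^2\le dm^2\le t$ for $x\in Q_m$, we have $Q_m\subseteq B(\sqrt t)$, so it suffices to bound below the probability that the walk started at the origin stays in $Q_m$ for all $N$ steps. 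Writing $Q$ for the sub-stochastic transition operator of the walk killed on leaving $Q_m$, the quantity of interest is $\Pro{X_1,\dots,X_N\in Q_m}=\sum_{y\in Q_m}Q^N(0,y)$.

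The key observation is that this killed operator admits an explicit positive eigenfunction. Define
\[
 f(x)\;=\;\prod_{i=1}^{d}\cos\!\left(\frac{\pi x_i}{2(m+1)}\right),\qquad x\in Q_m,
\]
and $\lambda:=\cos\!\big(\pi/(2(m+1))\big)$. Because $\tfrac12\big(\cos(a-b)+\cos(a+b)\big)=\cos a\cos b$ with $b=\pi/(2(m+1))$, and because each cosine factor vanishes on the boundary layer $|x_i|=m+1$, a one-line computation gives $Qf=\lambda f$ on all of $Q_m$ (with no error term at the faces, precisely because the killed terms are exactly those where $f=0$), and $0<f\le f(0)=1$.

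Next I would apply the change of measure. The $h$-transform $\hat P(x,y):=Q(x,y)f(y)/(\lambda f(x))$ is stochastic on $Q_m$, and telescoping the path weights $Q(x,y)=\lambda\,\hat P(x,y)f(x)/f(y)$ along a trajectory yields
\[
 \Pro{X_1,\dots,X_N\in Q_m}\;=\;\lambda^N\,\hat{\mathbb E}_0\!\left[\frac{1}{f(Y_N)}\right]\;\ge\;\lambda^N,
\]
where $(Y_k)$ is the $\hat P$-chain from the origin and the inequality uses only $f\le f(0)=1$. (For $d=2$ one could alternatively decouple the walk into the two independent one-dimensional walks $X+Y$ and $X-Y$, confine each to an interval, and obtain the same estimate as a square; the product-eigenfunction route above handles $d=1$ and $d=2$ simultaneously.)

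It then remains to estimate $\lambda^N$. Since $\cos\theta\ge 1-\theta^2/2$, we have $\lambda\ge 1-\tfrac{\pi^2}{8(m+1)^2}=1-\Theta(d/t)$, and therefore
\[
 \lambda^N\;\ge\;\exp\!\left(-\frac{\pi^2 N}{8(m+1)^2}(1+o(1))\right)\;=\;\exp\!\left(-\frac{\pi^2 dC}{8}(1+o(1))\log t\right)\;=\;t^{-\frac{\pi^2 dC}{8}(1+o(1))}.
\]
Choosing $C=C(\beta,d)$ small enough that $\tfrac{\pi^2 dC}{8}<\beta$ makes this at least $t^{-\beta}$ for all large $t$, proving the claim. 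The essential content — and the thing to get right — is that the confinement probability is polynomial rather than exponential: this is exactly because the Dirichlet spectral gap of a box of radius $\sqrt t$ is of order $1/t$, so over $N\asymp t\log t$ steps the accumulated loss $(1-\lambda)N$ is only of order $\log t$. Once the boundary bookkeeping is handled so that $f$ is an honest Dirichlet eigenfunction, the bound $f\le 1$ delivers the lower bound essentially for free.
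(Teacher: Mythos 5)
Your proof is correct, but it takes a genuinely different route from the paper. The paper argues by a renewal-type block decomposition: it splits the $Ct\log t$ steps into $\Theta(\log t)$ blocks of length $\Theta(t)$, uses the CLT together with the reflection principle to show the walk stays within a ball of radius $\sqrt{t}/2$ of its block-starting point with constant probability, and then chains blocks via the event of ending each block back inside $B(\sqrt{t}/2)$, yielding a confinement probability of the form $(2d+1)^{-\alpha\log t}\geq t^{-\beta}$. Your argument instead inscribes the box $Q_m\subseteq B(\sqrt t)$, uses the exact product-of-cosines Dirichlet eigenfunction of the killed walk, and converts the principal eigenvalue $\lambda=\cos\bigl(\pi/(2(m+1))\bigr)$ into the lower bound $\lambda^N$ via the $h$-transform. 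Your route buys two things: it is exact (no CLT asymptotics, no ``geometric considerations'' of the kind the paper invokes for $\Pro{\mathcal{E}_{i+1}\mid\mathcal{E}_i}$), and it gives the essentially sharp constant in the exponent, namely $\beta\approx\pi^2 dC/8$, which is the true decay rate dictated by the Dirichlet spectral gap of a box of side $\sqrt t$; the paper's chaining constant $\log(2d+1)$ and unspecified $c$ are much cruder. The paper's approach, in exchange, confines the walk to the ball directly and needs no spectral input, only soft limit theorems. One small streamlining of your write-up: the $h$-transform is not needed, since $f\leq 1$ already gives
\begin{equation*}
\Pro{X_1,\dots,X_N\in Q_m}=\sum_{y\in Q_m}Q^N(0,y)\;\geq\;\sum_{y\in Q_m}Q^N(0,y)f(y)\;=\;\bigl(Q^N f\bigr)(0)\;=\;\lambda^N f(0)\;=\;\lambda^N,
\end{equation*}
which is the same bound in one line.
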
	
	\begin{proof}
		
		Let $S_j$ be the position of a random walk at time $j$ started from $0$. For $t>0$ let $\mathcal{E}_0$ be the event $\left\{S_j \in B(\sqrt{t}/2) \text{ for all } 0\leq j \leq c^2t-1  \right\}$. By the Central Limit Theorem \cite{LawInter} for all $\varepsilon >0$ there is some $c >0$ such that for large $t$
		\[\Pro{\frac{S_{ c^2 t}}{\sqrt{c^2 t}} \not\in B\left(\frac{  \sqrt{t}/2 }{\sqrt{c^2 t}} \right)} \leq \left(1+\mathcal{O}\!\left(\frac{1}{\sqrt{t}}\right)\right) \int_{\mathbb{R}^2\backslash B\left(\frac{1}{ 2c}\right)}\frac{e^{-|\mathbf{x}^2|}}{\pi} \;\mathrm{d}\,\mathbf{x} \leq \varepsilon.    \] Thus by the Reflection Principal \cite[Prop. 1.6.2]{lawlermodern} the probability a random walk stays within the ball $B\left(\sqrt{t}/2\right)$ for $c^2t$ units of time is at least $1-2\varepsilon$. For $i\geq 1$ let $\mathcal{E}_i$ be the event 
		\[\left\{ S_j \in B(\sqrt{t}) \text{ for all } i\cdot c^2t\leq j \leq (i+1)\cdot c^2t-2  \right\}\cap \left\{S_{(i+1)\cdot c^2t-1 }\in B(\sqrt{t}/2) \right\}.\]By geometric considerations we see that $\Pro{\mathcal{E}_{i+1}|\mathcal{E}_{i}} \geq \left(1-2\varepsilon\right) /2d \geq 1/(2d+1)$ for small enough $c>0$. Observe that $\left\{S_k \in B(\sqrt{t}) \text{ for all } 0\leq  k\leq \alpha c^2t \log t  \right\}\supseteq \bigcap_{i=0}^{\alpha \log t} \mathcal{E}_i$, for any $\alpha >0$.   Thus for any fixed $\beta >0$ provided $\alpha\leq \beta/\log(2d+1)$ we have 
		\[\Pro{S_k \in B(\sqrt{t}) \text{ for all } 0\leq  k\leq \alpha c^2t \log t}\geq (1-2\varepsilon) \left( \frac{1}{2d+1}\right)^{\alpha \log t } \geq \frac{1}{t^\beta  }.   \] The result follows by taking $c>0$ small enough.
	\end{proof}
	\begin{thm}\label{cycle} 
		For the path/cycle, $\Theta(n^2 \log n)$ steps are needed in expectation and with probability at least $1-o(1)$.
	\end{thm}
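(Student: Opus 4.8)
The plan is to establish matching bounds $\Theta(n^2\log n)$ for $t_{seq}$ and $t_{par}$ on both the path $P_n$ and the cycle $C_n$, together with the stated high-probability versions. For the \emph{upper bound}, recall that on either graph the worst-case hitting time is $t_{hit}=\Theta(n^2)$: the hitting time between two vertices of $C_n$ at distance $k$ equals $k(n-k)$, maximized at $k=n/2$, and likewise $(n-1)^2$ between the endpoints of $P_n$. Plugging this into Theorem~\ref{lem:general} gives $\Pro{\tau_{par}^v > 6\,t_{hit}\log_2 n}\le n^{-2}$ and $t_{par}=O(t_{hit}\log n)=O(n^2\log n)$; by the stochastic domination of Theorem~\ref{PStocS} the same upper bound holds for $\tau_{seq}^v$ and $t_{seq}$, both w.h.p.\ and in expectation. (For the path one may instead invoke the sharper result established above, which identifies $t_{seq}(P_n),t_{par}(P_n)=(1\pm o(1))\Ex{M}$; a routine tail estimate for the one-dimensional hitting time then yields $\Ex{M}=\Theta(n^2\log n)$.)

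For the \emph{lower bound} it suffices, again by Theorem~\ref{PStocS}, to prove $\tau_{seq}^v\ge c\,n^2\log n$ w.h.p.\ for a suitable constant $c>0$ and a well-chosen origin; this transfers to $\tau_{par}^v$ and to both expectations. First I would record the structural fact that on the path/cycle the IDLA aggregate started from the origin is always a contiguous interval $[-a_m,b_m]$ after $m$ particles have settled, since a walk from the centre can only leave through one of the two current endpoints. The goal is to exhibit $\Omega(n)$ particles each of which is \emph{forced to travel distance} $\Omega(n)$ before it can settle, i.e.\ particles that start when the origin sits deep inside the aggregate.

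To guarantee this I would show that the aggregate stays \emph{balanced} about the origin. Writing $D_m=a_m-b_m$, gambler's ruin gives that the next particle extends the interval to the right (increments $b$) with probability $\tfrac{a_m+1}{m+2}$ and to the left (increments $a$) with probability $\tfrac{b_m+1}{m+2}$, so that $\Ex{D_{m+1}\mid\mathcal F_m}=\tfrac{m+1}{m+2}D_m$; hence $M_m:=(m+1)D_m$ is a martingale started at $0$. Standard concentration (Azuma, accounting for the $O(m)$-sized increments of $M_m$) yields $|D_m|=O(\sqrt{n\log n})=o(n)$ uniformly over $m\le n$ w.h.p., so that for every $m\in[n/2,3n/4]$ we have $\min(a_m,b_m)=\tfrac{m-|D_m|}{2}\ge n/8$, i.e.\ the aggregate contains the ball $B(n/8)$ around the origin. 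Call this good event $G$. On $G$, any of these $\Omega(n)$ particles must exit $B(n/8)$ before settling, and since $B(n/8)$ is a proper arc the walk inside it is an ordinary walk on $\mathbb{Z}$; so by Lemma~\ref{slowwalklem} with $d=1$, $t=(n/8)^2$ and any fixed $\beta\in(0,1/2)$, each such particle fails to settle within $c\,n^2\log n$ steps with conditional probability at least $n^{-2\beta}$.

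It remains to upgrade ``each particle is slow with polynomially small probability'' into ``some particle is slow w.h.p.''. Let $\mathcal F_m$ be generated by the first $m$ trajectories, let $G_m\in\mathcal F_m$ be the balancing event that the aggregate after $m$ settlements contains $B(n/8)$, and let $A_m$ be the event that the $(m+1)$-st particle takes at least $c\,n^2\log n$ steps. On $G_m$ we have $\Pro{A_m^c\mid\mathcal F_m}\le 1-n^{-2\beta}$, and crucially a slow particle still settles eventually, so the interval still grows by one and the argument iterates. A tower-property computation over the $n/4$ indices $m\in[n/2,3n/4]$, using $A_m^c\cap G_m$, then gives $\Pro{\bigcap_m A_m^c}\le (1-n^{-2\beta})^{\,n/4}+\Pro{\bigcup_m G_m^c}=o(1)$ since $\beta<1/2$. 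Hence $\tau_{seq}^v\ge c\,n^2\log n$ w.h.p., which also yields the expectation lower bound $\Omega(n^2\log n)$. The main obstacle is the balancing step: proving that the origin stays $\Omega(n)$-deep inside the aggregate for linearly many particles (the martingale concentration for $D_m$) and threading this through the sequential conditional-independence argument; by contrast the slow-walk input is supplied directly by Lemma~\ref{slowwalklem}.
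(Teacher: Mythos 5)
Your proposal is correct, and its skeleton is the same as the paper's: the upper bound is read off from Theorem~\ref{lem:general} with $t_{hit}=\Theta(n^2)$, and the lower bound exploits that the aggregate is a contiguous interval which, after a constant fraction of particles, contains a ball $B(\Omega(n))$ around the origin, so that each of the $\Omega(n)$ remaining particles is ``slow'' with probability at least $n^{-\beta}$ by Lemma~\ref{slowwalklem}, whence some particle is slow w.h.p. Where you differ is in the implementation of the two sub-steps. For the balancing step the paper argues monotonically --- by gambler's ruin the endpoint nearer the origin is extended with probability at least $1/2$, so a Chernoff bound gives $[-n/4,n/4]\subseteq$ aggregate after $2n/3$ particles --- whereas you set up the exact martingale $M_m\propto m\,D_m$ for the imbalance $D_m=a_m-b_m$ and apply Azuma; this is heavier machinery but yields the quantitative bound $|D_m|=O(\sqrt{n\log n})$ (note your denominators $m+2$ have a harmless off-by-one, and the claim of uniformity over \emph{all} $m\le n$ is not what Azuma gives for small $m$ --- fortunately you only use $m\in[n/2,3n/4]$, where it does hold). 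Second, you make explicit, via the tower property over the filtration $\mathcal F_m$ and the events $A_m^c\cap G_m$, the conditional-independence argument that turns per-particle failure probability $n^{-2\beta}$ into a $1-o(1)$ bound; the paper treats this multiplication of conditional probabilities implicitly. Finally, for the path the paper uses a parity/reflection reduction to the cycle, while you either invoke the earlier theorem identifying $t_{seq}(P_n),t_{par}(P_n)$ with $\Ex{M}$ (which directly gives only the expectation statement) or, better, note that your interval-plus-martingale argument applies verbatim on the path with the origin at the midpoint, since the aggregate never reaches the path's endpoints in the window $m\le 3n/4$; that second reading gives the w.h.p.\ statement for the path as well and should be the one you make explicit.
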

	\begin{proof}
		The upper bound for either graph follows from Lemma \ref{lem:general}. For the lower bound in the cycle if at some time an interval $[-a,b]$ has been settled around the origin then by the gamblers ruin formula the end point closest to the origin receives the next particle with probability at least $1/2$. Thus by a Chernoff bound w.h.p.\ after $2n/3$ particles have settled the interval $[-n/4,n/4]$ is occupied. Each of the remaining $n/3$ particles must exit the ball $B(n/4)$ in order to settle. Thus by Lemma \ref{slowwalklem} there is some $C>0$ such that the probability that one walk takes longer than $Cn^2\log(n)$ to exit $B(n/4)$ is at least $1- (1-1/n^\beta)^{n/3} = 1-o(1)$. The result for the path by similarly considering a return to the origin as a change in parity for a walk on the cycle and adding settled vertices to both ends simultaneously. 
	\end{proof}
	The next result does not settle the dispersion time on the two-dimensional grid, but improves on the trivial $\Omega(n)$ bound.
	\begin{pro}
		Let $G$ be either the finite box $\left[-\lfloor\sqrt{n}/2\rfloor , \lfloor\sqrt{n}/2\rfloor \right]^2\subset \mathbb{Z}^2$ in the two-dimensional grid, or the two-dimensional finite torus on $n$ vertices. Then $t_{seq}(G),t_{par}(G)=\Omega(n \log n)$.
	\end{pro}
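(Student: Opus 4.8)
The plan is to prove the bound for $t_{seq}(G)$ only, since $t_{seq}(G) \le t_{par}(G)$ is an immediate consequence of the stochastic domination $\tau_{seq}^v \preceq \tau_{par}^v$ from Theorem~\ref{PStocS}. Throughout I would take the source $o$ to be the centre of the box (any vertex on the torus, by symmetry) and fix a small constant $c>0$ with $\pi c^2 < 1/2$, so that the ball $B(c\sqrt{n})$ lies well inside the domain and contains at most $n/2$ vertices, and in particular can be filled using at most $n/2$ particles, leaving a constant fraction $\epsilon n$ of the particles still unsettled.

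\emph{Inner filling.} The first step is to show that, with probability $1-o(1)$, there is a time in the process at which every vertex of $B(c\sqrt{n})$ is occupied while at least $\epsilon n$ particles remain unsettled. To obtain this I would invoke the inner bound of the Lawler--Bramson--Griffeath shape theorem \cite{LawBramGriff,Law95}: after $k=\pi r^2$ particles the aggregate contains $B((1-o(1))r)$ with high probability, and applying this with $r$ slightly larger than $c\sqrt{n}$ gives $B(c\sqrt{n}) \subseteq A_k$ for some $k\le n/2$. Since this estimate is stated on $\mathbb{Z}^2$, I would transfer it to the finite box/torus by coupling the finite-domain IDLA with the $\mathbb{Z}^2$ process; the coupling is exact up until the first time a walk reaches the boundary of the box (or winds around the torus) before settling, an event which does not occur for any of the first $n/2$ particles with probability $1-o(1)$, because the relevant ball $B(c\sqrt{n})$ has radius bounded away from the domain radius $\sqrt{n}/2$.

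\emph{Forcing long walks.} Once $B(c\sqrt{n})$ is fully occupied, each of the remaining $\ge \epsilon n$ particles starts at $o$ and cannot settle until it leaves $B(c\sqrt{n})$, so its number of steps is at least the exit time of an independent simple random walk from $o$ out of $B(c\sqrt{n})$. I would then apply Lemma~\ref{slowwalklem} with $t=c^2 n$ (so that $\sqrt{t}=c\sqrt{n}$): each such walk fails to exit within $Ct\log t = \Theta(n\log n)$ steps with probability at least $t^{-\beta}=\Omega(n^{-\beta})$, and I would fix $\beta=1/2$. The key point is that the trajectories of these particles up to their exit times are genuinely independent simple random walks, since the dependence introduced by the aggregate only affects \emph{where} each particle eventually settles, not whether it has already left $B(c\sqrt{n})$. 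Hence the probability that none of the $\ge \epsilon n$ remaining particles takes more than $\Theta(n\log n)$ steps is at most $(1-\Omega(n^{-1/2}))^{\epsilon n}=\exp(-\Omega(\sqrt{n}))=o(1)$. Combining this with the inner-filling event gives $\tau_{seq}^o(G)=\Omega(n\log n)$ with probability $1-o(1)$, whence $t_{seq}(G)\ge \Ex{\tau_{seq}^o(G)}=\Omega(n\log n)$ and therefore $t_{par}(G)=\Omega(n\log n)$ as well.

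\emph{Main obstacle.} The delicate part is the inner-filling step, not the random-walk estimate. On the infinite lattice the inner bound is classical, but here $B(c\sqrt{n})$ is a constant fraction of the whole domain, so the aggregate genuinely interacts with the boundary, and the clean $\mathbb{Z}^2$ shape theorem must be justified by a coupling that is valid only before boundary effects set in; verifying that the early particles never feel the boundary (and, on the torus, never wind around before settling) is where the care is needed. Everything downstream is robust: the argument wastes constant factors freely (any $c$ with $\pi c^2<1/2$ and any $\beta<1$ suffice), so no sharp shape information is required---only that a ball of radius $\Theta(\sqrt{n})$ around the origin is filled while a constant fraction of the particles are still in flight.
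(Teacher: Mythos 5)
Your proposal follows essentially the same route as the paper's proof: reduce to $t_{seq}$ via Theorem~\ref{PStocS}, couple the finite-domain process with IDLA on $\mathbb{Z}^2$, use a shape theorem to exhibit a time at which a ball of radius $\Theta(\sqrt{n})$ around the origin is fully occupied while $\Omega(n)$ particles are still unsettled, and then apply Lemma~\ref{slowwalklem} together with the independence of the walks' exit times to force one walk of length $\Omega(n\log n)$. The differences are minor: the paper invokes the logarithmic-fluctuation theorem of Jerison--Levine--Sheffield \cite{Sheff1} and conditions at the stopping time $t^*$ at which the first particle settles on the boundary (so the filled ball has radius $\lfloor\sqrt{n}/2\rfloor - O(\log n)$ and more than $n/5$ particles remain), whereas you use the cruder Lawler--Bramson--Griffeath statement, a smaller ball $B(c\sqrt{n})$ with $\pi c^2 < 1/2$, and a deterministic particle count; your remark that only constant-factor shape information is needed is correct, so that trade is fine.

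However, one step of your write-up is not justified as stated: the claim that no walk among the first $n/2$ particles reaches the boundary (so that the coupling stays exact) ``because the relevant ball $B(c\sqrt{n})$ has radius bounded away from the domain radius.'' That is not the reason, and the \emph{inner} bound --- the only part of the shape theorem you invoke --- cannot supply it: the inner bound leaves up to $o(n)$ aggregate vertices unaccounted for, and these could in principle form a thin tentacle of length much larger than $\sqrt{n}$ reaching the boundary, which would break the coupling. What is actually needed is (i) the observation that in IDLA each particle settles at the first unoccupied vertex it visits, so its entire trajectory before settling lies inside the current aggregate (plus one final step), and (ii) the \emph{outer} inclusion of the shape theorem, which confines the aggregate of the first at most $n/2$ particles to $B\bigl((1+o(1))\sqrt{n/(2\pi)}\bigr)$, a ball strictly inside the box and of diameter smaller than the torus circumference. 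Since \cite{LawBramGriff,Law95} provide both inclusions, the fix is immediate; but as written, this step --- which you yourself flagged as the main obstacle --- is asserted with an incorrect justification rather than closed.
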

	\begin{proof}
		We will prove the lower bound for $t_{seq}$ only, since the corresponding lower bound for $t_{par}$ will follow from $t_{par} \geq t_{seq}$.
		
		Let $A(t)$ denote the aggregate of the Sequential-IDLA once $t$ particles have settled. Theorem 1 of \cite{Sheff1} states that for each $\gamma$ there exists an $a = a(\gamma
		) < 1$ such that for all sufficiently large $r$, \begin{equation}\label{shapethmeq}\mathbb{P}\left[B( r-a\log r)		\subseteq \mathcal{A}(\pi r^2)		\subseteq  B(r+a\log r)\right]\geq 1- r^{-\gamma}.\end{equation} We can couple the process on $G$ with the process on $\mathbb{Z}^2$ up until the point $t^*$ when the first particle settles a vertex on the boundary (or wraps around in the torus). By \eqref{shapethmeq} we can condition on the aggregate $A(t^*)$ containing a ball of radius $\lfloor\sqrt{n}/2\rfloor - a\log n $ w.h.p., for some $a<\infty $. Thus the remaining $n-t^* > (1-\pi/4)n>n/5$ particles must all exit the ball $B(\sqrt{n}/3)$ before settling. Now by Lemma \ref{slowwalklem} the probability that one walk takes longer than $Cn\log n $ to do this is at least $1- (1-1/n^\beta)^{n/5} = 1-o(1)$. The result follows.  
	\end{proof}	
	
	\begin{thm}
		Let $G$ be the $d$-dimensional torus/grid where $d\geq3$. Then $t_{seq}(G),t_{par}(G) =\Theta(n)$.
	\end{thm}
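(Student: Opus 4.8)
The plan is to establish matching $\Theta(n)$ bounds separately. The lower bound is immediate: the $d$-dimensional torus/grid is $2d$-regular with $|E|=dn$, so Theorem~\ref{lowerbound} gives $t_{seq}(G)=\Omega(|E|/\Delta)=\Omega(n)$, and by the stochastic domination $\tau_{seq}^v\preceq\tau_{par}^v$ of Theorem~\ref{PStocS} the same lower bound transfers to $t_{par}(G)$. Hence all the content lies in the matching upper bound $t_{par}(G)=O(n)$; again by Theorem~\ref{PStocS} it suffices to bound $t_{par}$, and by Theorem~\ref{lazy2normal} we may freely work with lazy walks.

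The main tool for the upper bound is Theorem~\ref{partialparupper}, which controls $t_{par}$ by a sum over scales $j$ of $t_{mix}+\max_{S\colon|S|\geq 2^{j-2}}t_{hit}(\pi,S)$. For $d\geq 3$ one has $t_{mix}=\Theta(n^{2/d})$, so the mixing contribution is harmless: $\sum_{j=1}^{\lceil\log_2 n\rceil}t_{mix}=O(n^{2/d}\log n)=o(n)$ since $2/d\leq 2/3$. Everything therefore reduces to the right estimate for hitting times of sets together with a summation. The key estimate I would establish is that for every set $S$,
\[
t_{hit}(\pi,S)=O\!\left(\frac{n}{|S|^{\,1-2/d}}\right).
\]
Granting this, the worst set at scale $j$ has $|S|=2^{j-2}$ (the bound is decreasing in $|S|$, and $t_{hit}(\pi,\cdot)$ is monotone under inclusion), so $\max_{S\colon|S|\geq 2^{j-2}}t_{hit}(\pi,S)=O\!\left(n\cdot 2^{-(j-2)(1-2/d)}\right)$. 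Since $1-2/d>0$ for $d\geq 3$, summing the resulting geometric series over $j$ gives $\sum_j\max_S t_{hit}(\pi,S)=O(n)$, and plugging both contributions into Theorem~\ref{partialparupper} yields $t_{par}(G)=O(n)$.

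The main obstacle is proving the set-hitting estimate, which I would do by a second-moment/return-probability argument mirroring the hypercube proof. Fix $S$ and $T=\Theta(n/|S|^{1-2/d})$, and let $Z$ count visits to $S$ in $T$ steps started from $\pi$, so that $\mathbf{E}_\pi[Z]=T\pi(S)=T|S|/n$ and
\[
\Pro{\tau_S\leq T}=\Pro{Z\geq 1}\geq\frac{\mathbf{E}_\pi[Z]}{\mathbf{E}_\pi[Z\mid Z\geq 1]}\geq\frac{T|S|/n}{\max_{u\in S}\sum_{t=0}^{T}p^t_{u,S}}.
\]
The crux is the denominator $\max_{u}\sum_t p^t_{u,S}$, the expected number of visits to $S$ from a worst start. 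For $d\geq 3$ the torus walk is locally transient: for $t$ up to $\polylog(n)\cdot t_{mix}$ it does not feel the wrap-around and obeys the $\mathbb{Z}^d$ return bound $p^t_{u,u}=O(t^{-d/2})$, whence $\sum_t p^t_{u,u}=O(1)$. Using $p^t_{u,v}\leq p^t_{u,u}$ (Cauchy--Schwarz together with vertex-transitivity, the lazy walk removing parity issues) and the Green's-function estimate $G(u,v)\asymp\dist(u,v)^{2-d}$, the expected number of visits to $S$ is maximized when $S$ is a ball and equals $O\!\left(\sum_{v\in S}\dist(u,v)^{2-d}\right)=O(|S|^{2/d})$. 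Substituting gives $\Pro{\tau_S\leq T}=\Omega(1)$, and iterating over $O(1)$ independent epochs yields $t_{hit}(\pi,S)=O(n/|S|^{1-2/d})$. The remaining care is that these return-probability bounds must be justified on the \emph{finite} torus, and then transferred to the box by a standard comparison/reflection argument; I would package this via the general return-probability estimates of Lemma~\ref{setestimate} and Appendix~\ref{section:BoundsHitSet}.
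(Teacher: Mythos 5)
Your proposal is correct and follows essentially the same route as the paper: the lower bound via Theorem~\ref{lowerbound} (transferred to $t_{par}$ by Theorem~\ref{PStocS}), and the upper bound by plugging the heat-kernel estimate $p^t_{u,v}\leq 1/n+O(t^{-d/2})$ into Lemma~\ref{setestimate} (giving $t_{hit}(\pi,S)=O(n/|S|^{1-2/d})$, i.e.\ the case $\epsilon=d/2-1$) and then into Theorem~\ref{partialparupper}, where the geometric series over scales sums to $O(n)$. Your Green's-function derivation of the set-hitting estimate is just an alternative (and equivalent) way of obtaining what Lemma~\ref{setestimate} already provides, and you correctly note it can be packaged through that lemma; the only difference from the paper is that you spell out the summation and mixing-time bookkeeping that its one-line proof leaves implicit.
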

	\begin{proof}
		The lower bound for $t_{seq}$ follows from Theorem \ref{lowerbound}. For the $d$-dimensional torus/grid we have the well-known bound $p_{u,v}^t \leq 1/n + \mathcal{O}(t^{-d/2})$. This estimate applied in combination with Lemma \ref{setestimate} to Theorem \ref{partialparupper} implies a bound of $\mathcal{O}(n)$ on the dispersion time whenever $d \geq 3$. 
	\end{proof}
	
	\subsection{Binary Tree}
	In this section we consider the binary tree $T_n$ with $n+1 = 2^k-1$ vertices, $2^{k-1}$ leaves and root $r$. 
	
	\begin{thm}\label{thm:binarytree}
		For the binary tree $T_n$ with root $r$, we have $\tau_{seq}^r,\tau_{par}^r  =\BT{n(\log n )^2}$ w.h.p. and in expectation, consequently $t_{seq}, t_{par}=\BT{n(\log n)^2} $. 
	\end{thm}
	Recall that the hitting time in the Binary tree with $n$ vertices is $\mathcal{O}(n\log n)$. Thus, by Theorem \ref{theorem:general}, the dispersion time of the Parallel-IDLA process is $\mathcal{O}(n(\log n)^2)$ w.h.p.\ and in expectation. Thus to establish Theorem \ref{thm:binarytree} it remains to show that the dispersion time of Sequential-IDLA is at least $\Omega( n(\log n)^2)$ w.h.p., proving that $t_{seq} = \Theta(t_{par}) = \Theta(n(\log n)^2)$, due to Theorem~\ref{PStocS}. 
	
	To prove the lower bound we show the last $poly(n)$ unoccupied vertices are clustered in such a way that one of the last $poly(n)$ walks will have trouble finding the cluster. The first step of this strategy is to establish the following lemma, which in some sense a shape theorem for the binary tree.  
	
	\begin{lem}\label{treeLRlemma}
		Consider a complete binary tree with $n=2^k-1$ vertices and the root $r$ being the source of the Sequential-IDLA. Let $\tau$ be the first time when one of the two sub-trees with $2^{k-1}-1$ vertices is completely filled and fix $0< \varepsilon <1/4$. Then with probability at least $1-2n^{-\varepsilon}$, the other sub-tree still has at least $n^{\varepsilon}/(3\log_2 n)$ unoccupied vertices at time $\tau$.
	\end{lem}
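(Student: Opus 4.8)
The plan is to use the left--right symmetry of the tree to reduce the claim to an anti-concentration statement about how the particles split between the two sub-trees. Let $T_L,T_R$ be the two sub-trees hanging from the root, each with $m=2^{k-1}-1$ nodes, and let $U_L(t),U_R(t)$ denote the numbers of still-empty nodes in $T_L,T_R$ after $t$ particles (besides the one frozen at the root) have settled; put $\ell:=\lceil n^{\varepsilon}/(3\log_2 n)\rceil$. Since at $\tau$ one of $U_L,U_R$ equals $0$ and the other equals $2m-\tau$, the event to be bounded from below is exactly $\{\tau\le 2m-\ell\}$, and its complement forces both sub-trees to have between $1$ and $\ell-1$ free nodes at time $2m-\ell$. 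I would argue through the stopping time $\sigma:=\min\{t:\min(U_L(t),U_R(t))=\ell\}$ and take a union bound over which of the two sub-trees attains the minimum at $\sigma$; this is the source of the factor $2$ in the statement.

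The heart is the bulk phase $[0,\sigma]$, during which both sub-trees still contain at least $\ell$ empty nodes. Here I would track the signed imbalance $D_t:=U_L(t)-U_R(t)$, a $\pm1$ process that by the swap symmetry $T_L\leftrightarrow T_R$ has $\Ex{D_t}=0$. The key random-walk input is that, as long as a sub-tree has at least $\ell$ free nodes, a particle entering it (its first step from the root lands on a uniformly random child) reaches that sub-tree's unfilled region before returning to the root with a probability that is the same for both sub-trees up to a factor $1+o(1)$; consequently a fresh particle settles in $T_L$ with probability $p_t$ within $o(1)$ of $\tfrac12$ and with negligible mean-reverting drift, so $D_t$ behaves like a fair $\pm1$ walk. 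A second-moment estimate using $\Ex{D_{t+1}^2\mid\mathcal F_t}=D_t^2+1-2D_t(2p_t-1)$ shows $\mathrm{Var}(D_t)$ grows essentially linearly, so by time $\sigma$ (which is of order $m$) the imbalance has spread to order $\sqrt n$. A standard anti-concentration estimate for this near-fair walk stopped at $\sigma$ then gives $\Pro{|D_\sigma|<2\ell}=\BO{\ell/\sqrt n}=o(n^{-\varepsilon})$, using $\varepsilon<1/4$. Thus, with the required probability, when the first sub-tree drops to $\ell$ free nodes the other one still has at least $3\ell$.

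It remains to pass from $\sigma$ to $\tau$, and this is where I expect the main difficulty. The point is that the threshold $\ell=n^{\varepsilon}/(3\log_2 n)$ sits comfortably above the scale (of order $\log n$, set by the return structure of the walk on the tree) below which a sub-tree's remaining empty nodes become genuinely hard to hit. Hence the sub-tree that reached $\ell$ free nodes at $\sigma$ is still easy to fill and does so using only $\BO{\ell}$ further particles, during which the other sub-tree --- also still easy, having at least $3\ell$ free nodes --- loses fewer than $2\ell$ of them; so at $\tau$ the loser retains at least $\ell$ empty nodes, as claimed. Making this quantitative requires the tree-specific hitting estimates underlying both steps: the near-balanced settling probability used in the bulk phase, and a bound on the number of particles needed to fill the last $\Theta(\ell)$ nodes of a nearly-full sub-tree. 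Controlling these probabilities as a function of the number and depth of the remaining empty vertices --- and in particular verifying that $\ell$ lies above the hard-to-hit scale, so that the imbalance built up in the bulk phase is not eroded in the endgame --- is the technical crux of the proof.
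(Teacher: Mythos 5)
Your proposal works in \emph{particle time} (one tick per settled particle), and its two central claims --- that during the bulk phase the next particle settles in $T_L$ with probability $p_t = \tfrac12 + o(1)$ with ``negligible mean-reverting drift'', and that in the endgame the side holding $3\ell$ free nodes loses fewer than $2\ell$ of them while the side holding $\ell$ fills up --- are exactly where the argument breaks. The routing of settlements between the sub-trees is configuration-dependent: a particle at the root settles left with probability $p_t = q_L/(q_L+q_R)$, where $q_L,q_R$ are the probabilities that one excursion from the root into the respective sub-tree reaches an unoccupied vertex before returning to the root, and these depend on how many unoccupied vertices remain and how deep they sit. This drift is genuinely mean-reverting (the fuller side is harder to settle in), and your own identity $\Ex{D_{t+1}^2\mid\mathcal F_t}=D_t^2+1-2D_t(2p_t-1)$ shows what you would need to keep the variance growing linearly while $|D_t|\approx\sqrt n$: the bound $|p_t-\tfrac12|=O(n^{-1/2})$, not $o(1)$. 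No such bound holds, nor can it, since $q_L$ and $q_R$ differ by far more than $n^{-1/2}$ as soon as the two unoccupied sets differ in size or depth profile. The endgame is worse: when both unoccupied sets consist of a few deep vertices, each excursion hits a given one with probability of order $1/\log_2 n$ (Lemma~\ref{resistlemma}), so $q_L,q_R$ are roughly proportional to the numbers of unoccupied vertices and $p_t\approx u_L/(u_L+u_R)$ --- the process then behaves like sampling without replacement. Under that dynamics, started from $(\ell,3\ell)$, when the first side fills the other retains only $O(1)$ nodes (a geometric-type variable with constant mean), not $\geq\ell$; so your endgame step is not merely unproven, it fails under precisely the dynamics your bulk estimate would hand you. (Separately, anti-concentration of $D$ at the stopping time $\sigma$, rather than at a fixed time, would also need its own justification.)

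The paper's proof avoids all of this by changing the clock from particles to \emph{root-excursions}. It predetermines two i.i.d.\ sequences of excursion walks, one inside each sub-tree; then the numbers $L,R$ of excursions needed to fill the left and right sub-trees are functions of these walks alone, while the routing of successive excursions left/right is an i.i.d.\ sequence of fair coins, independent of the configuration --- in excursion time there is no drift at all, no matter how lopsided the aggregate becomes, because every visit to the root flips a fresh fair coin. Claim~\ref{claim:towers} (proved by Poissonizing the coin flips and using $L,R\geq n/2$, which is where $\varepsilon<1/4$ enters) gives the anti-concentration you were aiming for: when one side collects its quota, the other is at least $n^{\varepsilon}$ excursions short with probability at least $1-n^{-\varepsilon}$. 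Finally, Lemma~\ref{resistlemma} plus a Chernoff bound (the event $\mathcal E_1$) converts an excursion deficit into a vertex count: $n^{\varepsilon}$ excursions settle at least $n^{\varepsilon}/(3\log_2 n)$ vertices with probability $1-e^{-\Omega(n^{\varepsilon}/\log n)}$, so being $n^{\varepsilon}$ excursions short forces at least $n^{\varepsilon}/(3\log_2 n)$ unoccupied vertices. Your intuition that the imbalance lives on the diffusive scale $\sqrt n$ is correct, but it can only be certified in excursion time; in particle time the mean reversion you dismissed is precisely the obstruction.
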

	The lemma above allows us to show that after some time in the process all the remaining unsettled vertices are contained in a sub-tree of significant distance from the root. The next lemma says that w.h.p.\ one of the remaining walks takes a long time to enter the sub-tree.

	\begin{lem}\label{lem:tailgeom}
		Let $u$ be an arbitrary but fixed vertex which has distance $\epsilon \log_2 n$ from the root, where $0<\epsilon \leq 1 $ is some constant. For any given $c>0$ there exists $c'>0$ such that a random walk of length $c'\varepsilon n \log^2 n$ starting from the root $r$ visits $u$ with probability at most $1-n^{-c}$. 
	\end{lem}

	These two lemmas are the main technical component of this chapter and are proved in Sections \ref{sec:treeLRlemmaproof} and \ref{sec:tailgeomproof} respectively, first we shall prove Theorem \ref{thm:binarytree}. 
	
	\begin{proof}[Proof of Theorem \ref{thm:binarytree}]As mentioned above it suffices to prove a w.h.p. lower bound on $\tau_{seq}^r$. 
		Suppose $n = 2^k-1$ and let $r$ denote the root of the binary tree. Let $T_1,T_2,\ldots,T_{x}$ with $x=2^{\lfloor k/16 \rfloor} \leq n^{1/32}$ be a labelling of all sub-trees whose root is at distance $\lfloor k/32 \rfloor$ form the root $r$. Note each of those sub-trees has $2^{k-\lfloor k/32 \rfloor}-1 = (1+o(1))n^{31/32}$ vertices.
		
		Observe that whenever a particle enters to one of those sub-trees, we can imagine the filling process as an independent IDLA process on that sub-tree. Indeed, when a particle moves inside such a tree, it is moving as a random walk on the tree until it settles, and if the particle leaves the sub-tree from the root, we can imagine we pause the process until a new particle arrives again, restarting the process. From the previous observation we can  apply Lemma \ref{treeLRlemma} above with $\varepsilon = 1/8$, it follows that whenever we fill one of the sub-trees of $T_i$, the other sub-tree has at least $ \BOhm{\frac{n^{31\varepsilon/32}}{\log_2 n}}\geq n^{1/9}$ with probability at least $1-2n^{31\varepsilon/32}$. By the union bound, the above happens for all the sub-trees $T_1,\ldots, T_x$ at the same time with probability at least $1-2n^{31\varepsilon/32}n^{1/32} \geq 1-n^{-\Theta(1)}$.
		
		Now, consider all the sub-trees at distance $\lfloor k/32 \rfloor+1$ from the root, and suppose that just after settling the $i$-th particle, all but one of those sub-trees are filled.
		Without lost of generality, we assume that such a sub-tree is a sub-tree of $T_1$, and denote the left and right sub-trees of $T_1$ by $T_{11}$ and $T_{12}$. Additionally, suppose that that $T_{11}$ is filled just after settling the $i$-th particle. We conclude that after settling the $i$-particle, all trees $T_2,\ldots, T_x$ are filled, and since $T_{11}$ just became filled, we deduce that all the remaining unsettled vertices are located in $T_{12}$ and there are at least $n^{1/9}$ of them.  Choosing $c=1/10$ in Lemma~\ref{lem:tailgeom} below gives that one of the $M\geq n^{1/9}$ remaining walks take longer than $c'\varepsilon n \log^2 n$ hit the sub-tree $T_1$ with probability at least $1- \left(1-n^{-1/10}\right)^{n^{1/9}} \geq 1- o(1)$. Concluding that it takes $\Omega(n \log^2 n)$ to settle all particles w.h.p.	
	\end{proof}

	\subsubsection{Proof of Lemma~\ref{treeLRlemma}}\label{sec:treeLRlemmaproof}
	We begin with the following simple lemma needed to prove Lemma \ref{treeLRlemma}. 
	\begin{lem}\label{resistlemma}
		In the binary tree $T_n$ of height $k$, the probability that a fixed leaf $u$ is visited before the walk returns to the root $r$ is $1/ (2(k-1))$.
	\end{lem}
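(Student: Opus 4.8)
The plan is to recognize the quantity in question as a classical \emph{escape probability} and evaluate it through effective resistance. Writing $\tau_u$ for the hitting time of the leaf $u$ and $\tau_r^+$ for the first return time to the root, the probability we want is exactly $\Pro{\tau_u < \tau_r^+}$ for the simple random walk started at $r$. For a simple random walk on a graph with unit conductances, the standard escape-probability identity (see e.g.\ \cite{lovasz1993random} or \cite{levin2009markov}) states that
\[
\Pro{\tau_u < \tau_r^+} = \frac{1}{\deg(r)\cdot R(r,u)},
\]
where $R(r,u)$ is the effective resistance between $r$ and $u$. Thus the proof reduces to computing the two factors $\deg(r)$ and $R(r,u)$.

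First I would read off the parameters of $T_n$. Since $T_n$ has $2^k-1$ nodes and $2^{k-1}$ leaves, it is the complete binary tree of height $k-1$, so every leaf sits at depth $k-1$ and the root has $\deg(r)=2$. As $T_n$ is a tree, the unique $r$--$u$ path has length equal to the graph distance $\dist(r,u)=k-1$, and every subtree hanging off this path is a pendant (dead-end) component carrying no current when unit current is injected at $r$ and extracted at $u$; hence $R(r,u)$ is simply the series sum of the unit resistances along the path, giving $R(r,u)=\dist(r,u)=k-1$. Substituting $\deg(r)=2$ and $R(r,u)=k-1$ into the identity yields $\Pro{\tau_u<\tau_r^+}=\tfrac{1}{2(k-1)}$, as claimed.

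There is no substantial obstacle here beyond correctly bookkeeping the height and the root degree; the only point needing a word of justification is $R(r,u)=\dist(r,u)$, which is the standard fact that effective resistance in a tree equals path length (immediate from the series law, as the $r$--$u$ path is the only route between its endpoints). As a self-contained alternative that avoids invoking the electrical-network identity, one can set $g(v_i)=\Pr_{v_i}[\tau_u<\tau_r]$ along the path $r=v_0,v_1,\dots,v_{k-1}=u$; any excursion from $v_i$ into an off-path subtree returns to $v_i$ before reaching either $r$ or $u$, so it leaves $g$ unchanged, and harmonicity collapses to the one-dimensional recursion $g(v_i)=\tfrac12\big(g(v_{i-1})+g(v_{i+1})\big)$ with $g(v_0)=0$, $g(v_{k-1})=1$, whence $g(v_i)=i/(k-1)$. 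The first step from $r$ reaches the on-path child $v_1$ with probability $1/\deg(r)=1/2$ (the other child lies in the subtree not containing $u$ and contributes nothing), so $\Pro{\tau_u<\tau_r^+}=\tfrac12\,g(v_1)=\tfrac{1}{2(k-1)}$, recovering the same value.
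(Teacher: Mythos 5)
Your proposal is correct and takes essentially the same route as the paper: the paper likewise invokes the escape-probability identity $\Pro{\tau_u < \tau_r^+} = \left(\deg(r)\cdot R(r,u)\right)^{-1}$ (citing \cite[Prop.~9.5]{levin2009markov}) together with the fact that effective resistance in a tree equals graph distance, and then substitutes $\deg(r)=2$, $R(r,u)=k-1$. Your supplementary gambler's-ruin argument via the harmonic function $g(v_i)=i/(k-1)$ is a valid, self-contained alternative, but the core argument is identical to the paper's.
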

	\begin{proof} The formula $\Pro{\text{A Random Walk from $r$ hits $u$ before returning to }r}= \left(R(r,u)\cdot d(r) \right)^{-1}$ can be found in \cite[Prop. 9.5.]{levin2009markov}. The result follows since the resistance $R(r,u)$ in a tree is given by graph distance and the degree of the root, $d(r)$, is $2$.
	\end{proof}
	
	\begin{proof}[Proof of Lemma \ref{treeLRlemma}]

		We divide the Binary tree into a root, and a left and right sub-tree. To study the IDLA process, we consider the following algorithm. Consider an infinite sequence of (independent) random walks starting in the root of the left-tree. These walks finish when they hit the root of the original tree. We also consider an (independent) infinite sequence for the right sub-tree. To run the IDLA process, we start in the root of the binary tree and settle the first particle. From the second particle on, each time a particle is in the root it moves to the left or right sub-tree with probability $1/2$. The $i$-th time a particle moves to the left (right) sub-tree, it follows deterministically the $i$-th predetermined walk until it reaches a vertex for first time or returns to the root of the binary tree. The advantage of this procedure is that once we predetermine the infinite random walk sequences in the left and right sub-tree, we know the number of times particles need to move from the root either to the left sub-tree or to the right sub-tree in order to fill the left and right sub-trees respectively. Let us call such quantities, the number of visits to each sub-tree required to fill it, $L$ and $R$ (for the left and right sub-trees). Note that $n/2\leq L,R$ because we need to move at least $n/2$ times to the left (right) sub-tree in order to fill it. We prove the following property of the predetermined walks: Let $S$ be the number of walks needed to cover the last $n^{\varepsilon}/(3\log_2 n)$ unoccupied vertices of the left ( or right) sub-tree. Define the event $\mathcal E_1 = \{S<n^{\varepsilon}\}$. We prove that $\mathcal E_1$ occurs w.h.p., indeed,
		\begin{equation}\label{equat19}
		\Pro{S \geq n^{\varepsilon}} \leq \Pro{Bin\left(n^\varepsilon,\frac{1}{2\log_2 n}\right) \leq \frac{n^{\varepsilon}}{3\log_2 n}}\leq \exp\left(-\frac{n^\varepsilon}{72\log_2 n}\right).\end{equation}
		In the first inequality follows from Lemma \ref{resistlemma} (hitting a leaf is harder than hitting a non-leaf in a excursion), for second inequality we use Chernoff's bounds. Therefore, with probability at least $1-2\exp(-\frac{n^\varepsilon}{72\log_2 n})$, the last $n^{\varepsilon}$ walks cover at least $n^{\varepsilon}/(3\log_2 n)$ unoccupied vertices of the left (or right) sub-trees. 
		
		From now, we assume all the walk in the left (right) sub-trees are predetermined. Let $W_j$ be 1 if the $j$-th time a particle leaves the root moves to the left sub-tree, $W_j = 0$ otherwise. Denote $L_i = \sum_{j=1}^i W_j$ and $R_i = n-L_i$. Define $\tau =\min\{i\geq 1: L_i =L \text{ or } R_i =R\}$. 
		
		\begin{clm}\label{claim:towers} For $\varepsilon<1/4$ it holds that $\max\{R-R_{\tau}, L-L_{\tau}\} \geq n^{\varepsilon}$ with probability at least $1-n^{-\varepsilon}$.
		\end{clm}
		The proof of the claim is temporally deferred. The claim above essentially tells us that when we fill one sub-tree, the other needs at least $n^{\varepsilon}$ more walks to be filled with high probability. Denote $\mathcal{E}_2 = \{\max\{R-R_{\tau}, L-L_{\tau}\} \geq n^{\varepsilon}\}$. Note that the statement of this Lemma follows from proving that $\mathcal E_1 \cap \mathcal{E}_2 $ holds with probability at least $1-2n^{-\varepsilon}$. By \eqref{equat19} and Claim \ref{claim:towers} we have   
		\begin{equation*}
		\Pro{(\mathcal E_1 \cap \mathcal E_2)^c} \leq \Pro{\mathcal{E}_1^c}+ \Pro{\mathcal E_2^c} \leq 2\exp\left(-\frac{n^\varepsilon}{72\log_2 n}\right) +n^{-\varepsilon}+\leq 2n^{-\varepsilon}.
		\end{equation*}\end{proof}
	
	\begin{proof}[\textbf{Proof Of Claim~\ref{claim:towers}}] 
		Recall that after the $i$-th time a particle moves from the root to one of the sub-trees, we have $R_i+L_i = i$. Also, if such particle moves to the left sub-tree $L_i=L_{i-1}+1$ and $R_i = R_{i-1}$ (similarly if the particle moves to the right sub-tree). We can see the process as balls into 2 bins (left and right bins). At each round we allocate a ball to one of the bins at random. The process finishes when the left bin has $L$ balls or when the right bin has $R$ balls, but for convenience we allow the process to keep adding balls after such a point. We work with a continuous time version of this process where balls arrive to each bin following independent Poisson processes $N_l(t)$ and $N_r(t)$ of rate 1 for the left and right bin, respectively. Let $\tau_l$ (resp.\  $\tau_r$) be the first time $t$ such that $N_l(t) \geq L$ (resp.\ $N_r(t) \geq R$).
		Consider the time $\tau_l=t$ and consider the load of the other bin $N_r(t)$. First, note that as $L,R \geq n/2$, the event $\{t \leq 15 n^{4 \epsilon}\}$ occurs only with probability at most $\exp(-n^{\Omega(1)})$ (using a Chernoff bound) and therefore in the remainder of the proof we will assume $t \geq 15n^{4 \epsilon}$. Note that for any $\tau_l = t$, the load of the right bin is exactly a Poisson random variable with parameter $t$. For any integer $x$, $\Pro{Poi(t) = x} \leq 2/\sqrt{2\pi t }$ thus using the lower bound on $t$
		\[
		\Pro{  |R-N_r(\tau_l)| < n^{\epsilon}    |\tau_l=t} \leq 2n^{\epsilon} \cdot (2/\sqrt{2\pi t }) \leq n^{-\epsilon}/2.
		\]Analogous arguments work for  $\tau_r$ and $|L-N_l(\tau_r)|$. By the union bound the result holds. \end{proof}

	\subsubsection{Proof of Lemma~\ref{lem:tailgeom}}\label{sec:tailgeomproof}
	\begin{lem}\label{lemma:hittingunlog2n}
		Let $c>0$ be fixed. Then, a random walk $(X_t)$ of length $n\lceil c (k-1)^2 \rceil/3$ on $T_n$ starting from the root $r$ visits an arbitrary but fixed leaf $u$ w.p. at most $1-e^{-c/2}\cdot n^{-c/(2\log 2)}$.
	\end{lem}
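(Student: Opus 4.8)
The plan is to recast the statement as the lower bound $\Pro{\tau_u > L} \ge e^{-c/2}\, n^{-c/(2\log 2)}$, where $\tau_u$ is the hitting time of $u$ and $L := n\lceil c(k-1)^2\rceil / 3$ is the prescribed walk length, and to read this off from the excursion structure of the walk at the root. I would decompose the trajectory into \emph{excursions} --- the maximal segments between consecutive visits to $r$. By the strong Markov property these excursions are i.i.d.; each returns to $r$ after $\Ex{\tau_r^+} = 2|E|/d(r) = n$ steps in expectation, and by Lemma~\ref{resistlemma} each one visits $u$ before returning to $r$ with probability exactly $p = \frac{1}{2(k-1)}$. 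Writing $A_M$ for the event that the first $M$ excursions all avoid $u$ and $S_M$ for the number of steps they occupy, the crucial deterministic observation is the inclusion $\{\tau_u > L\} \supseteq A_M \cap \{S_M > L\}$: if the first $M$ excursions avoid $u$ yet together take more than $L$ steps, then at time $L$ the walk still lies inside one of these first $M$ excursions and so cannot yet have seen $u$.

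I would then take $M := \lceil c(k-1)^2 \rceil$. By independence of the excursions, $\Pro{A_M} = (1-p)^M$, and using $\ln(1-p) \ge -p - p^2$ one gets $Mp \le \frac{c(k-1)}{2} + o(1)$ and $Mp^2 \le \frac c4 + o(1)$, hence $\Pro{A_M} \ge \exp\!\big(-\tfrac{c(k-1)}{2} - \tfrac c4 - o(1)\big)$. Since $k-1 = \log_2(n+2) - 1 \le \log_2 n$, this is at least $e^{-c/4 - o(1)}\, n^{-c/(2\log 2)}$, which already carries the desired exponent with a favourable constant.

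It remains to show the timing factor $\Pro{S_M > L \mid A_M} = 1 - o(1)$. Conditioned on $A_M$, the $M$ excursion lengths are i.i.d.\ samples from the return-time law \emph{given that $u$ is avoided}. Because a hit occurs with probability only $p = o(1)$ per excursion, and the rare hitting excursions contribute only a $\tfrac{1}{k-1}$ fraction of the expected total length, the conditional mean return time is still $(1-o(1))n$; thus $\Ex{S_M \mid A_M} = (1-o(1)) Mn = (3-o(1))L$, comfortably above $L$. A Chebyshev bound gives $\Pro{S_M \le L \mid A_M} \le \frac{M\, \Var{\ell'}}{\Theta(L^2)} = \frac{\Var{\ell'}}{\Theta(Mn^2)}$, which is $o(1)$ provided the variance $\Var{\ell'}$ of a no-hit excursion length is $o(Mn^2) = o((\log n)^2 n^2)$. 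Combining the two factors yields $\Pro{\tau_u > L} \ge e^{-c/4 - o(1)}(1-o(1))\, n^{-c/(2\log 2)} \ge e^{-c/2}\, n^{-c/(2\log 2)}$ for $n$ large, since $e^{-c/4} > e^{-c/2}$ for every $c>0$.

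The main obstacle is precisely this moment control: one must bound the variance (or, it suffices, a suitable lower tail) of the return time to the root on $T_n$, and verify that conditioning on avoiding the single leaf $u$ perturbs the length distribution only in lower order. The perturbation is intuitively negligible --- a marked leaf is reached with probability merely $\frac{1}{2(k-1)}$ per excursion --- but quantifying it is exactly the \emph{tedious but elementary} random-walk bookkeeping the statement anticipates. I would stress that the conditioning on $A_M$ cannot be avoided: since $Mp = \Theta(\log n) \gg 1$, a hit among the first $M$ excursions is almost certain, so any attempt to bound $\Pro{\tau_u > L}$ by subtracting a hitting probability from $\Pro{S_M > L}$ collapses, and one genuinely has to argue the timing \emph{within} the no-hit world.
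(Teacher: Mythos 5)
Your decomposition into i.i.d.\ root-excursions and the inclusion $\{\tau_u > L\} \supseteq A_M \cap \{S_M > L\}$ with $M = \lceil c(k-1)^2\rceil$ is exactly the skeleton of the paper's proof, and your computation of $\Pro{A_M} \geq e^{-c/4-o(1)}n^{-c/(2\log 2)}$ matches the paper's (which gets $e^{-c/3}$). Where you genuinely diverge is in handling the timing event. The paper does \emph{not} condition on $A_M$: it bounds the unconditioned probability that the walk completes $M$ excursions within $L$ steps, and it gets this probability down to $n^{-\omega(1)}$ --- not merely $o(1)$ --- by collapsing the tree to a two-state chain (root versus leaf set, with transition probabilities $1/4$ and $1/n$ from Lemma~\ref{resistlemma}-type resistance computations), applying a Chernoff bound to the binomial number of leaf-to-root returns and a concentration bound for the negative-binomial number of short root-returns. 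Because $n^{-\omega(1)} \ll n^{-c/(2\log 2)}$ for every fixed $c$, the simple subtraction $\Pro{A_M} - \Pro{S_M \leq L}$ then closes the proof. This directly refutes your closing claim that ``the conditioning on $A_M$ cannot be avoided'': it can, provided one proves a superpolynomial rather than polylogarithmic tail for the excursion count. The collapse you describe only occurs if one insists on Chebyshev for the timing event, as you do.

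Within your conditional route there is a genuine gap: the two moment estimates you defer are the technical heart, and your stated justification for the first is circular. You need (a) $\Ex{\ell \cdot \mathbf{1}_{\mathrm{hit}}} = o(n)$, so that the conditional mean excursion length stays $(1-o(1))n$, and (b) $\Var{\ell \mid \mathrm{no\ hit}} = o\bigl((\log n)^2 n^2\bigr)$. Your argument for (a) --- ``the rare hitting excursions contribute only a $\tfrac{1}{k-1}$ fraction of the expected total length'' --- is precisely the claim (a) restated, not a proof; note that if $\Ex{\ell \mid \mathrm{hit}}$ were of order $n\log n$ rather than $O(n\sqrt{\log n})$, the contribution would be $\Theta(n)$ and your conditional mean could drop below $L/M = n/3$, killing the argument. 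Both facts are true and provable: for (a), split $\ell\mathbf{1}_{\mathrm{hit}}$ at the first visit to $u$, use time reversal to write the first leg as $\tfrac{\pi(u)}{\pi(r)}\Ex{\tau_r \mathbf{1}_{\tau_r < \tau_u^+}}$ started from $u$, then apply Cauchy--Schwarz together with $\Pro{\tau_r < \tau_u^+} = \tfrac{1}{k-1}$ and the second-moment bound $\Ex{\tau^2} \leq \Ex{\tau}\bigl(1 + 2\max_v \Ex{\tau_r \text{ from } v}\bigr) = O(n^2)$; for (b), the same second-moment bound gives $\Ex{(\tau_r^+)^2} = O(n^2)$, and conditioning on the probability-$(1-p)$ event inflates this by at most $1/(1-p)$. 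So your proposal would work once this bookkeeping is supplied, but as written the step you flag as ``tedious but elementary'' is the step that carries the proof, and the paper's alternative shows it can be bypassed entirely.
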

	\begin{proof}
		First note that by Lemma \ref{resistlemma}, it follows that a random walk does not visit leaf $u$ before the $\lceil c (k-1)^2 \rceil$-th return to the root with probability at least
		\[
		\left(1 - \frac{1}{2(k-1)} \right)^{\lceil c (k-1)^2 \rceil} \geq e^{-c k/2 - (c/4)(1+o(1)) } \geq \left(\frac{1}{n}\right)^{c/(2\log 2)}\cdot e^{-c/3},
		\]
		where the second inequality due to the fact that $n-1 = 2^k-1$ and thus $\log n = k \log 2$.
		
		Consider now a random walk of length $\ell = dn \lceil c (k-1)^2 \rceil $ for some constant $d>0$. We wish to show we have that not too many excursions (visits to the root $r$) during $\ell$ times w.h.p.\ Let $L$ be the set of leaves and $r$ be the root. Let $\tau_{A}$, ($\tau_{A}^+$) be the first hitting (return) time of the vertex/set $A$ by the random walk $X_t$. By \cite[Prop. 9.5.]{levin2009markov} we have 
		\begin{equation}\label{resistbounds}
		\mathbf{P}\left[ \tau_{L}< \tau_{r}^+ \big| X_0 = r\right]= 1/(2\cdot 2)  =1/4\quad \text{and} \quad \mathbf{P}\left[ \tau_{r}< \tau_{L}^+ \big| X_0 \in L \right]= 1/(2\cdot (n/2)) = 1/n.
		\end{equation}To simplify the analysis we shall consider only times when the walk is at the root or the leaves reducing the tree to a two state Markov chain. Indeed, we start the walk at the root and say it jumps to a leaf w.p. $1/4$, once at a leaf it can jump to the root w.p. $1/n$. 
		
		To bound the number of visits to $r$ from above we can assume that each attempt to get from $r$ to $L$ (or $L$ to $r$) takes at least $2$ units of time. Thus we have at most $\ell/2 $ tries to hit the root from $L$ and the number of successes is dominated by a Binomial r.v. with parameters $\ell/2$ and $1/n$ by \eqref{resistbounds}. Thus we hit $r$ from $L$ at most  $ t_1=(1+ 1/10) \ell/(2n) $ times w.p. $1-n^{-\omega(1)}$ by a Chernoff bound. Let $R_i$ be the number of returns to $r$ by the random walk from $r$ on its $i^{th}$ trip to $r$ before hitting $L$ again, note that $R_i$ is geometrically distributed with parameter $1/4$ by \eqref{resistbounds}. Thus if we let $Y(t_1)= \sum_{i=0}^{t_1}R_i$ be the number of returns to $r$ during a random walk of length $\ell$ conditional on $t_1$ successful returns to $r$ from $L$, then by Lemma \ref{Erlconc} \eqref{itmsumgeo}
		\[\Pro{Y(t_i) > 9t_1/2 } \leq \exp\left(- 9t_1\left(1- 8/5\right)^2/4  \right),
		\]
		holds for any fixed $d>0$. Then, by taking $d = 1/3$, with probability at least $1- 2n^{-\omega(1)}$ the number of returns to $r$ (and excursions from $r$) is bounded by $9t_1/2 \leq (9/2)\cdot (11/10) \ell /(2n) < \lceil c (k-1)^2 \rceil $.
		
		Thus we have 
		\begin{align*}
		\Pro{\tau_{hit}(r,u)> n \lceil c (k-1)^2 \rceil /3} &\geq \Pro{\text{$X_t$ does not visit $u$ in the first $\lceil c (k-1)^2 \rceil$ excursions}}\\
		&\qquad - \Pro{\text{There are more than $\lceil c (k-1)^2 \rceil$ excursions}} \\
		&\geq  e^{-c/3}\cdot n^{-c/(2\log 2)} - 2n^{-\omega (1)}.
		\end{align*}  The proof follows from noting the above is greater than  $e^{-c/2}\cdot n^{-c/(2\log 2)}$ for large $n$.\end{proof}
	
	Finally, we can now extend the result from the previous lemma to internal vertices, and prove the Lemma \ref{lem:tailgeom}, the a key Lemma about hitting time of clustered sets.

	\begin{proof}[Proof of Lemma~\ref{lem:tailgeom}]
		Let $\tilde{T}$ be the top of the binary tree $T$, this is the tree induced by all vertices that have distance at most $\epsilon \log_2 n$ from the root. Let $\tilde{L}$ be the set of leaves in $\tilde{T}$. By Lemma~\ref{lemma:hittingunlog2n}, we know from that given $c>0$, a random walk of length $c\varepsilon^2 n^{\epsilon} \log^2 n/3$ on $\tilde{T}$ does not visit a vertex $u\in \tilde{L}$ with probability at least $n^{-c\epsilon/(2\log 2)}$.  By the random walk Chernoff bound \cite{MitzChern} the random walk on $\tilde{T}$ makes at least $\nu = c\varepsilon^2 n^{\epsilon} \log^2 n/7$ visits to $\tilde{L}\backslash \{u\}$ with probability at least \[1 -\sqrt{n^{\varepsilon} }\cdot \exp\left( - \frac{(1/7)^2 \cdot c\varepsilon^2 n^{\epsilon} \log^2 n}{6\cdot 72 \cdot  t_{mix}\left(\tilde{T} \right)}  \right) = 1- n^{\omega(1)}.\]We will couple the walk on $\tilde{T}$ to a longer walk on the tree $T$ by allowing the walk to continue into sub-trees pendant to $\tilde{L}$. Let $S= \sum_{i=1}^{\nu }V_i$ be the amount of time spent in the sub-trees pendent to $\tilde{L}$ by the coupled walk, where $V_i$ is the amount of time spent in a pendent sub-tree before returning to $\tilde{L}$ for the $i^{th}$ time. Now by \eqref{resistbounds} a random walk in $T$ from $l\in \tilde{L}$ goes into the sub-tree pendant from $l$ and does not return to $l$ for at least $n^{1-\varepsilon}$ steps with probability $(2/3)\cdot (1/4)\cdot (1-1/n^{1-\varepsilon})^{n^{1-\varepsilon}} \sim  1/(6e)$. Since the amount of time spent by the walks in each sub-tree is identically distributed $S \geq \nu/(7e)\cdot n^{1-\varepsilon} = c\varepsilon^2 n \log^2 n/(7^2e)$ with probability $1- e^{-\Omega(n^\varepsilon)} $ by a Chernoff bound. Combining the above a walk of length $c\varepsilon^2 n \log^2 n/(7^2e)$ on $T$ hits $u$ with probability at most $ 1-e^{-c/2}\cdot n^{-c\epsilon/(2\log 2)}- n^{\omega(1)}- e^{-\Omega(n^\varepsilon)} \leq 1- n^{-c\epsilon/(2\log 3)}. $
	\end{proof}

	\subsection{The Lollipop}
	Let $L_n$ be the lollipop graph, which consists of a $\lceil n/2\rceil $ vertex clique $K$ attached at a vertex $v\in K$ by a single edge to the endpoint of a path $P$ of length $\lfloor n/2\rfloor $.
	\begin{pro}\label{lolli} Let $u\in K$, $u\neq v$. Then,   $\tau_{seq}^u(L_n) = \Omega\left( n^3\cdot \log(n)\right)$ w.h.p.\  . 
		
	\end{pro}
	\begin{proof}Let $w$ be a vertex half way down the path and $\mathcal{E}$ be the event that a walk from a vertex in $K\backslash \{v \} $ hits $w$ before returning to $K\backslash \{v \} $. For $\mathcal{E}$ to occur the walk must hit $v$, walk one step in the path then hit $w$ before returning to $K\backslash \{ v\} $, thus $\Pro{\mathcal{E}} \leq (2/n)\cdot (2/n)\cdot (4/n)\cdot ( 1-2/n)  \leq 9/n^3.$ During the sequential process $n/4$ vertices must hit $w$ before settling and that by the time $w$ is first hit the clique $K$ is fully occupied w.h.p.. Conditional on this we can lower bound $\tau_{seq}^u$ by the expected number of trials it takes for the longest of the last $n/4$ walks to hit $w$. For each walk such a trial is described by the event $\mathcal{E}$ and thus the number of trials required by one walk dominates a $\geo{9/n^3}$ random variable. Hence we have \[\Pro{\text{walk $i$ needs more than }n^3\log(n)/18 \text{ trials }  } \geq \left(1 - 9/n^3\right)^{n^3\log(n)/18} \geq 1/\sqrt{n}.  \] 
		Thus the probability all of the last $n/4$ walks need less than $n^3\log(n)/18$ trials is less than $\left(1- 1/\sqrt{n}\right)^{n/4} = o(1).$ The result follows. 
	\end{proof}

	\section{Counterexamples}\label{appen}
	In this section we present several graphs used throughout the paper as counter examples. 
	
	\subsection{Concentration}
	We begin two examples showing that the dispersion time doesn't always concentrate. Let $G_1$ be the clique+edge: this is $K_n$ with a extra vertex $v*$ attached by an edge to $v \in K_n$. Let $G_2$ be the clique+hub+edge: a single edge $\{v,v*\}$ attached at $v$ to $h(n)-1$ vertices of the clique $K_{n-2}$.

	\begin{pro}\label{conccounter} 	Let $D^v(G)$ denote either $\tau_{par}^v(G)$ or $\tau_{seq}^v(G)$. Then there exists graphs $G_1$, the clique+edge, and $G_2$, the clique+hub+edge, and $u\in V(G_1), v \in V(G_2)$ such that \[\Pro{D^u(G_1) \leq \mathcal{O}\!\left(\mathbb{E}[D^u(G_1) ]/n \right) } = \Omega(1)\quad \text{and} \quad\Pro{D^v(G_2) \geq \Omega\left(\mathbb{E}[D^v(G_2) ]\cdot n\right)   } = \Omega(1/n).\] 
	\end{pro} 
	\begin{proof}
		Let $G_1$ be the clique+edge. If the parallel or sequential process is started from $v$ then with probability $(1-1/n)^n \approx 1/e$ the vertex $v*$ is not explored in one step and so the process takes $\Omega\left(n^2\right)$ as one of the walks must choose to go back to $v$ and then to visit $v^*$. However with probability $1-(1-1/n)^n \cong 1- 1/e$ one of the $n$ walks hits $v^*$ in the first step and then the process takes $\mathcal{O}(n)$, as is the case with $K_n$. 
		
		Let $G_2$ be the clique+hub+edge. If an IDLA process is started from $v$ then with probability at least $1-\left(1-1/h(n)\right)^n \approx 1-e^{-n/h(n)}$ there is a walker which visits $v*$ in one step. The rest of the graph is essentially a clique and so the process takes $\mathcal{O}(n)$ time. With probability $\left(1-1/h(n)\right)^n\cdot\left(1-1/n \right)^n \sim e^{-n/h(n)-1}$ every walker enters the graph $K_n\backslash N(v)$ and so the process takes an additional $\Theta\left(1/\left(\frac{h(n)}{n}\cdot\frac{1}{n}\cdot\frac{1}{h(n)}\right)\right)$ expected time to cover the graph. Thus $\mathbb{E}[D^v(G_2) ] = \Theta(n)\cdot(1-e^{-n/h(n)}) + \Theta(n^2)\cdot e^{-n/h(n)} $. Choosing $h(n)= n/\log n$ yields $\mathbb{E}[D^v(G_2) ]= \Theta(n)$ and $\Pro{D^v(G_2) \geq \Omega(n^2)   } = \Omega(1/n)$. \end{proof}
	
	\subsection{Least Action Principal}
	
	Continuing our discussion from Section \ref{relwork} we shall show that a least action is violated by a stopping rule on $G_1$, the clique+edge we defined earlier in this section. Let $\xi_x^i=1$ iff the site $x$ is vacant after $i-1$ walkers have settled and $W(X)$ denote the number of walk $X$. The normal ``first vacant site is settled" rule is then $\rho =\inf\left\{t: \xi_{X(t)}^{W(X)}=1 \right\} $.

	\begin{pro}\label{leastact}Define the following stopping rule on $G_1$ \[\tilde{\rho}=\inf\left\{t:\left( t\geq 3 n \log(n)  \textbf{ or } X(t) = v\right) \textbf{ and }\xi_{X(t)}^{W(X)}=1 \right\}\] Then the parallel or sequential process on $G_1$ stopped according to $\tilde{\rho}$ disperses in $\BO{n\log n }$ time. Whereas with the standard stopping rule $\rho$ we have  $t_{seq}(G)= \Omega(n^2)$.  
	\end{pro}	 
	\begin{proof}
		The number of visits to the vertex $v$ at the base of the extra edge $\{v,v^*\}$ is greater than $n\log n $ with probability at least $1- e^{-n}$ by Chernoff bounds. The probability that none of these walks hit $V^*$ is then $(1-1/n)^{n\log(n)}= 1/n$. So $v^*$ is covered by time $3n\log n$ w.p. $1-2/n$ conditional on this the remaining walks settle in time $\BO{n}$. If $v^*$ fails to be covered by time $3n\log(n)$ then the process takes $\BO{n^2}$ by Proposition \ref{conccounter}, the result follows. 
		
		For the standard stopping rule an application of Theorem \ref{partialparupper} shows that we the number of walks is reduced to a sub-polynomial size $k$ in sub-linear time with constant probability. The probability that one of these $k$ random walk hits $v*$ in two steps from $V\backslash\{v,v^*\}$ is $1/n^2$. The probability any of the last $k-1$ hitting $v^*$ before settling (which takes at most $\mathcal{O}(n)$ time) is $o(1)$. Thus occupying $v^*$ is left to the last walk which takes $\Omega(n^2)$ time with constant probability. 
	\end{proof}	
	\subsection{Bounding Dispersion Time from Below by Hitting Time}
	
	The next Proposition, mentioned in Remark \ref{rem:lowbddcounter}, shows that $t_{hit}$ fails as a lower bound for $t_{seq}$. 
	
	\begin{pro} \label{lowbddcounter}
		Fix $0<\epsilon <1/2$ and let $T$ be the complete binary tree on $n$ vertices with a path of length $n^{1/2-\varepsilon}$ attached to the root of the tree at one endpoint. Then
		\[t_{seq}(T) = \mathcal{O}\!\left(n\cdot \log(n)^2 \right) \qquad \text{ and } \qquad t_{hit}(T)= \Omega(n^{3/2-\epsilon}).\]  
	\end{pro}
	\begin{proof} The proof is in the counter examples section of the appendix, Appendix \ref{appen}.
	\end{proof}
	\begin{proof}
		Consider a complete binary tree with $n$ vertices and attach a path of length $k$ by and endpoint to the root, where $1 \leq k = o(\sqrt{n})$. Note that the maximum hitting time in $T$ is $\Theta\left(n \cdot \max\{ k,\log_2(n)\}\right)$, this follows by the commute time identity \cite[Prop. 10.6]{levin2009markov} since effective resistance in a tree is given by graph distance. Considering now the dispersion time, regardless of the source vertex, the root gets at least $\Omega(n)$ visits from $n$ different random walks before all vertices are settled in a binary tree. Every time a walk visits the root, it reaches the endpoint of the path with probability $1/k$ (and in this case, the time to reach the other endpoint is $\Theta(k^2))$. Hence if we consider the Sequential-IDLA, the path of length $k$ is completely covered before the last walk. The expected time for the last walk to settle is then at most the maximum hitting time in the binary tree which is at most $\mathcal{O}(n \log n)$, and with probability at least $1-n^{-2}$, that time is $\mathcal{O}(n \log^2 n)$. By stochastic domination, the time for the $\ell$-th walk to settle for any $1 \leq \ell \leq n-1$ is smaller than that of the last walk. Hence, with high probability all walks are settled after $\mathcal{O}(n \log^2 n)$ time. 
	\end{proof}

	\section{Conclusions}\label{sec:conclusion}
	\subsection{Summary of Our Results}
	The aim of this project is to better understand IDLA processes on finite graphs. The main tool we developed to gain an insight on the processes is the Cut \& Paste bijection. This bijection allows us to study directly the affect of the different scheduling protocols on the random walk trajectories. We use this bijection to couple the various IDLA variants allowing us to order or equate their dispersion times and show that $t_{seq}$ and $t_{par}$ are equal up to a multiplicative factor of order $\log n $.  
	
	In addition to the qualitative information provided by the bijection we also develop upper and lower bounds in terms of graph quantities such as max degree, number of edges, mixing time and hitting times of vertices or sets by a single random walk. These bounds enable us to establish the correct asymptotic order of the dispersion time for the Parallel and Sequential processes on several natural networks. The bounds also provide some tight general bounds in terms of $n$. From our analysis of fundamental networks we conclude that for most natural graphs the dispersion time is of order $t_{hit}$ or $t_{hit}\cdot \log n$ however we present examples where this is very far from the truth.

	\subsection{Further Directions}

	As pointed out earlier, our results establish the correct asymptotic order of the dispersion time for most natural networks. The only exception is the $2d$-grid, where the dispersion time is shown to be between $\Omega(n \log n)$ and $\mathcal{O}(n \log^2 n)$. The known shape theorems for the {\em infinite} 2d-grid, empirical simulations as well as the result for binary trees all strongly suggest the dispersion to be of order $n \log^2 n$. This provides us with the first open problem . 
	
	\begin{pbl}Determine the dispersion time of the $2d$-grid/torus.  
	\end{pbl}
	
	The second main open problem is whether the sequential and parallel dispersion times are of the same order, we know of no graph where this does not hold however it seems hard to prove. 
	\begin{pbl}
		Is it true that for any graph $G$, $t_{par}(G)= \BO{t_{seq}(G)}$?
	\end{pbl}
	In order to prove this result, it might be useful to derive some general lower bounds on the dispersion time, which are in turn interesting and useful in their own right. In particular

	\begin{con}Let $G$ be a connected $n$-vertex graph, then $t_{seq}(G)= \BOhm{n}. $ 
	\end{con} 
	
	The following conjecture is motivated by the idea that when you run $\stp$ algorithm the random walk sections cut and pasted do not have to cover the graph. If true this conjecture would resolve the open problem above some classes of graphs. 
	
	\begin{con}\label{PleqSplusC}Let $G$ be a connected $n$-vertex graph and $t_{cov}(G)$ be the cover time. Then
		\[t_{par}(G) \leq t_{seq}(G) + t_{cov}(G). \] 
	\end{con}

	The counter example to concentration (Proposition \ref{conccounter}) motivates the following open problem.
	\begin{pbl}What conditions must a graph satisfy for the dispersion time to concentrate around its expectation?
	\end{pbl}
	In forthcoming work we examine the total number of steps taken by an IDLA dispersion process and its relation to other graph properties. It might be also worth studying a version of the dispersion process where the origin is sampled uniformly at random for each particle.
	
	\section*{Acknowledgements}
	A.S. is supported by the EPSRC Early Career Fellowship EP/N004566/1. N.R.,T.S. and J.S. are supported by T.S.' ERC Starting Grant 679660 (DYNAMIC MARCH).

	\appendix 
	
	\section{Bounds for Expected Hitting Times of Sets}\label{section:BoundsHitSet}
	We must first state a well known result. 
	
	\begin{lem}[equation (12.11) of \cite{levin2009markov}]\label{lemma:peres1211}
		Consider a lazy random walk on a connected graph, then $\tilde{p}_{u,v}^t \leq \pi(v) + \sqrt{\frac{d(v)}{d(u)}}\lambda_2^t$, where $\lambda_2$ is the second eigenvalue of the associated transition matrix.
	\end{lem}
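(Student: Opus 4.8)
The statement is the standard spectral bound on $t$-step transition probabilities, so the plan is to prove it via the spectral decomposition of the lazy transition matrix. First I would record the two structural facts that make everything work. The lazy walk is reversible with respect to the stationary distribution $\pi(x)=d(x)/(2m)$, and — crucially — laziness forces every eigenvalue of the transition matrix $P$ into the interval $[0,1]$, so we may order the spectrum as $1=\lambda_1>\lambda_2\ge\cdots\ge\lambda_n\ge 0$, with $\lambda_2$ the second eigenvalue appearing in the statement.

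Next I would symmetrise the chain. Writing $D_\pi=\operatorname{diag}(\pi)$, the matrix $S:=D_\pi^{1/2}PD_\pi^{-1/2}$ is symmetric, has the same eigenvalues as $P$, and (by the spectral theorem) admits an orthonormal eigenbasis $\phi_1,\dots,\phi_n$ whose top eigenvector is $\phi_1(x)=\sqrt{\pi(x)}$. Since $S_{u,v}=\sqrt{\pi(u)/\pi(v)}\,p_{u,v}$, raising to the $t$-th power and translating back gives $p^t_{u,v}=\sqrt{\pi(v)/\pi(u)}\sum_{i}\phi_i(u)\phi_i(v)\lambda_i^t$. Isolating the $i=1$ term, which after simplification equals exactly $\pi(v)$, yields
\[
 p^t_{u,v}=\pi(v)+\sqrt{\frac{\pi(v)}{\pi(u)}}\sum_{i\ge 2}\phi_i(u)\phi_i(v)\lambda_i^t.
\]

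To finish I would bound the tail sum. Because the $\phi_i$ are the columns of an orthogonal matrix, the completeness relation $\sum_i\phi_i(x)^2=1$ holds for every $x$; combining this with $0\le\lambda_i\le\lambda_2$ for $i\ge 2$ and Cauchy--Schwarz gives
\[
 \Bigl|\sum_{i\ge2}\phi_i(u)\phi_i(v)\lambda_i^t\Bigr|\le\lambda_2^t\sqrt{\sum_{i\ge2}\phi_i(u)^2}\,\sqrt{\sum_{i\ge2}\phi_i(v)^2}\le\lambda_2^t.
\]
Substituting $\pi(v)=d(v)/(2m)\le d(v)/m$ and $\sqrt{\pi(v)/\pi(u)}=\sqrt{d(v)/d(u)}$ then delivers the claimed inequality (the $1/m$ in place of the tight $1/(2m)$ only loosens the bound by a constant).

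The one place that genuinely requires care — and the main obstacle if one tries to drop a hypothesis — is the control of the negative part of the spectrum. For a general reversible chain the tail is only bounded by $\lambda_\ast^t$ with $\lambda_\ast=\max(|\lambda_2|,|\lambda_n|)$, and a bipartite non-lazy walk has $\lambda_n=-1$, which would wreck the estimate. Laziness shifts the whole spectrum into $[0,1]$ and removes this issue, so I would be careful to invoke it explicitly when passing to $\lambda_i^t\le\lambda_2^t$; everything else is a routine application of the spectral theorem together with Cauchy--Schwarz, and the resulting inequality is precisely equation~(12.11) of \cite{levin2009markov}.
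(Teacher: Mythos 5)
Your proof is correct. The paper itself does not prove this lemma --- it is quoted directly as equation (12.11) of Levin--Peres--Wilmer --- and your argument (symmetrize via $D_\pi^{1/2} P D_\pi^{-1/2}$, split off the top eigenvector to produce the $\pi(v)=d(v)/(2m)\le d(v)/m$ term, then bound the tail by Cauchy--Schwarz using that laziness puts the whole spectrum in $[0,1]$) is precisely the standard derivation in that reference, so the two approaches coincide.
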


	We can now prove result which controls $t_{hit}(\pi,S)$ by bounding short-term return probabilities.

	\begin{lem}\label{setestimate}
		Let $G$ be any regular-graph and $S$ be any subset of vertices. Then, for any $v$
		\[
		t_{hit}(v,S) \leq \frac{10}{1-e^{-1}} \cdot \frac{n(1+\lceil \log |S| \rceil)}{(1-\lambda_2) |S|}.
		\]
		Furthermore, suppose that there exists a constants $C>0$ and $\epsilon>0$ such that $p_{u,w}^t \leq \pi(w) + Ct^{-(1+\epsilon)}$ for any pair of vertices $u,w$. Then for any $v$
		\[
		t_{hit}(v,S) \leq \frac{5}{(1-e^{-1})} \cdot \frac{(5C+1)n}{|S|^{\varepsilon/(1+\varepsilon)}}.
		\]
		Both results above extend to almost-regular graphs at expense of a multiplicative $\BO{1}$ factor.
	\end{lem}

	\begin{proof}
		We begin by deriving the first bound. Let $(X_t)_{t \geq 0}$ be a random walk starting from vertex $v$ and let $\tau_S$ be the first time $X_t$ hits the set $S$. We divide time into phases $I_i$ of length $5\tau$ where $\tau = t_{\mix}(1/e)$, i.e. $I_i = \{5(i-1)\tau, \ldots, 5i\tau-1\}$. We count the number of phases needed to reach the set $S$. Suppose that in phases $1,\ldots, i-1$ the walk did not hit $S$. During a phase $I_i$, we let the walk move for $4\tau$ times ignoring if it visits or not the set $S$, and we observe if the walk visited $S$ in the last $\tau$ time-steps of phase $I_i$. Then, independent of everything that happens before time-step $5i\tau$, with probability at least $(1-e^{-1})$ we can couple $X_{4\tau+5(i-1)\tau}$ with the stationary distribution (e.g. Lemma A.5 in \cite{kanade2016coalescence}), hence
		\begin{eqnarray*}
			\mathbf{Pr}_{v}[\tau_S \leq 5i\tau|\tau_S \geq 5(i-1)\tau] \geq (1-e^{-1})\mathbf{Pr}_{\pi}[\tau_S\leq \tau].
		\end{eqnarray*}
		We compute the later probability we define the random variable $Z = \sum_{i=0}^{\tau-1}\mathbf{1}_{\{X_t \in S\}}$ which counts the number of visits of visits to $S$. Then $\mathbf{Pr}_{\pi}[\tau_S \leq \tau] = \mathbf{Pr}_{\pi}[ Z \geq 1]$ and we use the trivial fact that $\mathbf{Pr}_{\pi}[ Z \geq 1] = \mathbf{E}_{\pi}[Z]/\mathbf{E}_{\pi}[Z|Z\geq1]$.
		Clearly $\mathbf{E}_{\pi}[Z] = \tau \pi(S) = \tau |S|/n$. Furthermore
		\begin{align}
		\mathbf{E}_{\pi}[{Z \, \mid \, Z \geq 1}] &\leq \max_{u \in S}  \sum_{t=0}^{\tau} \sum_{v \in S} p_{u,v}^t \leq \sum_{t=0}^{\tau} \min \left\{1,\sum_{v \in S} \left( \frac{1}{n} + \lambda_2^t \right)  \right\}\label{eqn:Zgiven2}\intertext{
			The second inequality holds because $p_{u,v}^t \leq \frac{1}{n}+\lambda_2^t$ for any $u,v$ in a regular graph (Lemma~\ref{lemma:peres1211}). By separating the sum from $t=0$ to $\tau$ at $t=\lceil\log_{\lambda_2}(1/S)\rceil$ and applying $-\log \lambda_2 \leq 1-\lambda_2$ we have } 
		\mathbf{E}_{\pi}[{Z \, \mid \, Z \geq 1}]	&\leq \frac{\lceil\log|S|\rceil}{1-\lambda_2} \cdot 1 +\left( \tau \cdot |S| \cdot \frac{1}{n}  + |S| \cdot \sum_{t=\lceil\log_{\lambda_2}(1/S)\rceil}^{\tau} (\lambda_2)^{t}\right). \nonumber \intertext{Finally bounding the sum by a geometric series,}
		\mathbf{E}_{\pi}[{Z \, \mid \, Z \geq 1}]&=  \frac{\lceil\log|S| \rceil}{1-\lambda_2}  +\tau \cdot |S| \cdot \frac{1}{n}  + \frac{1}{1-\lambda_2}\leq 2\left(\frac{1+\lceil\log |S|\rceil}{1-\lambda_2}\right),\nonumber
		\end{align}since $\tau = t_{mix}(e^{-1}) \leq \frac{1+\log n}{1-\lambda_2}$ by \cite[(12.9)]{levin2009markov}, and $|S|\log n \leq n\log |S|$ for all $|S|\geq 2$. Therefore,
		\begin{align*}
		\mathbf{Pr}_{\pi}[ Z \geq 1]&= \frac{ \mathbf{E}_{\pi}[Z] }{\mathbf{E}_{\pi}[Z \, \mid \, Z \geq 1]} \geq  \frac{\tau \cdot |S| (1-\lambda_2)}{2n(1+\lceil\log |S|\rceil)}.
		\end{align*}
		Denote by $q = (\tau \cdot |S| (1-\lambda_2))(2n(1+\lceil\log |S|\rceil))$. We conclude that 
		$$\mathbf{Pr}_{v}[\tau_S \leq 5i\tau|\tau_S \geq 5(i-1)\tau] \geq (1-e^{-1})q.$$
		From the above, in expectation the walk requires at most $1/(1-e^{-1})q$ phases of length $5\tau = 5t_{mix}$ to finish. Proving the first part of the Lemma.

		The second bound follows the same argument, but replacing $1/n+\lambda_2^t$ by $1/n+Ct^{-(1+\varepsilon)}$, until equation~\eqref{eqn:Zgiven2}. From there split the sum from $1$ to $\tau$ at  $\lfloor|S|^{1/(1+\varepsilon)}\rfloor$, obtaining 
		\begin{align*}
		\mathbf{E}_{\pi}[{Z \, \mid \, Z \geq 1}] 
		&\leq |S|^{1/(1+\epsilon)} \cdot 1 + \tau \cdot |S| \cdot \frac{1}{n} + C|S| \cdot \sum_{t=\lfloor|S|^{1/(1+\varepsilon)}\rfloor}^{\tau} t^{-(1+\epsilon)} \intertext{Now since the sum is less than $\int_{|S|^{1/(1+\varepsilon)}}^{\infty}t^{-(1+\varepsilon)}dt \leq |S|^{\varepsilon/(1+\varepsilon)} $ we have }
		\mathbf{E}_{\pi}[{Z \, \mid \, Z \geq 1}] &\leq |S|^{1/(1+\epsilon)} \cdot 1 + \tau \cdot |S| \cdot \frac{1}{n} +C|S|\cdot |S|^{-\epsilon/(1+\epsilon)}\leq  (5C+1)|S|^{1/(1+\varepsilon)},
		\end{align*}as assumption on the $\tilde{p}_{u,v}^t $ implies $\tau \leq  (4Cn)^{1/(1+\varepsilon)}$ and also $1-\epsilon/(1+\epsilon)=1/(1+\epsilon)$. Hence
		\begin{align*}
		\mathbf{Pr}_{\pi}[{Z \geq 1}]  \geq \frac{\tau |S|/n}{(5C+1)|S|^{1/(1+\varepsilon)}} =   |S|^{\epsilon/(1+\epsilon)} \cdot \frac{\tau}{(5C+1)n}.
		\end{align*}
		The rest of the argument uses the same argument used in the first part of this proof. Recall that a graph is almost-regular if  $\Delta\leq C\cdot \delta  $ holds between the min and max degrees $\delta$ and $\Delta$. In this case since each equation is linear in the constant $C$ we only loose a constant factor.   
	\end{proof}

\end{document}